\documentclass[a4paper,UKenglish,cleveref, autoref, thm-restate]{lipics-v2021}
\usepackage{graphicx} 
\usepackage{lineno}
\usepackage{url}
\usepackage{float}
\usepackage{amsfonts}
\usepackage{amsmath}
\usepackage{bm}
\usepackage{amssymb}
\usepackage[pdftex,dvipsnames]{xcolor}
\usepackage{tikzlings}
\usepackage{orcidlink}
\usepackage{tikzducks}
\usepackage{tikzsymbols}
\usepackage{tikz}
\usepackage{mathtools}
\usepackage{thm-restate}
\usepackage{hyperref}
\usepackage[capitalise]{cleveref}
\usepackage{xspace}
\usepackage[new]{old-arrows}
\usepackage{float}
\usepackage{wrapfig}
\usepackage[bold,full]{complexity}
\newcommand{\Func}[1]{{\mathsf{#1}}}
\newcommand{\delay}{\Func{Delay}}
\newcommand{\opt}{\text{opt}}

\newcommand{\eve}{\pmb{\exists}}
\newcommand{\adam}{\pmb{\forall}}
\newcommand{\Gc}{\mathcal{G}} 
\newcommand{\Lc}{\mathcal{L}}
\newcommand{\Ac}{\mathcal{A}}
\newcommand{\Fc}{\mathcal{F}}
\newcommand{\Hc}{\mathcal{H}}

\newcommand{\Bc}{\mathcal{B}}
\newcommand{\Dc}{\mathcal{D}}
\newcommand{\Mc}{\mathcal{M}}

\newcommand{\Sc}{\mathcal{S}}

\newcommand{\Cc}{\mathcal{C}}
\newcommand{\Oc}{\mathcal{O}}
\newcommand{\Pc}{\mathcal{P}}
\newcommand{\Uc}{\mathcal{U}}

\newcommand{\rank}{\text{rank}\xspace}
\newcommand{\stepsim}{\text{step-ahead simulates}\xspace}
\newcommand{\stepsimnoun}{\text{step-ahead simulation}\xspace}
\newcommand{\stepsimgame}{\text{step-ahead simulation game}\xspace}
\newcommand{\racesimgame}{\text{sprint simulation game}\xspace}
\newcommand{\racesim}{\text{sprint simulates}\xspace}
\newcommand{\racesimnoun}{\text{sprint simulation}\xspace}

\newcommand{\raceaheadselfsim}{\text{sprint self-simulation}\xspace}
\newcommand{\lex}{\text{sprint deterministic}\xspace}
\newcommand{\lexnoun}{\text{sprint determinism}\xspace}
\newclass{\TOWER}{TOWER}
\newclass{\ACKERMANN}{ACKERMANN}
\newclass{\EXPTIME}{EXPTIME}
\newclass{\IIEXPTIME}{2-EXPTIME}

\usepackage{graphicx}
\usepackage{array}
\usepackage{tikz}
\usetikzlibrary{automata, arrows, positioning, decorations.markings, decorations.pathreplacing}
\usetikzlibrary {decorations.pathmorphing, decorations.shapes}
\usetikzlibrary{backgrounds,fit,shapes.geometric}
\usetikzlibrary {graphs, quotes}
\usetikzlibrary{calc}
\usetikzlibrary{shapes,shapes.geometric,fit}
\usepackage[new]{old-arrows}
\usetikzlibrary {graphs,shapes.geometric,quotes}
\usetikzlibrary{decorations.pathreplacing}
\usetikzlibrary{automata}
\usetikzlibrary{calc}
\usetikzlibrary{arrows,automata,decorations.pathmorphing}
\usetikzlibrary{shapes,shapes.geometric,fit,calc,automata}
\usepackage{xargs} 
\usepackage{todonotes}
\newcommandx{\ra}[2][1=]{\todo[linecolor=Dandelion,backgroundcolor=Dandelion!25,bordercolor=Dandelion,#1]{\tiny #2-Rohan}}
\newcommandx{\mj}[2][1=]{\todo[linecolor=blue,backgroundcolor=blue!25,bordercolor=blue,#1]{\tiny M: #2}}
\newcommandx{\ap}[2][1=]{\todo[linecolor=red,backgroundcolor=red!25,bordercolor=red,#1]{\tiny #2-Aditya}}

\linenumbers
\bibliographystyle{plainurl}
\title{Lookahead Games and Efficient Determinisation \\ of History-Deterministic B\"uchi Automata}
\titlerunning{Lookahead Games and Efficient Determinisation of HD B\"uchi Automata}
\author{Rohan Acharya}{University of Warwick, Coventry, UK}{rohan.acharya@warwick.ac.uk}{}{Undergraduate Research Support Scheme from the Department of Computer Science at the University of Warwick.
}
\author{Marcin Jurdzi\'nski}{University of Warwick, Coventry, UK \and \url{https://www.dcs.warwick.ac.uk/~mju/}}{marcin.jurdzinski@warwick.ac.uk}{https://orcid.org/0000-0003-3640-8481}{}
\author{Aditya Prakash}{University of Warwick, Coventry, UK \and \url{https://apitya.github.io/}}{aditya.prakash@warwick.ac.uk}{https://orcid.org/0000-0002-2404-0707}{Chancellors' International Scholarship from the University of Warwick.}
\authorrunning{R. Acharya, M. Jurdzi\'nski and A. Prakash} 

\Copyright{R. Acharya, M. Jurdzi\'nski and A. Prakash} 

\ccsdesc[500]{Theory of computation~Automata over infinite objects} 

\keywords{History determinism, Good-for-games, Automata over infinite words, Games} 

\category{Track B: Automata, Logic, Semantics, and Theory of Programming} 

\relatedversion{} 


\funding{ We acknowledge the Centre for Discrete Mathematics and Its Applications (DIMAP) at the University of Warwick for partial support.}

\acknowledgements{We thank Denis Kuperberg for several insightful exchanges. 
    We are grateful to the reviewers for their feedback and suggestions on how to improve the paper.}

\nolinenumbers 

\EventEditors{2}
\EventNoEds{2}
\EventLongTitle{Conference of Very Important Things}
\EventShortTitle{CVIT 2020}
\EventAcronym{CVIT}
\EventYear{2016}
\EventDate{December 24--27, 2016}
\EventLocation{Little Whinging, United Kingdom}
\EventLogo{}
\SeriesVolume{42}
\ArticleNo{23}

\begin{document}

\maketitle
\begin{abstract}
    Our main technical contribution is a polynomial-time determinisation procedure for history-deter\-mi\-nis\-tic B\"uchi automata, which settles an open question of Kuperberg and Skrzypczak, 2015. 
    A~key conceptual contribution is the lookahead game, which is a variant of Bagnol and Kuperberg's token game, in which Adam is given a fixed lookahead.
    We prove that the lookahead game is equivalent to the $1$-token game.
    This allows us to show that the $1$-token game characterises history-determinism for semantically-deterministic B\"uchi automata, which paves the way to our polynomial-time determinisation procedure.
\end{abstract}
\section{Introduction}
\noindent 
History-deterministic (HD) automata are non-deterministic automata in which the non-determinism can be resolved ``on the fly'', based only on the prefix of the word read so far~\cite{BL23,Kup22}.
This concept can be formalised using the history-determinism game (HD game), in which two players Adam and Eve make alternating moves choosing letters and transitions, thus constructing a word and a run of the automaton on it, respectively.
Eve wins if the run is accepting or if the word is not in the language, and hence Eve's winning strategy will successfully resolve non-determinism by constructing an accepting run on the fly, for all words in the language.
An automaton is then defined to be history-deterministic if Eve has a winning strategy in the game.

Henzinger and Piterman~\cite{HP06} introduced HD automata because of their potential to speed up key algorithmic tasks in verification and synthesis, such as language containment and strategy synthesis.
In language containment, we ask whether all executions of an implementation~$\Ac$ satisfy a specification~$\Hc$.
If $\Hc$ is non-deterministic then the problem is $\PSPACE$-hard, but if $\Hc$ is HD then it is more tractable, because it amounts to checking that $\Hc$ simulates~$\Ac$,
 which can be done in polynomial time if the parity index of~$\Hc$ is fixed \cite[Theorem 3]{CHP07} and in quasi-polynomial time otherwise~\cite[Theorem 20]{Pra24a}.
Henzinger and Piterman introduced HD automata as good-for-games automata because games whose winning conditions are represented by an HD automaton can be solved efficiently without automaton determinisation~\cite[Theorem 3.1]{HP06}, a well-known compuational bottleneck in synthesis.

\subsection{Related work}

Key questions studied for HD parity automata include recognising them, their succinctness relative to deterministic automata, minimisation, and determinisation.


\subparagraph{Recognising History-Deterministic Automata via Games.}
Kuperberg and Skrzypczak~\cite{KS15} gave a polynomial time algorithm to recognise HD co-B\"uchi automata, and Bagnol and Kuperberg~\cite{BK18} gave a polynomial time algorithm to recognise HD B\"uchi automata. 
These algorithms 
have been conceptually unified by Boker, Kuperberg, Lehtinen, and Skrzypczak~\cite{BKLS20b} to be based on the $2$-token game introduced by Bagnol and Kuperberg~\cite{BK18}, leading to the $2$-token conjecture.


\begin{conjecture}[The $2$-token conjecture \cite{BK18,BKLS20b}]
    A parity automaton is HD if and only if Eve wins the 2-token game on it. 
\end{conjecture}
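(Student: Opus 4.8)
The plan is to establish the two implications separately. The ``only if'' direction is routine: if $\Ac$ is HD then Eve wins the $2$-token game by using her history-deterministic strategy to move \emph{her own} token while disregarding Adam's two tokens --- if either of Adam's runs is accepting then the word produced in the play lies in $L(\Ac)$, so Eve's strategy builds an accepting run on it. The entire content is the converse: from a winning strategy for Eve in the $2$-token game on $\Ac$, manufacture a winning strategy in the history-determinism game. I would attack this by induction on the number of priorities of $\Ac$, taking the co-B\"uchi case~\cite{KS15} and the B\"uchi case~\cite{BK18} as the base cases, and using as the new structural ingredient the present paper's theorem that the \emph{$1$-token} game already characterises history-determinism for semantically-deterministic B\"uchi automata.

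For the inductive step, given $\Ac$ with priorities in a range $[i,j]$ with $j-i\ge 2$, I would peel off the extremal priority $j$: runs that eventually avoid priority $j$ live in a sub-automaton $\Ac'$ whose priority range is strictly smaller, and to which the induction hypothesis applies. Two sub-steps seem necessary. First, a \emph{normalisation} showing that one may assume $\Ac$ to be (close to) semantically deterministic --- for instance by quotienting states with equal residual languages, or by a Safra-flavoured subset construction that provably preserves both history-determinism and the winner of the $2$-token game --- so that the $1$-token characterisation becomes available at the top level. Second, a \emph{decomposition} of Eve's $2$-token strategy on $\Ac$ into a strategy on $\Ac'$, played whenever no token currently witnesses priority $j$, together with a reset routine invoked whenever Adam forces priority $j$; positional determinacy of the underlying parity game would be used to keep the resulting memory finite.

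The step I expect to be the real obstacle is the stitching across priority levels. History-determinism is not visibly compositional along the parity hierarchy: Eve must abandon and restart her $\Ac'$-strategy every time priority $j$ appears, yet she is committing to a single run on the fly, so each restart has to be carried out without seeing the future and consistently with the prefix already read. Making this work appears to need (i) a uniform finite-memory winning strategy for Eve in the $2$-token game whose memory is insensitive to these resets, and (ii) a way to carry the two adversarial tokens with bounded information; here I would try to leverage the lookahead game of this paper, hoping that its equivalence with the $1$-token game lifts to an equivalence between a $k$-lookahead game and the $k$-token game for parity automata, thereby converting token bookkeeping into a fixed finite lookahead that the reset routine can safely discard. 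Secondary difficulties are verifying that the normalisation preserves the $2$-token winner in \emph{both} directions (a subset construction can introduce fresh non-determinism) and treating the degenerate cases where $\Ac'$ is not reachable or not co-reachable.
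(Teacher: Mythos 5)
You are attempting to prove a statement that the paper records as an open \emph{conjecture}: the paper does not prove it, and to date nobody has. Its contributions toward the conjecture are partial --- the B\"uchi case (via the $1$-token game characterisation for semantically-deterministic B\"uchi automata, yielding \cref{cor:2token-game}) and a reduction showing that it suffices to prove the conjecture for universal parity automata (\cref{thm:universal-g2}). So there is no proof in the paper to compare yours against, and your text is, by your own account, a research plan with unresolved obstacles rather than a proof. Your ``only if'' direction is fine and is exactly \cref{lemma:hd-implies-other-games}.

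For the converse, there is a concrete obstruction to your plan that the paper itself exhibits. You propose to normalise to a semantically-deterministic automaton and then invoke the $1$-token characterisation ``at the top level'' before peeling off the extremal priority; but \cref{thm:joker-games-counterexample} and the automaton of \cref{fig:counterexample} give a semantically-deterministic $[1,3]$-parity automaton that is not HD yet on which Eve wins the Joker game, and the paper states explicitly that \cref{mainthm:sdbuchiautomata} does not extend to parity automata. The tool you rely on is therefore unavailable precisely in the regime $j-i\ge 2$ where your induction would need it. The ``stitching across priority levels'' that you flag as the expected obstacle is not a technical nuisance to be engineered around --- it \emph{is} the open problem; the co-B\"uchi case of \cite{BKLS20b} already required an argument far more involved than an induction on priorities, and no decomposition of the kind you describe is known to preserve history-determinism. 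Finally, the equivalence you hope to ``lift'' --- between a $k$-lookahead variant of the $2$-token game and the $2$-token game --- is not proved in the paper (\cref{mainthm:lookahead} concerns only the game with a single delayed Adam token) and would itself be a new result requiring proof. The honest statements within reach of the paper's techniques are \cref{cor:2token-game} (the B\"uchi case) and \cref{thm:universal-g2} (the reduction to universal automata); the conjecture itself remains open.
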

Proving the $2$-token conjecture would imply that recognising HD parity automata of fixed parity index can be done in polynomial time.
In contrast, the best upper bound currently known for the problem is $\EXPTIME$, dating back to Henzinger and Piterman~\cite{HP06}. In the general case, when the parity index is not fixed, a lower bound of $\NP$-hardness
has been achieved only very recently~\cite{Pra24a}.


Bagnol and Kuperberg~\cite{BK18} introduced the $k$-token game as a tool to characterise the conceptually more complex HD game.
Like in the HD game, in the $k$-token game for $k \geq 1$, two players Adam and Even make alternating moves choosing letters and transitions, but in addition, in every round, after Eve chooses a transition, Adam also chooses $k$ transitions.
As a result, Adam constructs a word and $k$ runs, Eve constructs a run, and Eve wins if her run is accepting or all of Adam's $k$ runs are rejecting.
A key insight in Bagnol and Kuperberg's work~\cite{BK18} is that the $2$-token game is equivalent to the $k$-token game for all $k \geq 2$.

\begin{theorem}[\cite{BK18}]\label{thm:BK18-2tg-equals-ktg}
    Eve wins the $2$-token game on a parity automaton if and only if for all $k \geq 2$, Eve wins the $k$-token game on it.
\end{theorem}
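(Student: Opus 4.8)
The plan is to prove the nontrivial direction — if Eve wins the $2$-token game on a parity automaton then she wins the $k$-token game on it for every $k \geq 2$ — by induction on $k$; the base case $k=2$ is the hypothesis, and the reverse implication of the theorem is immediate by taking $k=2$. The inductive step is a \emph{composition lemma}: if Eve wins the $a$-token game and the $b$-token game on an automaton, then she wins the $(a+b-1)$-token game on it. Applying this with $a=k$ and $b=2$ advances the induction, so the whole theorem reduces to the composition lemma.

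To prove the composition lemma, fix a play of the $(a+b-1)$-token game and single out $b$ of Adam's tokens as a \emph{block}. Eve runs, as an internal simulation, a play of the $b$-token game in which the block's tokens are Adam's tokens, the letters are the real letters, and she plays her winning strategy $\sigma_b$; call the resulting Eve-run $\mu$. She then runs a second internal simulation, of the $a$-token game, whose $a$ tokens are $\mu$ together with Adam's remaining $a-1$ tokens, playing her winning strategy $\sigma_a$; the Eve-run $\rho$ that $\sigma_a$ produces is what she plays for real. Correctness: if some genuine token has an accepting run, then either that token is one of the $a-1$ tokens fed directly to $\sigma_a$, or it lies in the block, in which case the block contains an accepting run, so the winning guarantee of $\sigma_b$ ($\mu$ accepts, or every block run rejects) forces $\mu$ to be accepting; in both cases one of $\sigma_a$'s tokens is accepting, whence the winning guarantee of $\sigma_a$ forces $\rho$ to be accepting. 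Thus Eve's run beats all of Adam's and she wins. The routine points to verify are that each simulation is a legitimate play of a token game (the block tokens, and hence $\mu$, are runs over the genuine input word, and Adam's token moves in a token game may react to Eve's earlier moves, which is exactly what the dependence of $\mu$ on $\rho$'s prefix needs), and that Eve's composite prescription is a well-defined strategy.

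The one genuinely delicate point, which I expect to be the main obstacle, is the \emph{order of moves within a round}: in the $(a+b-1)$-token game Eve must commit to her round-$r$ transition before Adam reveals his round-$r$ token transitions, yet computing that transition through $\sigma_a$ requires the round-$r$ transition of $\mu$, which through $\sigma_b$ requires precisely Adam's round-$r$ block transitions — so the recipe above is circular. To break the circularity I would first carry out the entire argument in the \emph{delayed} token game, the variant in which each round Adam reveals his token transitions before Eve moves; there the two simulations compose with no circularity, so the induction shows that Eve wins the delayed $k$-token game for every $k \geq 2$. It then remains to show that, on any parity automaton, the delayed and the ordinary $k$-token games have the same winner: one direction is free (extra information never hurts Eve), and the other — that the delay can be eliminated — is the crux, to be handled by a strategy-transfer argument (these are finite parity games, so one may work with finite-memory strategies) exploiting prefix-independence of the acceptance condition. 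This ``lookahead is harmless'' phenomenon is exactly of the kind that the lookahead game announced in the introduction is designed to capture.
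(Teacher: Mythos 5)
The paper does not prove this theorem itself---it imports it from Bagnol and Kuperberg \cite{BK18}---so your proposal should be measured against their argument, and your core plan (a composition lemma ``winning the $a$-token and $b$-token games implies winning the $(a+b-1)$-token game'', applied inductively with $b=2$) is exactly the right one, including the correctness analysis by cases on where the accepting Adam token lives.

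The problem is the ``genuinely delicate point'' you identify and the way you resolve it. The circularity you describe is not actually there. In the $k$-token game as defined, Eve's strategy at round $r$ is a function of a play prefix that contains the letters $a_0,\dots,a_r$ and Adam's token transitions for rounds $0,\dots,r-1$ only---she never needs Adam's round-$r$ transitions to produce her own round-$r$ move. Consequently $\mu$'s transition at round $j$ depends only on the block's transitions at rounds $\le j-1$, and $\rho$'s transition at round $r$ depends only on $\mu$'s transitions at rounds $\le r-1$, hence on the block's transitions at rounds $\le r-2$ and the remaining tokens' transitions at rounds $\le r-1$. All of this is revealed before Eve must commit to $\rho$'s round-$r$ transition in the real game, so the two simulations compose directly with no delay. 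Your repair---carrying out the argument in a delayed token game and then asserting that the delayed and ordinary $k$-token games have the same winner---replaces a non-problem with a genuine gap: that equivalence is a nontrivial claim (of the same flavour as the lookahead results in this paper, which require real proofs), and you explicitly leave it ``to be handled'' by an unspecified strategy-transfer argument. As written, the proof is therefore incomplete; deleting the detour and running the composition in the ordinary game, with the dependency bookkeeping above spelled out, closes it.
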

Bagnol and Kuperberg's proof that the $2$-token game characterises history-determinism for B\"uchi automata exploits this insight, by showing that if Adam wins the HD game on a B\"uchi automaton then he can win the $k$-token game for some $k$ that is doubly-exponential in the size of the automaton, and hence also the $2$-token game. 


Boker et al.~\cite{BKLS20b} have used an analogous, but more involved, argument to show that the $2$-token game also characterises history-determinism for co-B\"uchi automata, combining \cref{thm:BK18-2tg-equals-ktg} with the algorithm of Kuperberg and Skrzypczak~\cite{KS15} to recognise HD co-B\"uchi automata efficiently, which was based on the so-called Joker game.
The Joker game is similar to the $1$-token game but, additionally, Adam has the power to (finitely many times) ``play Joker'' by choosing a transition from Eve's token instead of a transition from his token, and Eve wins if her run is accepting, or Adam's run is rejecting, or Adam has played Joker infinitely many times. 


Kuperberg and Skrzypczak's algorithm uses Joker games in their polynomial time algorithm to recognise HD co-B\"uchi automata, but to date, it was not known if Joker games characterise history-determinism on co-B\"uchi or B\"uchi automata, or on parity automata in general.

\subparagraph{Succinctness and minimisation of HD automata.}

Kuperberg and Skrzypczak~\cite{KS15} proved that HD co-B\"uchi automata are exponentially more succinct than deterministic co-B\"uchi automata~\cite{KS15}, which is known to be tight \cite[Theorem 4.1]{Kup22}. 
Abu Radi and Kupferman~\cite{RK22} showed that transition-based HD co-B\"uchi automata can be minimised in polynomial time and that they have canonicity. 
In contrast,  minimisation of state-based HD B\"uchi or HD co-B\"uchi automata is known to be $\NP$-complete~\cite[Theorem 1]{Sch20}.
Minimisation of transition-based B\"uchi automata is easily seen to be in $\NP$, but the exact complexity is open.

\subparagraph{Determinisation of HD B\"uchi automata.}
Kuperberg and Skrzypczak~\cite{KS15} also proved that every HD B\"uchi automaton with $n$ states has an equivalent deterministic B\"uchi automaton with at most $n^2$ states. 
However, it is not known if HD B\"uchi automata are strictly more succinct than deterministic B\"uchi automata. 

The determinisation procedure of Kuperberg and Skrzypczak for an HD B\"uchi automaton $\Hc$ involves carefully analysing the simulation game between $\Hc$ and an equivalent deterministic B\"uchi automaton of exponential size.  
At a high level, the procedure iteratively modifies the simulation game and the automaton~$\Hc$, eventually yielding an equivalent game of quadratic size, from which a deterministic B\"uchi automaton of quadratic size can be extracted, but the procedure itself runs in exponential time. 

Kuperberg and Skrzypczak also gave a non-deterministic polynomial-time procedure for determinisation of HD B\"uchi automata, which guesses a deterministic B\"uchi automaton of quadratic size and then checks for language equivalence~\cite[Theorem 10]{KS15}. 
They left the exact complexity of determinisation for HD B\"uchi automata open, in particular, the question of whether HD B\"uchi automata can be determinised in polynomial time.



\subsection{Our Contributions}
\begin{itemize}
    \item 
        We introduce the $k$-lookahead game, a variant of the $1$-token game, in which Adam's transition on his token is delayed by $k$ steps, thus giving him a lookahead of $k$. 
        We prove that the $1$-token game is equivalent to the $k$-lookahead game. 
        \begin{restatable}{mainthm}{thmlookahead}\label{mainthm:lookahead}\label{theorem:lookahead-games}
            For every parity automaton $\Ac$, Eve wins the $1$-lookahead game on $\Ac$ if and only if she wins the $k$-lookahead game on $\Ac$. 
        \end{restatable}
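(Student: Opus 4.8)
The case $k=1$ is trivial, so assume $k\geq 2$. One implication is pure monotonicity. The $1$-lookahead and the $k$-lookahead games have the same plays --- a play consists of a word together with a run of Eve and a run of Adam on it --- and the same winning condition for Eve, that her run be accepting or Adam's run be rejecting; they differ only in the information available when a move is made. Because Adam's token is delayed by one step rather than by $k$, Eve sees a longer prefix of Adam's run before committing each of her transitions in the $1$-lookahead game, and Adam gains no new options. Hence any winning strategy for Eve in the $k$-lookahead game can be used verbatim in the $1$-lookahead game, by having Eve pretend that Adam's token is delayed by $k$; the resulting play has the same outcome and is won by Eve.

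The other implication is the substantive one, and the plan is to compose Eve's $1$-lookahead strategy with itself $k$ times. Fix a winning strategy $\sigma$ for Eve in the $1$-lookahead game and reformulate it as a transformation on runs: for a word $w$, if Adam announces $w$ letter by letter and an arbitrary run $\rho$ of $\Ac$ on $w$ transition by transition while Eve replies using $\sigma$, the resulting Eve-run is a run of $\Ac$ on $w$, which we denote $\Phi_w(\rho)$. Reading off the rules of the $1$-lookahead game, $\Phi_w$ is (i) causal with one step of delay --- the $i$-th transition of $\Phi_w(\rho)$ is determined by $w$ and a strictly shorter prefix of $\rho$ --- and (ii) parity-preserving --- if $\rho$ is accepting then so is $\Phi_w(\rho)$, which is exactly what it means for $\sigma$ to win every play. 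Now iterate: $\Phi_w^{k}=\Phi_w\circ\cdots\circ\Phi_w$ is causal with $k$ steps of delay, since delay is additive under composition, and it is still parity-preserving, since implication is transitive. Eve's strategy in the $k$-lookahead game is to answer Adam's emerging word $w$ and run $\rho$ with the run $\Phi_w^{k}(\rho)$: causality with $k$ steps of delay guarantees that each of its transitions depends only on $w$ up to the current position and on a prefix of $\rho$ that, thanks to the $k$-step delay on Adam's token, Eve has already seen, so the strategy is legal; and parity-preservation guarantees that if Adam's run is accepting then so is Eve's, while if Adam's run is rejecting Eve wins outright. In every case Eve wins, so she wins the $k$-lookahead game.

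The conceptual step --- iterating the self-simulation --- is short; the work, and the main obstacle, lies in the timing bookkeeping. One must pin down exactly how a winning $1$-lookahead strategy induces a run transformation and how its delay behaves under composition, so as to be certain that $\Phi_w^{k}$ is delayed by at most what the $k$-lookahead game affords Eve, and hence yields a legal strategy there; and, conversely, that a suitably delayed parity-preserving transformation translates back to a strategy respecting the move order and information sets of the $k$-lookahead game. One also checks the routine points that $\Phi_w^{k}(\rho)$ is always a genuine run of $\Ac$ on $w$, since the runs of $\Ac$ on $w$ are closed under $\Phi_w$, and that parity-preservation composes. None of this is deep, but lining up the indices so that ``$k$-fold composition of delay one'' exactly matches the $k$-lookahead game is the crux of the argument.
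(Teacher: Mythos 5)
Your proposal is correct and is essentially the paper's argument in different clothing: the paper formalises your $k$-fold composition of the delay-$1$ run transformation $\Phi_w$ via the $\delay$ operator, showing that Eve's win in the $1$-token game on $\Ac$ lifts to $\delay^i(\Ac)$ for every $i$, so that $\Ac$ simulates $\delay(\Ac)$ simulates $\delay^2(\Ac)$, and so on, and then invoking transitivity of simulation --- which is precisely your ``delay is additive and parity-preservation composes''. The easy monotonicity direction you spell out is left implicit in the paper.
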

        The $1$-token game is syntactically equivalent to the $1$-lookahead game. 
        \cref{mainthm:lookahead} thus demonstrates that the $1$-token game is already quite powerful, and it is analogous to \cref{thm:BK18-2tg-equals-ktg} of Bagnol and Kuperberg. 

    \item 
        With \cref{mainthm:lookahead} as a key tool, we show that the $1$-token game characterises history-determinism on semantically-deterministic B\"uchi automata. 
        These are automata in which, for every state, all transitions labelled by the same letter lead to language-equivalent states~\cite{AK23}. 
        \begin{restatable}{mainthm}{thmSDbuchiautomata}\label{mainthm:sdbuchiautomata}\label{thm:main}
            A semantically-deterministic B\"uchi automaton is history-determinsitic if and only if Eve wins the $1$-token game on it. 
        \end{restatable}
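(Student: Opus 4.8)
The plan is to prove the easy implication by a direct strategy transfer, and the hard implication by combining \cref{mainthm:lookahead} with a delay-game analysis of semantically-deterministic B\"uchi automata.

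For the ``only if'' direction, suppose the automaton $\Ac$ is history-deterministic, and have Eve play her winning strategy in the history-determinism game while ignoring Adam's token in the $1$-token game. If the resulting play produces a word $w$ on which Eve's run is rejecting, then her history-determinism strategy guarantees $w \notin L(\Ac)$, so no run of $\Ac$ on $w$ is accepting; in particular Adam's token run is rejecting and Eve wins. This direction uses neither the B\"uchi nor the semantic-determinism hypothesis.

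For the ``if'' direction, suppose Eve wins the $1$-token game on $\Ac$. Since the $1$-token game coincides syntactically with the $1$-lookahead game, \cref{mainthm:lookahead} hands Eve a winning strategy $\sigma_k$ in the $k$-lookahead game for every $k$. The heart of the argument is a lemma asserting that, in a semantically-deterministic B\"uchi automaton, bounded lookahead suffices for the token-builder to realise the language: there is a $k$ and a strategy $\beta$ that reads the input with lookahead $k$ and produces an accepting run of $\Ac$ on every $w \in L(\Ac)$. Semantic determinism is what makes this plausible: every transition preserves the residual language, so the builder never needs unbounded lookahead to avoid ``going wrong'' on residuals, and it only has to arrange that its run visits accepting transitions infinitely often; the B\"uchi condition is what lets one turn ``eventually an accepting transition becomes reachable'' into a global guarantee. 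I expect the proof of this lemma to first show that \emph{some} (a priori unbounded) lookahead suffices --- by analysing the reachable language-equivalence classes and a pumping argument over the ``safe'' (non-accepting) part of $\Ac$ --- and then to invoke the theory of delay games for $\omega$-regular conditions to bring the lookahead down to a constant; an alternative is to exhibit a constant-lookahead hedging strategy directly. This lemma is the main obstacle, and it is precisely the place where the argument must break for general parity automata, which is consistent with the $1$-token game not characterising history-determinism beyond semantically-deterministic B\"uchi automata.

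Granting the lemma, I would finish by a simulation argument. In the history-determinism game on $\Ac$, Eve maintains two tokens: her real token, played online according to $\sigma_k$, and a phantom token for Adam, built with $k$ steps of delay according to $\beta$. Because $\beta$ uses lookahead only $k$, at the moment Adam reveals the next letter Eve has already seen enough of the input to commit the phantom token's move that is ``due'' $k$ rounds later --- which is exactly when Adam's token move is revealed in the $k$-lookahead game --- so the simulation is consistent and Eve can keep running $\sigma_k$ against it. If the constructed word $w$ lies in $L(\Ac)$, the phantom token's run is accepting by the lemma, so the winning condition of $\sigma_k$ (Eve's run is accepting, or Adam's token run is rejecting) forces Eve's real run to be accepting; and if $w \notin L(\Ac)$, Eve wins the history-determinism game outright. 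Hence Eve wins the history-determinism game, i.e.\ $\Ac$ is history-deterministic. Two routine points remain to be checked: that the move orders of the $1$-token, $k$-lookahead and history-determinism games line up so that the strategies compose with exactly the stated delays, and that the off-by-one bookkeeping for ``$k$ steps of delay'' versus ``lookahead $k$'' is handled by taking the constant slightly larger.
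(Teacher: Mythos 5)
Your overall architecture matches the paper's: the easy direction is the same strategy transfer, and your final composition of $\sigma_k$ with the phantom token is, in substance, the chain ``Eve wins the $1$-token game $\Rightarrow$ $\Ac$ simulates $\delay^k(\Ac)$'' (\cref{lemma:g1todelaysim}, \cref{cor:iterativedelay}) followed by ``an automaton that simulates an equivalent HD automaton is HD'' (\cref{corollary:simulationofhd-implies-hd}). Those parts are fine. The genuine gap is your key lemma: it is \emph{false} that every semantically-deterministic B\"uchi automaton $\Ac$ admits a $k$ and a lookahead-$k$ strategy producing an accepting run on every $w \in \Lc(\Ac)$, i.e.\ that $\delay^k(\Ac)$ is history-deterministic for some $k$. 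Counterexample: over $\Sigma=\{a,b,c\}$ let $L$ be the set of words with infinitely many letters in $\{a,b\}$, and take states $q_0,q_a,q_b$ with transitions $q_0\xrightarrow{a:2}q_0$, $q_0\xrightarrow{b:2}q_0$, $q_0\xrightarrow{c:1}q_a$, $q_0\xrightarrow{c:1}q_b$, $q_a\xrightarrow{c:1}q_a$, $q_a\xrightarrow{a:2}q_0$, $q_a\xrightarrow{b:1}q_0$, and symmetrically $q_b\xrightarrow{c:1}q_b$, $q_b\xrightarrow{b:2}q_0$, $q_b\xrightarrow{a:1}q_0$. Every state recognises $L$ and every residual of $L$ equals $L$, so this automaton is SD and recognises $L$. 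Yet for any $k$, Adam defeats every lookahead-$k$ builder by repeatedly playing blocks $c^{k+1}x$, choosing $x\in\{a,b\}$ \emph{after} the builder has committed its transition on the first $c$ of the block (at which point it has only seen the remaining $c$'s) to be the letter opposite to the builder's guess; the resulting word is in $L$ but the constructed run never takes an accepting transition. Your stated intuition --- that semantic determinism means the builder ``never needs unbounded lookahead to avoid going wrong on residuals'' --- is exactly where this breaks: staying in the correct residual is automatic, but \emph{collecting accepting transitions} can still require unbounded lookahead.

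The paper proves the bounded-lookahead lemma only for \emph{universal} SD B\"uchi automata (\cref{lemma:universal-delay-hd}), where it holds for a simple reason: since every state accepts $\Sigma^\omega$, every word of length $2^n$ admits, from every state, a run through an accepting transition (pigeonhole on reachable subsets) --- precisely the property that fails on the $c$-blocks above. The missing ingredient in your proof is therefore the reduction to universal automata (\cref{thm:universal-g1}): if $\Ac$ is not HD, it fails to simulate some deterministic safety automaton $\Sc$ with $\Lc(\Sc)\subseteq\Lc(\Ac)$ (\cref{lemma:bhlp23}); the product of $\Ac$ with $\Sc$, completed by an accepting sink, is a universal SD B\"uchi automaton that is still not HD and on which an Adam win in the $1$-token game projects back to a win on $\Ac$. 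Only after this reduction may one invoke your lemma. (The paper's concluding remark sketches an alternative that avoids universality, but it is a strictly weaker, contrapositive claim: Adam only needs lookahead $K$ to build an accepting run along the words generated by his own finite-memory strategy in the HD game, not along all of $\Lc(\Ac)$.) For what it is worth, the counterexample is consistent with the theorem: Adam also wins the $1$-token game on it, by mirroring Eve's choice at $q_0$ and then playing the letter that is accepting for his own copy.
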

        A consequence of \cref{mainthm:sdbuchiautomata} is that Joker games characterise history-determinism on B\"uchi automata (\cref{cor:joker-game}). Since Joker games have smaller arenas than 2-token games, this leads to a more efficient algorithm for recognising HD B\"uchi automata (\cref{lemma:joker-game-complexity}).

    \item  
        We give a parity automaton with priorities $1$, $2$, and~$3$ on which Eve wins the Joker game but that is not HD (\cref{thm:joker-games-counterexample}). 
        This implies that the Joker game does not characterise history-determinism for parity automata and that \cref{mainthm:sdbuchiautomata} does not extend to parity automata.

    \item 
        We give a polynomial time determinisation procedure for HD B\"uchi automata, thus resolving an open question of Kuperberg and Skrzypczak~\cite{KS15}. 
        \begin{restatable}{mainthm}{thmefficientdeterminisation}\label{mainthm:efficientdeterminisation}\label{theorem:efficient-determinisation}
            There is a polynomial-time procedure that converts every HD B\"uchi automaton with $n$ states into an equivalent deterministic B\"uchi automaton with $n^2$ states.
        \end{restatable}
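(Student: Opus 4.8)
The plan is to bypass Kuperberg and Skrzypczak's detour through an exponential-size deterministic automaton and instead read the quadratic-size automaton off a small winning strategy in the $1$-token game. First I would preprocess: given an HD B\"uchi automaton $\Hc$ with $n$ states, transform it in polynomial time into a language-equivalent semantically-deterministic HD B\"uchi automaton $\Hc'$ with at most $n$ states, by merging language-equivalent states and pruning dominated transitions; one checks that this changes neither the language nor history-determinism, since Eve's HD strategy can always be taken to use only undominated transitions. Write $Q$ for the state set of $\Hc'$ and $m = \lvert Q\rvert \le n$. Since $\Hc'$ is HD, Eve wins the $1$-token game on it (she may play an HD strategy and ignore Adam's token). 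That game is on a finite arena of size $O(m^2\cdot\lvert\Sigma\rvert)$, and Eve's objective ``my run is accepting or the opponent's run is rejecting'' is a Rabin condition with two pairs, so Eve has a \emph{positional} winning strategy $\tau$, computable in polynomial time.

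The crucial step is to convert $\tau$, which reads off the position of Adam's token, into a genuine resolver of nondeterminism, that is, a winning strategy $\sigma$ for Eve in the history-determinism game on $\Hc'$, and to do so using only linearly many memory states. This is precisely the content of the proof of \cref{mainthm:sdbuchiautomata}: for a semantically-deterministic B\"uchi automaton, winning the $1$-token game transfers to winning the HD game, and --- through the lookahead-game equivalence of \cref{mainthm:lookahead} --- the transfer can be arranged so that Eve's HD strategy need only remember a single ``challenger'' state of $\Hc'$, with semantic determinism ensuring that one representative state suffices in place of the whole set of currently-relevant runs. The deterministic automaton $\Dc$ is then just $\Hc'$ equipped with this memory: its states are pairs $(p,q)\in Q\times Q$, where $p$ tracks Eve's run under $\sigma$ and $q$ is the challenger memory; the outgoing transition on a letter is the one prescribed by $\sigma$ (which resolves all nondeterminism of $\Hc'$), and it is declared accepting exactly when $\sigma$'s move is. Thus $\Dc$ is deterministic with at most $m^2 \le n^2$ states. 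Correctness follows immediately from $\sigma$ being a winning HD strategy: for $w\in L(\Hc')=L(\Hc)$ the run of $\Dc$ on $w$ projects to $\sigma$'s run, which is accepting, so $\Dc$ accepts; for $w\notin L$ that run is a non-accepting run of $\Hc'$, so $\Dc$ rejects. Finally the procedure is polynomial: the preprocessing, solving the fixed-index Rabin game on the polynomial-size $1$-token arena, the strategy transfer, and assembling $\Dc$ on $Q\times Q$ are all polynomial, and no exponential object is ever constructed.

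The hard part is the second step: sharpening \cref{mainthm:sdbuchiautomata} so that the HD strategy obtained from the positional $1$-token strategy uses only $\lvert Q\rvert$ memory states rather than more. A black-box use of \cref{mainthm:lookahead} would guarantee merely some finite bound; pushing the bound down to $\lvert Q\rvert$ requires opening up the lookahead-to-$1$-token analysis and reading off that, under semantic determinism, Eve can hold her run against \emph{all} possible opponent tokens while remembering only one well-chosen state --- this is where the argument must be careful, and it is also what ultimately pins down the $n^2$ bound. A secondary, more routine obstacle is verifying that the reduction to semantically-deterministic form can be carried out without increasing the number of states and without destroying history-determinism.
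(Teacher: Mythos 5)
There is a genuine gap at exactly the point you flag as ``the hard part'': the conversion of a positional $1$-token strategy into an HD strategy with only $|Q|$ memory states. You defer this to ``the content of the proof of \cref{mainthm:sdbuchiautomata}'', but that proof does not supply it. The paper proves \cref{mainthm:sdbuchiautomata} by reducing to universal automata, showing that $\delay^K(\Ac)$ is HD for $K=2^n$, and invoking \cref{corollary:simulationofhd-implies-hd}; the only memory bound one can read off that route is exponential (it is governed by the lookahead $K$), and the paper itself stresses in the Discussion that no natural strategy transfer from token/Joker games to the HD game is known. So the step that ``pins down the $n^2$ bound'' is not a sharpening you can extract from \cref{mainthm:lookahead} as a black box --- it is precisely the missing content, and your proposal does not supply an argument for it.

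What the paper actually does in place of that step is an explicit two-stage construction that your sketch does not capture. First, after pruning to a \emph{good} automaton via a Joker-game strategy (note: the paper prunes non-language-preserving transitions; it never merges language-equivalent states, and merging is not obviously safe for nondeterministic B\"uchi automata), it runs an iterative rank-based surgery on $G_1(\Hc)$ --- deleting rejecting transitions along which the optimal rank increases and promoting to accepting those along which it strictly decreases --- until every state sprint-simulates some language-equivalent state (\cref{lemma:h*normalised}). Second, for an automaton with a \raceaheadselfsim{} it builds $\Dc$ on pairs $(p,q)$ where $q$ is a \lex{} state witnessed by a fixed deterministic subautomaton $\Fc$, the first component follows a positional sprint-simulation strategy, and the second component is reset on each accepting transition of the first; correctness comes from the sprint-simulation relation forcing the first component to see an accepting transition no later than the second. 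Your description of $\Dc$ (``the outgoing transition is the one prescribed by $\sigma$, accepting exactly when $\sigma$'s move is'') omits both the role of the second component and the invariant that makes the product accept all of $\Lc(\Hc)$, so as written the construction cannot be verified. The frame of your proposal (preprocess to SD, exploit positionality of the $[0,2]$-parity $1$-token game, build a quadratic product) matches the paper, but the theorem's substance lives in the part you left open.
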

        Our determinisation procedure is inspired by that of Kuperberg and Skrzypczak~\cite{KS15}, but rather than working with the simulation game between the automaton and a deterministic automaton of exponential size, thanks to \cref{mainthm:sdbuchiautomata}, we can work with the $1$-token game instead.  
        This results in an algorithm that is conceptually simpler and that runs in polynomial time.

    \item 
        We also give a technique to reduce game-based characterisations of history-determinism to universal automata (automata that accept all words). 
        Hence to prove the 1-token game characterisation of history-determinism for semantically-deterministic (SD) B\"uchi automata, it suffices to prove it for universal SD B\"uchi automata (\cref{thm:universal-g1}). 
        Likewise, to prove the $2$-token conjecture for parity automata, it suffices to prove it for universal parity automata  (\cref{thm:universal-g2}).
\end{itemize}

\section{Preliminaries}

We let $\mathbb{N}=\{0,1,2,\cdots \}$ to be the set of natural numbers. For two natural numbers $i,j$ such that $i<j$, we write $[i,j]$ to denote the set of integers $\{i,i+1,\dots,j\}$, and $[j]$ to denote $[0,j]$. An \textit{alphabet} $\Sigma$ is a finite set of \textit{letters}. We use $\Sigma^{*}$ and $\Sigma^{\omega}$ to denote the set of words of finite and countably infinite length over $\Sigma$, respectively. We also let $\varepsilon$ be the unique word of length~$0$. A language $\Lc \in \Sigma^{\omega}$ is a set of words. For a finite word $u \in \Sigma^{*}$ and a language $\Lc$, we define $u^{-1}\Lc$ to be $\{w\mid uw \in \Lc \}$.

\subsection{Games}

\subparagraph{Game arenas.}
An \emph{arena} is a directed graph $G=(V,E)$ with vertices partitioned into $V_{\adam}$ and $V_{\eve}$ between two players Adam and Eve, respectively. 
Additionally, a vertex $v_0 \in V_{\adam}$ is designated as the initial vertex. 
We say that vertices in $V_{\eve}$ are owned by Eve and those in $V_{\adam}$ are owned by Adam. 

A \emph{play} on this arena is an infinite path starting at $v_0$ and it is formed as follows. 
A play starts with a token at $v_0$ and it proceeds for infinitely many rounds. 
At each round, the player who owns the vertex on which the token is currently placed chooses an outgoing edge, and the token is moved along this edge to the next vertex for another round of play. 
This creates an infinite path in the arena, which we call a play of $G$. 

A \emph{game} $\Gc$ consists of an arena $G=(V,E)$ and a winning condition given by a language $L \subseteq E^{\omega}$. 
We say that Eve \emph{wins a play} $\rho$ in $G$ if $\rho \in L$, and Adam wins otherwise. A \emph{strategy} for Eve in such a game $\Gc$ is a function from the set of plays that end at an Eve's vertex to an outgoing edge from that vertex.
Eve's strategy is said to be winning if any play produced while she plays according to this strategy is winning for her. 
We say that Eve \emph{wins the game} if she has a winning strategy. Winning strategies are defined for Adam analogously, and we say that Adam wins the game if he has a winning strategy. In this paper we deal with $\omega$-regular games, which are known to be determined~\cite{Mar75,GH82}, i.e., each game has a winner. Two games are \emph{equivalent} if they have the same winner.

\subparagraph{Parity games.}
An $[i,j]$-\emph{parity game} $\Gc$ is played over a finite game arena $G = (V,E)$, with the edges of $G$ labelled by a priority function $\chi:E \xrightarrow{} [i,j]$ for some $i,j \in \mathbb{N}$ with $i<j$, and $i=0$ or $i=1$. 
A play $\rho$ in the arena of $\Gc$ is winning for Eve if and only if the highest edge priority that occurs infinitely often is even. 
It is well known that parity games can be solved in polynomial time when the interval $[i,j]$ is fixed, and in quasi-polynomial time otherwise~\cite{CaludeJKLS22,JL17,LPSW22}.

\subsection{Automata}

\subparagraph*{Parity automata.}
An $[i,j]$-\emph{non-deterministic parity automaton} $\Ac = (Q,\Sigma,q_0,\Delta)$ consists of a finite directed graph with edges labelled by letters in $\Sigma$ and \emph{priorities} in $[i,j]$ for some $i,j \in \mathbb{N}$ with $i<j$. 
These edges are called \emph{transitions}, which are elements of the set $\Delta \subseteq Q \times \Sigma \times [i,j] \times Q$, and the vertices of this graph are called \emph{states},  which are elements of the set $Q$. Each automaton has a designated \emph{initial state} $q_0 \in Q$. 
For states $p,q \in Q$ and a letter $a \in \Sigma$, we use $p\xrightarrow{a:c}q$ to denote a transition from $p$ to $q$ on the letter $a$ that has the priority $c$. 

A \emph{run} on an infinite word $w$ in $\Sigma^{\omega}$ is an infinite path in the automaton, starting at the initial state and following transitions that correspond to the letters of $w$ in sequence.  We write that such a run is \emph{accepting} if it \emph{satisfies the parity condition}, i.e., the highest priority occurring infinitely often amongst the transitions of the run is even, and a word $w$ in $\Sigma^{\omega}$ is accepting if the automaton has an accepting run on $w$. The \emph{language} of an automaton $\Ac$, denoted by $\Lc(\Ac)$, is the set of words that it accepts. We write that the automaton $\Ac$ \emph{recognises} a language $\Lc$ if $\Lc(\Ac)=\Lc$. A language $\Lc \subseteq \Sigma^{\omega}$ is \emph{$\omega$-regular} if it is recognised by some parity automaton. A parity automaton $\Ac$ is said to be \emph{deterministic} if for any given state in $\Ac$ and any given letter in $\Sigma$, there is at most one outgoing transition from that state on that letter.  

We write that $[i,j]$, with $i=0$ or $1$, is the parity index of $\Ac$. A \emph{B\"uchi} (resp. co-B\"uchi) automaton is a $[1,2]$ (resp. $[0,1]$) parity automaton. A \emph{safety automaton} is one where all transitions have priority $0$.

We write $(\Ac,q)$ to denote the automaton $\Ac$ with $q$ as it's initial state, and $\Lc(\Ac,q)$ to denote the language it recognises. Two states $p$ and $q$ in $\Ac$ are \emph{equivalent} if $\Lc(\Ac,p) = \Lc(\Ac,q)$.

\subparagraph{History-determinism.}
 The (HD game) is a two player turn-based game between Adam and Eve, who take alternating turns to select a letter and a transition in the automaton (on that letter), respectively. After the game ends, the sequence of Adam's choices of letters is an infinite word, and the sequence of Eve's choices of transitions is a run on that word. Eve wins the game if her run is accepting or Adam's word is rejecting, and we say that an automaton is HD if Eve has a winning strategy in the history-determinism game.

\begin{definition}[History-determinism game]
Given a parity automaton $\Ac = (Q,\Sigma, q_0, \Delta)$, the \emph{history-determinism game} of $\Ac$ is defined between the players Adam and Eve as follows, with positions in $Q$. The game starts at $q_0$ and proceeds for infinitely many rounds. 
For each $i \in \mathbb{N}$, round $i$  starts at a position $q_i \in Q$, and proceeds as follows: \begin{enumerate}\item Adam selects a letter $a_i \in \Sigma$;
\item Eve selects a transition $q_i \xrightarrow{a_i:c_i} q_{i+1} \in \Delta$.    
\end{enumerate}  
The new position is $q_{i+1}$ from where round $(i+1)$ is played. 
Thus, the play of a history-determinism game can be seen as Adam constructing a word letter-by-letter, and Eve constructing a run transition-by-transition on the same word. Eve wins such a play if the following holds: if Adam's word  is in $\Lc(\Ac)$, then Eve's run is accepting.
\end{definition}
If Eve wins the history-determinism game on $\Ac$, then we say that $\Ac$ is \emph{history-deterministic}. 

\subparagraph{Semantic-determinism.}
Let $\Ac$ be a parity automaton.  A transition $\delta$ from $p$ to $q$ on a letter $a$ in $\Ac$ is called \emph{language-preserving} if $\Lc(\Ac,q) = a^{-1} \Lc(\Ac,p)$. We say that a parity automaton  is \emph{semantically-deterministic}, SD for short, if all transitions in it are language-preserving. The following lemma can be shown by a simple inductive argument on the length of words.
\begin{lemma}\label{lemma:SDautomata}
If a parity automaton is SD then all states in the automaton that can be reached from a fixed state $q$ upon reading a finite word $u$ accept the language $u^{-1}\Lc(\Ac,q)$.
\end{lemma}
SD automata were introduced by Kuperberg and Skrzypczak as residual automata~\cite{KS15}. 
We follow Abu Radi, Kupferman, and Leshkowitz~\cite{RKL21} by calling them SD automata instead.

\subsection{Games on Automata}

Simulation and simulation-like games (such as token games~\cite{BK18}) are fundamental amongst the techniques we use in this paper. We define these games below.  

\subparagraph{Simulation and stepahead simulation.}\label{subsec:sim-game}
 \begin{definition}[Simulation game]
      Given two parity automata $\Ac=(P,\Sigma,p_0,\Delta_A)$ and $\Bc=(Q,\Sigma,q_0,\Delta_B)$, the \emph{simulation game} between $\Bc$ and $\Ac$ is a two player game played between Eve and Adam as follows, with positions in $P \times Q$. The game starts at $(p_0,q_0)$, and proceeds for infinitely many rounds. For each $i \geq 0$, round $i$ starts at position $(p_i,q_i)$ and proceeds as follows:
    \begin{enumerate}
            \item Adam selects a letter $a_i \in \Sigma$;
        \item Adam selects a transition $p_i \xrightarrow{a_i:c'} p_{i+1}$ in $\Ac$;
        \item Eve selects a transition $q_i \xrightarrow{a_i:c} q_{i+1}$ in $\Bc$.
    \end{enumerate}

At the end of a play of the simulation game, the letters selected by Adam in sequence form a word, while the sequence of his selected transitions and the sequence of Eve's selected transitions form runs on that word in $\Ac$ and $\Bc$, respectively. We say Eve wins the game if her run in $\Bc$ is accepting or Adam's run in $\Ac$ is rejecting. If Eve has a strategy to win the simulation game, then we say that $\Bc$ \emph{simulates} $\Ac$.
 \end{definition}

The \emph{stepahead simulation} game between $\Bc$ and $\Ac$ is defined similarly to the simulation game, except the orders of move in each round are changed as follows: Adam selects a letter first, then Eve selects a transition on $\Bc$, and then Adam selects a transition on $\Ac$. The winning condition is identical, which is that Eve's run on $\Bc$ is accepting if Adam's run on $\Ac$ is accepting. If Eve wins the stepahead simulation game between $\Bc$ and $\Ac$, then we say that $\Bc$ \emph{\stepsim} $\Ac$.

\subparagraph{Token games.}\label{sec:token-games}
$k$-token games are similar to stepahead simulation games, but are played on a single automaton, and Adam constructs $k$ runs instead of one. The winning objective of Eve requires her to construct an accepting run if one of $k$ Adam's runs is accepting.

\begin{definition}[$k$-token game]
    The \emph{$k$-token game} on a non-deterministic parity automaton $\Ac=(Q,\Sigma,q_0,\Delta)$ is defined between the players Adam and Eve as follows, with positions in $Q \times Q^k$. The game starts at $(q_0,(q_0)^k)$ and proceeds in $\omega$ many rounds. For each $i \in \mathbb{N}$, the  round $i$ starts at a position $(q_i,(p^1_i,p^2_i,\cdots,p^k_i)) \in Q \times Q^k$, and proceeds as follows.\begin{enumerate}
    \item Adam selects a letter $a_i \in \Sigma$.
    \item Eve selects a transition $q_i \xrightarrow{a_i:c} q_{i+1} \in \Delta$.
    \item Adam selects $k$ transitions $p^1_i \xrightarrow{a_i:c'_1} p^1_{i+1}, p^2_i \xrightarrow{a_i:c'_2} p^2_{i+1}, \cdots p^k_i \xrightarrow{a_i:c'_k} p^k_{i+1} \in \Delta$.
\end{enumerate}
The new position is $(q_{i+1},(p^1_{i+1},p^2_{i+1},\cdots,p^k_{i+1}))$, from where round $(i+1)$ begins.

Thus, in  a play of the $k$-token game on $\Ac$, Eve constructs a run and Adam $k$ runs, all on the same word. Eve wins such a play if the following holds: if one of Adam's $k$ runs is accepting, then Eve's run is accepting.
\end{definition}
Observe that the stepahead-simulation game between $\Ac$ and itself is equivalent to the $1$-token game on $\Ac$.

\subparagraph{Joker games.}
\emph{Joker games} are defined similar to $1$-token games, but additionally in each round, Adam can choose to play \emph{Joker} and choose a transition from Eve's position instead of a transition from his position. The winning condition for Eve is the following: If Adam's sequence of transitions satisfies the parity conditions and Adam has played finitely many Jokers, then Eve's run is accepting as well.

\begin{definition}[Joker games]
    The \emph{Joker game} on a non-deterministic parity automaton $\Ac=(Q,\Sigma,q_0,\Delta)$ is defined between the players Adam and Eve as follows, with positions in $Q \times Qk$. The game starts at $(q_0,q_0)$ and proceeds in $\omega$ many rounds. For each $i \in \mathbb{N}$, the  round $i$ starts at a position $(q_i,p_i) \in Q \times Q$, and proceeds as follows.\begin{enumerate}
    \item Adam selects a letter $a_i \in \Sigma$.
    \item Eve selects a transition $q_i \xrightarrow{a_i:c_i} q_{i+1} \in \Delta$
    \item Adam either selects a transitions $p_i \xrightarrow{a_i:c'_i} p_{i+1}$, or plays Joker and selects a transition $q_i \xrightarrow{a_i:c_i} p'_{i+1}$.
\end{enumerate}
The new position is $(q_{i+1},p_{i+1})$, from where round $(i+1)$ begins.

Eve wins such a play if the following holds: if Adam plays finitely many Jokers and his sequence of transitions satisfies the parity condition, then Eve's run is accepting.
\end{definition}

 The following observations are easy to see.
\begin{lemma}\label{lemma:hd-implies-other-games}
    If $\Ac$ is an HD parity automaton, and if $\Bc$ is a parity automaton such that $\Lc(\Bc) \subseteq \Lc(\Ac)$, then we have:
    \begin{enumerate}
        \item Eve wins the Joker game on $\Ac$ and the $k$-token game on $\Ac$, for all $k \geq 1$;
        \item $\Ac$ simulates and \stepsim $\Bc$.
    \end{enumerate}
\end{lemma}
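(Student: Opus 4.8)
The plan is to fix a winning strategy $\sigma$ for Eve in the history-determinism game on $\Ac$ and to have Eve reuse it, essentially verbatim, in each of the games above: Eve reads off Adam's letters, feeds them to $\sigma$, and plays whatever transition $\sigma$ prescribes, ignoring every other choice Adam makes. This always produces some run $\rho$ of $\Ac$ on the word $w$ that Adam builds, and since $\sigma$ is winning in the HD game, it suffices to check in each case that whenever the winning condition would favour Adam we actually have $w \in \Lc(\Ac)$; then $\sigma$ guarantees that $\rho$ is accepting, so Eve wins.

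For the $k$-token game this is immediate: if any of Adam's $k$ runs is accepting then it is itself a witness that $w \in \Lc(\Ac)$. For the claim that $\Ac$ simulates and step-ahead simulates $\Bc$, Eve again plays $\sigma$ on Adam's letters and ignores Adam's transitions in $\Bc$; if Adam's run in $\Bc$ is accepting then $w \in \Lc(\Bc) \subseteq \Lc(\Ac)$, and we conclude as before. The difference in move order between the two simulation variants --- in the step-ahead game Eve commits to her transition in $\Ac$ before Adam moves in $\Bc$ --- is harmless, since $\sigma$ only ever looks at Adam's letters.

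The Joker game requires one small extra ingredient, namely the observation, valid for \emph{every} parity automaton, that if a state $q$ is reachable from $q_0$ by reading a finite word $u$ then $\Lc(\Ac,q) \subseteq u^{-1}\Lc(\Ac)$: concatenating a finite run from $q_0$ to $q$ on $u$ with an accepting run from $q$ does not change the set of priorities occurring infinitely often. Given this, suppose Adam plays finitely many Jokers and his sequence of transitions satisfies the parity condition, and let $N$ be least such that no Joker is played from round $N$ onwards. If $N = 0$ then Adam's transitions already form an accepting run of $\Ac$ from $q_0$ on $w$, so $w \in \Lc(\Ac)$. If $N \geq 1$, then round $N-1$ is a Joker move, so Adam's token moves from Eve's current state $q_{N-1}$ along a transition on $a_{N-1}$ to some $p_N$, and from round $N$ onwards Adam plays genuine transitions starting from $p_N$; hence Adam's transitions from round $N-1$ on form a run of $\Ac$ from $q_{N-1}$ on $a_{N-1}a_N\cdots$, and this run is accepting since it coincides with a tail of Adam's transition sequence, which by hypothesis satisfies the parity condition. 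Thus $a_{N-1}a_N\cdots \in \Lc(\Ac, q_{N-1})$; since $q_{N-1}$ lies on Eve's $\sigma$-run it is reachable from $q_0$ on $a_0\cdots a_{N-2}$, so the observation yields $w \in \Lc(\Ac)$. Either way $\sigma$ makes $\rho$ accepting.

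I expect no genuine obstacle; the only mildly delicate point is the bookkeeping in the Joker case --- identifying the state on Eve's run through which Adam's token re-enters after the last Joker and invoking the residual-containment fact above. Note in particular that only the ``easy'' inclusion $\Lc(\Ac,q) \subseteq u^{-1}\Lc(\Ac)$ is needed, so semantic determinism (i.e.\ \cref{lemma:SDautomata}) plays no role here.
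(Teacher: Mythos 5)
Your proposal is correct and follows essentially the same route as the paper's proof: fix a winning HD strategy $\sigma$, replay it on Adam's letters while ignoring his other moves, and in each game reduce Adam's winning condition to a witness that his word lies in $\Lc(\Ac)$. Your handling of the Joker case — splicing Eve's run up to the last Joker with Adam's subsequent transitions — is exactly the paper's concatenation argument, just spelled out with slightly more care about why the spliced run is accepting.
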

\begin{proof}
    Fix a winning strategy $\sigma$ for Eve in the HD game of $\Ac$. 
    Consider the strategy for Eve in the above games, in which she follows $\sigma$ based on the letters Adam chooses, ignoring the rest of his moves. 
    Then Eve constructs an accepting run whenever the word constructed by Adam is accepting. 
    In particular, if Adam constructs an accepting run in the $k$-token game or the (stepahead) simulation game, then his word must be in $\Lc(\Ac)$, implying Eve's run is accepting as well. 
    
    Similarly, in a play of the Joker game, suppose Adam plays finitely many Jokers and his sequence of transitions satisfies the parity condition. Let $i$ be the last round where Adam played a Joker. Then there is an accepting run on Adam's word, which can be obtained by concatenating Eve's run until round $(i-1)$ with Adam's run from round $i$. This implies that Adam's word is in $\Lc(\Ac)$, once again implying that Eve's run is accepting.
\end{proof}

\subsection{History-Deterministic Automata and Simulation}
Let $\Lc$ be an $\omega$-regular language, and let $\Fc_\Lc$ be the set of automata which recognise languages that are subsets of $\Lc$, i.e. $\Fc_\Lc = \{\Ac \mid \Lc(\Ac) \subseteq \Lc\}$. Consider the relation $\preceq$ on $\Fc_\Lc$ defined as $\Ac \preceq \Bc$ if $\Bc$ simulates $\Ac$. Then, $\preceq$ is a \emph{preorder}:
\begin{enumerate}
    \item $\preceq$ is \emph{transitive}, i.e., $\Ac \preceq \Bc$ and $\Bc \preceq \Cc$ implies $\Ac \preceq \Cc$,
    \item $\preceq$ is \emph{reflexive}, i.e., $\Ac \preceq \Ac$.
\end{enumerate}
We call $\preceq$ the \emph{simulation preorder}.
\cref{lemma:hd-implies-other-games} implies that any HD automaton $\Hc$ which recognises $\Lc$ is a \emph{greatest} element in $\Fc_\Lc$ with respect to the simulation preorder, i.e., $\Ac \preceq \Hc$ for all $\Ac$ with $\Lc(\Ac) \subseteq \Lc(\Hc)$. We show that the converse also holds.

\begin{lemma}\label{lemma:simpreorder}
    Let $\Lc$ be an $\omega$-regular language, and let $\Fc_\Lc=\{\Ac \mid \Lc(\Ac) \subseteq \Lc\}$. Then, an automaton $\Hc$ in $\Fc_\Lc$ is greatest w.r.t. the simulation preorder if and only if $\Hc$~recognises $\Lc$ and is history-deterministic.
\end{lemma}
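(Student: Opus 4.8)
The plan is to prove the two implications separately, reusing \cref{lemma:hd-implies-other-games} for the easy direction and a "simulate a deterministic automaton" argument for the hard one.

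For the direction where $\Hc$ recognises $\Lc$ and is history-deterministic, I would argue that $\Hc$ is greatest essentially by quoting \cref{lemma:hd-implies-other-games}: given any $\Ac \in \Fc_\Lc$ we have $\Lc(\Ac) \subseteq \Lc = \Lc(\Hc)$, so item~2 of that lemma yields that $\Hc$ simulates $\Ac$, i.e.\ $\Ac \preceq \Hc$. Since $\Ac \in \Fc_\Lc$ was arbitrary, $\Hc$ is a greatest element of $(\Fc_\Lc,\preceq)$.

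For the converse, assume $\Hc \in \Fc_\Lc$ is greatest. First I would fix a deterministic parity automaton $\Dc$ with $\Lc(\Dc) = \Lc$ (this exists since $\Lc$ is $\omega$-regular), and I may assume $\Dc$ is complete, as redirecting missing transitions to a rejecting sink does not change its language. Step~1 is to show $\Lc(\Hc) = \Lc$: the inclusion $\Lc(\Hc) \subseteq \Lc$ is just $\Hc \in \Fc_\Lc$, and for the reverse, $\Dc \in \Fc_\Lc$, so maximality of $\Hc$ gives that $\Hc$ simulates $\Dc$; since simulation implies language containment (Adam replays an accepting run of $\Dc$ on any $w \in \Lc$, forcing Eve's winning simulation strategy to build an accepting run of $\Hc$ on $w$), we get $\Lc = \Lc(\Dc) \subseteq \Lc(\Hc)$. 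Step~2 is to show $\Hc$ is history-deterministic. I would let Eve play the HD game on $\Hc$ while maintaining a side play of the simulation game between $\Hc$ and $\Dc$ consistent with a fixed winning strategy $\tau$ of hers there: when Adam plays letter $a_i$, Eve feeds it to the simulation game, Adam's move on $\Dc$ is forced (determinism and completeness of $\Dc$), $\tau$ then prescribes an $\Hc$-transition, and Eve plays exactly that transition in the HD game. At the end, Eve's $\Hc$-run together with the forced $\Dc$-run forms a $\tau$-consistent play of the simulation game, hence winning for Eve there: either her $\Hc$-run is accepting or the $\Dc$-run is rejecting. If Adam's word $w$ lies in $\Lc = \Lc(\Hc) = \Lc(\Dc)$, the unique run of $\Dc$ on $w$ is accepting, so the first case must hold and Eve's $\Hc$-run is accepting; thus Eve wins the HD game and $\Hc$ is HD.

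The main obstacle is Step~2, i.e.\ upgrading "greatest element / simulates a deterministic automaton" to "history-deterministic". The two ideas that make it work are: (i) exploiting determinism and completeness of $\Dc$ so that Adam's moves on $\Dc$ in the simulation game are determined by the letters alone, letting Eve run $\tau$ \emph{blindly} inside the HD game without seeing any $\Dc$-moves; and (ii) the language equality $\Lc(\Hc)=\Lc(\Dc)$ from Step~1, which aligns the two winning conditions. The residual facts — that $\preceq$ is a preorder and that simulation entails language inclusion — are routine and I would state them in passing.
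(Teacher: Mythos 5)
Your proposal is correct and follows essentially the same route as the paper: the easy direction by citing \cref{lemma:hd-implies-other-games}, and the converse by fixing a deterministic parity automaton $\Dc$ for $\Lc$, deducing $\Lc(\Hc)=\Lc$ from $\Dc \preceq \Hc$, and then letting Eve shadow her winning simulation strategy against the forced $\Dc$-run inside the HD game. Your explicit remark that $\Dc$ should be taken complete is a small but worthwhile clarification that the paper leaves implicit.
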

\begin{proof}
    We only need to prove the forward implication, since the backward implication follows from \cref{lemma:hd-implies-other-games}. Let $\Hc$ in $\Fc_\Lc$ be such that $\Ac \preceq \Hc$ for all $\Ac$ with $\Lc(\Ac) \subseteq \Lc$. Fix $\Dc$ to be a deterministic parity automaton which recognises $\Lc$ (such a $\Dc$ always exists, see \cite[Theorem 1.19 and~3.11]{Automatabook2001}). Then, in particular, $\Dc \preceq \Hc$. Observe that this implies $\Lc(\Hc)\supseteq \Lc(\Dc)=\Lc$, and since $\Lc(\Hc)\subseteq \Lc$, we get that $\Lc(\Hc) = \Lc$.  
    
    We proceed to show that Eve wins the HD game on $\Hc$. Fix $\sigma$ to be a winning strategy for Eve in the simulation game between $\Hc$ and $\Dc$. We use $\sigma$ to construct a winning strategy in the HD game on $\Hc$ as follows. Eve during the letter game on $\Hc$ keeps a corresponding play of the simulation game between $\Hc$ and $\Dc$, where Adam is playing the same letters as the HD game and choosing the unique transitions available to him. Eve then chooses transitions according to $\sigma$ in the HD game and in the simulation game in her memory. This way, whenever Adam's word $w$ in the HD game is in $\Lc(\Hc) = \Lc$, then the unique run in $\Dc$ on $w$ is accepting, and hence, Eve's run in the HD game on $\Hc$ must be accepting as well. 
\end{proof}

We make explicit a corollary of the above lemma.

\begin{corollary}\label{corollary:simulationofhd-implies-hd}
    If a nondeterministic parity automaton $\Ac$ simulates a language-equivalent history-deterministic automaton $\Hc$, then $\Ac$ is history-deterministic.
\end{corollary}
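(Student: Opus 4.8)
The plan is to derive this immediately from \cref{lemma:simpreorder} together with \cref{lemma:hd-implies-other-games}. Write $\Lc = \Lc(\Ac) = \Lc(\Hc)$, and recall $\Fc_\Lc = \{\Bc \mid \Lc(\Bc) \subseteq \Lc\}$ with the simulation preorder $\preceq$. Since $\Lc(\Ac) = \Lc \subseteq \Lc$, we have $\Ac \in \Fc_\Lc$, so by \cref{lemma:simpreorder} it suffices to show that $\Ac$ is a greatest element of $\Fc_\Lc$ with respect to $\preceq$.

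First I would take an arbitrary $\Bc \in \Fc_\Lc$, so $\Lc(\Bc) \subseteq \Lc = \Lc(\Hc)$. Since $\Hc$ is history-deterministic, \cref{lemma:hd-implies-other-games}(2) gives that $\Hc$ simulates $\Bc$, i.e. $\Bc \preceq \Hc$. By the hypothesis of the corollary, $\Ac$ simulates $\Hc$, i.e. $\Hc \preceq \Ac$. Then transitivity of the simulation preorder (recorded just before \cref{lemma:simpreorder}) yields $\Bc \preceq \Ac$. As $\Bc \in \Fc_\Lc$ was arbitrary, $\Ac$ is greatest in $\Fc_\Lc$, and \cref{lemma:simpreorder} concludes that $\Ac$ recognises $\Lc$ and is history-deterministic.

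I do not expect a genuine obstacle here: the only points requiring care are checking that the language-equivalence hypothesis is what places $\Ac$ inside $\Fc_\Lc$ (so that "greatest element" is even meaningful for $\Ac$), and invoking \cref{lemma:hd-implies-other-games} with $\Bc$ in the role of the "smaller" automaton rather than $\Ac$. If one preferred a self-contained argument avoiding \cref{lemma:simpreorder}, the alternative is a direct strategy composition: Eve, playing the HD game on $\Ac$, privately maintains both a play of the HD game on $\Hc$ (following a winning HD strategy for $\Hc$) and a play of the simulation game between $\Ac$ and $\Hc$ (following Eve's winning simulation strategy), forwarding Adam's letters to both; whenever Adam's word lies in $\Lc = \Lc(\Ac) = \Lc(\Hc)$, the $\Hc$-run is accepting, and the simulation winning condition then forces the $\Ac$-run to be accepting. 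This mirrors the proof of \cref{lemma:simpreorder} and is equally routine.
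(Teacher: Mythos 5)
Your proposal is correct and follows essentially the same route as the paper: both show that $\Ac$ is a greatest element of $\Fc_\Lc$ by composing $\Bc \preceq \Hc$ (from $\Hc$ being greatest, or equivalently \cref{lemma:hd-implies-other-games}) with the hypothesis $\Hc \preceq \Ac$ via transitivity, and then invoke \cref{lemma:simpreorder} to conclude. The extra care you take in verifying $\Ac \in \Fc_\Lc$ is fine but the argument is the same.
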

\begin{proof}
    Let $\Lc(\Ac) = \Lc(\Hc) = \Lc$. From \cref{lemma:simpreorder}, we know that $\Hc$ is a greatest element in $\Fc_\Lc$, and $\Hc \preceq \Ac$ implies that $\Ac$ is a greatest element in $\Fc_\Lc$ as well. It follows from \cref{lemma:simpreorder} that $\Ac$ is history-deterministic.
\end{proof}

\section{Sufficient to think about Universal Automata}\label{sec:universal-reduction}
In the next section, we will show that $1$-token games characterise history-determinism on semantically-deterministic B\"uchi automata (\cref{thm:main}). In order to show this, we start by reducing this result to the restriction where our automata are universal (\cref{thm:universal-g1}), i.e., recognise all words in the language. A very similar reduction also  shows that proving the $2$-token conjecture for universal parity is sufficient to conclude the $2$-token conjecture for parity automata (\cref{cor:2token-game}). 

\begin{restatable}{theorem}{reductionuniversal}\label{thm:universal-g1}
The following statements are equivalent:
    \begin{enumerate}
        \item For any semantically-deterministic B\"uchi automaton $\Ac$, Eve wins the 1-token game on $\Ac$ if and only if $\Ac$ is history-deterministic.
        \item For any semantically-deterministic B\"uchi automaton $\Uc$ with $\Lc(\Uc) = \Sigma^{\omega}$, Eve wins the 1-token game on $\Uc$ if and only if $\Uc$ is history-deterministic.
    \end{enumerate}
\end{restatable}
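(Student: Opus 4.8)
The implication from (1) to (2) is immediate, as (2) is the instance of (1) with $\Lc(\Ac) = \Sigma^{\omega}$. For the converse, (2) $\Rightarrow$ (1), the plan is as follows. Fix a semantically-deterministic Büchi automaton $\Ac$, write $\Lc = \Lc(\Ac)$, and assume $\Ac$ is complete (missing transitions in an SD automaton can only occur where the residual is empty, so they may be sent to a rejecting sink). By \cref{lemma:hd-implies-other-games}, history-determinism of $\Ac$ already entails that Eve wins the $1$-token game on $\Ac$, so it suffices to prove the reverse implication, which I would do contrapositively: assuming $\Ac$ is not history-deterministic while Eve wins the $1$-token game on $\Ac$, I would build from $\Ac$ a universal semantically-deterministic Büchi automaton $\Uc$ on which Eve still wins the $1$-token game but which is not history-deterministic — contradicting (2).

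The candidate for $\Uc$ is built from the synchronous product $\Ac \times \Dc$, where $\Dc$ is a deterministic automaton for the complement $\overline{\Lc} = \Sigma^{\omega}\setminus\Lc$ and a product run is accepting when its $\Ac$-component or its $\Dc$-component is accepting. Since every word is in $\Lc$ or in $\overline{\Lc}$, this product recognises $\Sigma^{\omega}$; and because $\Ac$ is semantically-deterministic and $\Dc$ deterministic, the state reached by a finite word $u$ recognises $u^{-1}\Lc\cup u^{-1}\overline{\Lc} = \Sigma^{\omega}$ by \cref{lemma:SDautomata}, so the product is universal and semantically-deterministic. Then $\Uc$ is a Büchi automaton extracted from this product while keeping it universal and semantically-deterministic.

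Two transfer arguments complete the proof. First, Eve wins the $1$-token game on $\Uc$: she follows her winning $\Ac$-strategy on the $\Ac$-component of her token against the $\Ac$-components of Adam's moves, the $\Dc$-components of both tokens being forced and always equal; if Adam's token is accepting through its $\Dc$-component, so is Eve's, and otherwise Adam's $\Ac$-component is accepting, hence so is Eve's. Second, $\Uc$ is not history-deterministic: a winning Adam strategy in the history-determinism game on $\Ac$ guarantees, whatever Eve does, a word $w \in \Lc$ on which Eve's $\Ac$-run is rejecting, and playing the same letters in the history-determinism game on $\Uc$ Adam then forces Eve's $\Uc$-run to have a rejecting $\Ac$-component and — as $w \notin \overline{\Lc}$ and $\Dc$ is deterministic — a rejecting $\Dc$-component, so her whole $\Uc$-run is rejecting on a word of $\Lc(\Uc) = \Sigma^{\omega}$. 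As statement (2) would force $\Uc$ to be history-deterministic, this is the desired contradiction.

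The one place I expect real difficulty is the step I deliberately left vague: extracting a \emph{Büchi} automaton from the product $\Ac\times\Dc$ without losing universality or semantic determinism. The product's acceptance condition is a disjunction of a Büchi condition (on the $\Ac$-component) and a co-Büchi condition (on the $\Dc$-component, if $\Dc$ is chosen co-Büchi for $\overline{\Lc}$), and such a disjunction is not a Büchi condition; the textbook conversion of a co-Büchi condition to Büchi form guesses the last accepting visit and drops to a rejecting sink on a wrong guess, but a rejecting sink has empty language while every state of a universal semantically-deterministic automaton must recognise $\Sigma^{\omega}$ — indeed the complement of a Büchi language need not be recognised by any semantically-deterministic Büchi automaton. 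I would circumvent this by first composing with a deterministic parity automaton for the combined acceptance condition, which yields a universal semantically-deterministic \emph{parity} automaton (this already gives the parity analogue, reducing the $2$-token conjecture to its universal case), and then, using that $\Ac$ is a Büchi automaton, argue that the acceptance index can be pushed back down to Büchi without introducing states of smaller language. That final reduction to Büchi acceptance is, I expect, where nearly all of the work lies.
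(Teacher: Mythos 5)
Your overall architecture is sound (contrapositive, build a universal SD automaton that inherits the $1$-token-game winner and the non-history-determinism from $\Ac$, contradict statement~2), and your two transfer arguments for the product are fine as far as they go. But the step you flag as difficult is not a technicality to be deferred: it is the whole theorem, and your proposed route through the complement does not close. A deterministic automaton $\Dc$ for $\overline{\Lc}$ cannot in general be taken to be B\"uchi (deterministic B\"uchi automata capture only the $G_\delta$ $\omega$-regular languages), so $\Dc$ must be parity, and the disjunction of the B\"uchi condition on the $\Ac$-component with the parity condition on the $\Dc$-component is a Rabin condition, not a B\"uchi or even a parity condition on the product's transitions. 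Converting that condition back to B\"uchi (or parity) acceptance requires either non-determinism or extra memory in the state space, and either one destroys exactly the properties your transfer arguments rely on: that the $\Dc$-components of the two tokens stay synchronised, that Eve's $\Ac$-strategy lifts verbatim, and that Adam's HD-game strategy on $\Ac$ still forces a rejecting run. There is no known index-reduction procedure that preserves non-history-determinism and the winner of the $1$-token game, and the claim that the index ``can be pushed back down to B\"uchi without introducing states of smaller language'' is precisely the unproved core of your argument. (A smaller issue: completing an SD automaton with a \emph{rejecting} sink is only language-preserving at states whose residual on the missing letter is empty, which SD-ness alone does not guarantee.)

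The paper sidesteps the acceptance-condition problem entirely by replacing the complement with a safety witness: by \cref{lemma:bhlp23}, $\Ac$ is not HD if and only if it fails to simulate some deterministic \emph{safety} automaton $\Sc$ with $\Lc(\Sc)\subseteq\Lc(\Ac)$. The universal automaton $\Uc$ is then the product of $\Ac$ with $\Sc$, completed by an \emph{accepting} sink $f$ reached exactly when the word leaves $\Lc(\Sc)$. Because $\Sc$ contributes no acceptance condition and the escape to $f$ is detected locally (a missing $\Sc$-transition), $\Uc$ is genuinely a B\"uchi automaton with priorities inherited from $\Ac$, is universal and SD, is non-HD because Adam can replay his winning strategy from the simulation game between $\Ac$ and $\Sc$, and a winning Adam strategy in the $1$-token game on $\Uc$ projects back to one on $\Ac$ since the $\Sc$-components of the two tokens coincide. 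Your construction cannot imitate this because membership in $\overline{\Lc}$, unlike escape from a safety language, is not locally detectable. If you want to salvage your write-up, the missing ingredient is \cref{lemma:bhlp23}; with it, the rest of your plan goes through essentially as the paper's proof does.
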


We shall use the following fact shown by Boker, Henzinger, Lehtinen, and Prakash to prove \cref{thm:universal-g1}~\cite{BHLP23}.

\begin{restatable}[\cite{BHLP23}]{lemma}{lemmabhlp}\label{lemma:bhlp23}
A non-deterministic parity automaton $\Ac$ is history-deterministic if and only if it simulates all deterministic safety automata $\Sc$ with $\Lc(\Sc) \subseteq \Lc(\Ac)$.
\end{restatable}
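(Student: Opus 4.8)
The forward implication is immediate from \cref{lemma:hd-implies-other-games}: a history-deterministic~$\Ac$ simulates every parity automaton whose language is contained in $\Lc(\Ac)$, and deterministic safety automata are a special case. All the work is in the backward implication, which I would establish in contrapositive form: assuming $\Ac$ is not history-deterministic, I produce a deterministic safety automaton $\Sc$ with $\Lc(\Sc)\subseteq\Lc(\Ac)$ that $\Ac$ does not simulate. A naive attempt --- fix a deterministic automaton for $\Lc(\Ac)$ and approximate it by ``bounded'' safety automata --- does not obviously work, since $\Lc(\Ac)$ need not be a union of safety languages contained in it; instead I extract $\Sc$ directly from a winning strategy of Adam in the history-determinism game.

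\textbf{Step 1: a finite-memory obstruction.} Assume without loss of generality that $\Ac$ is complete. The history-determinism game on $\Ac$ is an $\omega$-regular game on a finite arena (its positions are essentially the states of $\Ac$; the winning condition ``if the word is in $\Lc(\Ac)$ then Eve's run is accepting'' is $\omega$-regular, being trackable by a fixed deterministic parity automaton for $\Lc(\Ac)$ together with the parities of Eve's run). Hence the game is determined with finite-memory strategies, and since $\Ac$ is not history-deterministic, Adam has a finite-memory winning strategy $\tau$. Let $G_\tau$ be the finite graph of arena-positions-with-memory reachable when Adam plays $\tau$; its infinite paths are exactly the plays of the history-determinism game consistent with $\tau$, and I label each path by the word Adam plays along it. Because $\Ac$ is complete, $G_\tau$ has no dead ends, and along any path the $\Ac$-component is precisely the run Eve builds while Adam plays $\tau$.

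\textbf{Step 2: the safety automaton, and why it works.} Let $K$ be the set of words labelling infinite paths of $G_\tau$. This is a regular safety language (it is the set of $\omega$-words all of whose finite prefixes label a finite path of $G_\tau$), so it is recognised by some deterministic safety automaton $\Sc$, obtained for instance by the subset construction on $G_\tau$. I claim $\Sc$ has the two required properties. \emph{Language containment:} if $w\in K$ then $w$ labels a play of the history-determinism game consistent with $\tau$; as $\tau$ is winning for Adam, Eve loses this play, so in particular $w\in\Lc(\Ac)$, giving $\Lc(\Sc)=K\subseteq\Lc(\Ac)$. \emph{Failure of simulation:} in the simulation game between $\Ac$ and $\Sc$, Adam replays $\tau$ for his letters and emits the forced $\Sc$-transitions, feeding Eve's $\Ac$-transitions into $\tau$'s memory update exactly as her moves would in the history-determinism game. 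The resulting play projects onto a $\tau$-consistent play of the history-determinism game, whose word labels an infinite path of $G_\tau$ (here the absence of dead ends, coming from completeness of $\Ac$, guarantees the path is infinite), hence lies in $K=\Lc(\Sc)$, so the $\Sc$-run never gets stuck; and since $\tau$ wins the history-determinism game, Eve's $\Ac$-run is rejecting. So Eve loses this simulation play whatever she does, so $\Ac$ does not simulate $\Sc$, contradicting the hypothesis.

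\textbf{Main obstacle.} The crux is Step~1: justifying that the history-determinism game is a finite-arena $\omega$-regular game so that Adam's winning strategy can be taken finite-memory --- this is what makes $G_\tau$, and hence $\Sc$, a genuine finite automaton --- and then verifying that this strategy transfers verbatim to the simulation game against $\Sc$, since the letter and transition dynamics must coincide and the $\Sc$-run must stay alive, which is exactly where completeness of $\Ac$ and the no-dead-end property of $G_\tau$ are used.
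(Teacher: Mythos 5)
Your proposal is correct and follows essentially the same route as the paper's own proof: extract a finite-memory winning strategy for Adam in the history-determinism game, project its strategy graph onto the alphabet and determinise it into a deterministic safety automaton $\Sc$ with $\Lc(\Sc)\subseteq\Lc(\Ac)$, and then replay the same strategy in the simulation game to defeat Eve. Your $G_\tau$ is exactly the paper's transducer $\Mc$, and the subset construction and strategy-transfer arguments coincide.
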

For self-containment, we include a proof of \cref{lemma:bhlp23} in the appendix. We now prove~\cref{thm:universal-g1}.    

\begin{proof}[Proof sketch for \cref{thm:universal-g1}]
    It is clear that 1 implies 2. For the other direction, suppose 2 holds. Let $\Ac$ be a semantically-deterministic automaton that is not HD. We will show that Adam wins the $1$-token game on $\Ac$. 
    
    From \cref{lemma:bhlp23}, we know that there is a deterministic safety automaton $\Sc$ such that $\Ac$ does not simulate $\Sc$ and $\Lc(\Sc)\subseteq \Lc(\Ac)$. Consider the product safety automaton $\Pc$ of $\Sc$ and $\Ac$ which recognises the language $\Lc(\Pc)=\Lc(\Sc)$. We then complete $\Pc$ by adding an accepting sink state $f$, and transitions to $f$ from all states $q$ on letters $a$ such that $q$ did not have an outgoing transition on $a$ in $\Pc$. We call this automaton $\Uc$. It is clear that $\Lc(\Uc,p)=\Sigma^{\omega}$ for all states $p$ in $\Uc$, and hence $\Uc$ is SD. We show that Adam wins the HD game on $\Uc$, by using his winning strategy in the simulation game between $\Ac$ and $\Sc$ (recall that $\Pc$ was constructed by taking product of $\Sc$ and $\Ac$). The hypothesis implies that Adam wins the $1$-token game on $\Uc$. We then show that we can adapt a winning strategy for Adam on $1$-token game of $\Uc$ to one for the $1$-token game on $\Ac$, by simply `projecting' his strategy to the $\Ac$ component: note that since $\Sc$ is deterministic, in plays of the $1$-token game on $\Uc$, Eve's and Adam's states have the same $\Sc$-component at the start of each round.
\end{proof}

An almost word-by-word identical proof to above also shows that the $2$-token conjecture can be reduced to the case where the automata are universal.

\begin{theorem}\label{thm:universal-g2}
        The following statements are equivalent:
    \begin{enumerate}
        \item For any non-deterministic parity automaton $\Ac$, Eve wins the 2-token game on $\Ac$ if and only if $\Ac$ is history-deterministic.
        \item For any non-deterministic parity automaton $\Uc$ with $\Lc(\Uc) = \Sigma^{\omega}$, Eve wins the 2-token game on $\Uc$ if and only if $\Uc$ is history-deterministic.
    \end{enumerate}
\end{theorem}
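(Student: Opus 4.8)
The plan is to mimic, essentially word-for-word, the proof of \cref{thm:universal-g1}, replacing the $1$-token game by the $2$-token game throughout and dropping the hypothesis of semantic determinism (which is never needed for this reduction). The implication $1 \Rightarrow 2$ is trivial, so assume statement~$2$ holds and let $\Ac$ be a non-deterministic parity automaton that is not HD; we must show Adam wins the $2$-token game on $\Ac$.

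First I would invoke \cref{lemma:bhlp23} to obtain a deterministic safety automaton $\Sc$ with $\Lc(\Sc) \subseteq \Lc(\Ac)$ that $\Ac$ does not simulate. Next I build the universal automaton $\Uc$ exactly as in the sketch above: take the product safety automaton $\Pc$ of $\Sc$ and $\Ac$ (so $\Lc(\Pc) = \Lc(\Sc)$), then complete $\Pc$ with an accepting sink $f$ reached by all previously missing transitions. Then $\Lc(\Uc, p) = \Sigma^\omega$ for every state $p$, so $\Uc$ is universal (and hence vacuously SD, though we do not need this for the $2$-token statement). Using Adam's winning strategy in the simulation game between $\Ac$ and $\Sc$, one checks that Adam wins the HD game on $\Uc$: he plays the letters dictated by that strategy, and whenever Eve's $\Uc$-run avoids the sink $f$ forever, it projects to a run of $\Ac$ that is rejecting while the $\Sc$-component stays alive, so the word is in $\Lc(\Sc) \subseteq \Lc(\Uc)$ but Eve's run is rejecting; if Eve ever falls into $f$, Adam just keeps the $\Sc$-component alive and again produces a word in $\Lc(\Sc)$ on which Eve's run is not accepting (the sink in a parity setting is given an even self-loop, so this case must be handled by noting that Eve can only reach $f$ by taking a transition that does not exist in $\Pc$, i.e. by ``cheating'' on the safety component, which Adam can force not to happen by choosing letters for which $\Ac$, and hence the product, does have a transition). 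Since $\Uc$ is not HD, statement~$2$ applied to $\Uc$ gives that Adam wins the $2$-token game on $\Uc$.

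It remains to transfer Adam's winning strategy from the $2$-token game on $\Uc$ to the $2$-token game on $\Ac$. The key observation, identical to the $1$-token case, is that because $\Sc$ is deterministic, in any play of the $2$-token game on $\Uc$ all three tokens (Eve's token and Adam's two tokens) carry the same $\Sc$-component at the start of each round, as long as none has entered the sink. Adam plays on $\Ac$ by simulating, in his head, a play of the $2$-token game on $\Uc$: he maintains the common $\Sc$-component himself, feeds Eve's $\Ac$-transitions (paired with that $\Sc$-component) to his $\Uc$-strategy, and projects the $\Uc$-transitions it returns back down to their $\Ac$-components for his two tokens. One must check that Adam's $\Uc$-strategy never needs to move any token into $f$: it cannot benefit from doing so, since $f$ is accepting for the $\Uc$-parity condition and Adam's winning condition requires one of his runs to be accepting, so entering $f$ with Eve's token would only help Eve — and in fact a winning Adam strategy on $\Uc$ can be assumed to keep all tokens out of $f$, because $\Uc$ restricted to non-sink states is exactly $\Pc$ extended on $\Ac$. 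With this, a play consistent with Adam's derived strategy on $\Ac$ lifts to a play consistent with his winning $\Uc$-strategy having the same priority sequences on all runs, so Adam wins the $2$-token game on $\Ac$.

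The main obstacle, and the only place where the parity setting differs from the safety/Büchi intuition of the sketch, is handling the accepting sink $f$ cleanly in the priority framework: one has to argue that Adam's winning strategy on $\Uc$ can be taken to keep every token out of $f$ (equivalently, that the reduction could have been stated with $\Uc$ having a \emph{rejecting} sink without breaking universality — but a rejecting sink would make $\Uc$ non-universal, so instead we argue that reaching $f$ is never forced on and never beneficial to Adam). Once that bookkeeping is done, the projection argument is routine and textually parallel to \cref{thm:universal-g1}.
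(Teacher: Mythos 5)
Your proposal is correct and follows essentially the same route as the paper: the paper explicitly proves \cref{thm:universal-g2} by observing that the proof of \cref{thm:universal-g1} goes through word-for-word with the $1$-token game replaced by the $2$-token game, using the same construction of $\Uc$ from \cref{lemma:bhlp23}, the product with the deterministic safety automaton $\Sc$, the accepting sink $f$, and the projection argument exploiting that all tokens share the same $\Sc$-component. Your extra care about keeping tokens out of $f$ matches the paper's observation that a winning Adam strategy never picks a letter allowing Eve to escape to the accepting sink.
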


\section{When 1-Token Game is Enough}
In this section, we will show the following result.

\thmSDbuchiautomata*

Towards this, we first introduce $k$-lookahead games, which are variants of 1-token games where Adam is given a lookahead of $k$.

\subsection{Lookahead Games}
Let us briefly recall  how a round of the $1$-token game on a parity automaton $\Ac$ works. In each round, Adam selects a letter, then Eve selects a transition on that letter on her token, and then Adam selects a transition on that letter on his token. The winning condition for Eve is that at the end of the play, either Eve's run is accepting or Adam's run is rejecting. This is very close to the simulation game between $\Ac$ and itself, except that the order of the moves in which Eve and Adam select transitions has been reversed. One can, however, see the $1$-token game as a simulation game, where Adam picks the transition for round $i$ in round ($i+1$). Or equivalently, we can construct an automaton $\delay(\Ac)$ such that any non-determinism on $\Ac$ is `delayed' by one step, and then the $1$-token game on $\Ac$ is equivalent to the simulation game between $\Ac$ and $\delay(\Ac)$.     
This insight was used by Prakash and Thejaswini to give an algorithm for deciding history-determinism of one-counter nets, by reducing the $1$-token game to a simulation game~\cite{PT23}. Below we give a construction $\delay$ on parity automata that delays the non-determinism by one-step, inspired by a similar construction for one-counter nets~\cite[Lemma 11]{PT23}.  

\begin{definition}\label{def:delayinf}
For any non-deterministic parity automaton $\Ac = (Q,\Sigma,q_0,\Delta)$, the automaton $\delay(\Ac)$ is constructed so that it runs `one letter behind' $\Ac$, by storing a letter in its state space. More formally, 
$\delay(\Ac) = (Q',\Sigma,s,\Delta')$, 
where $Q' = Q\times \Sigma \cup \{s\}$, and $s$ is the initial state. The set of transitions $\Delta'$ is the union of the following sets of transitions. \begin{enumerate}
    \item $\{(s\xrightarrow{a:0}(q_0,a)) \mid a \in \Sigma\}$. \item $\{((p,a)\xrightarrow{a':c}(q,a')) \mid \ (p\xrightarrow{a:c}q) \in \Delta \}$.
\end{enumerate}

\end{definition}

Observe that $\delay(\Ac)$ accepts the same language as $\Ac$. The following lemma is easy to prove, since the expanded game arenas of the $1$-token game on $\Ac$, and the simulation game between $\Ac$ and $\delay(\Ac)$ are equivalent, with identical winning conditions.
\begin{lemma}\label{lemma:g1todelaysim}
    For every non-deterministic parity automaton $\Ac$, Eve wins the $1$-token game on $\Ac$ if and only if $\Ac$ simulates $\delay(\Ac)$.
\end{lemma}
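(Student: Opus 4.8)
The plan is to show that the two games are ``the same game up to relabelling of positions and moves'', so that a winning strategy for either player transfers verbatim. Concretely, I would unfold both the $1$-token game on $\Ac$ and the simulation game between $\Ac$ and $\delay(\Ac)$ into their explicit extensive-form arenas and exhibit a bijection between plays that preserves whose turn it is, the available moves at each decision point, and the winning condition. The key observation is that a state $(p,a)$ of $\delay(\Ac)$ records exactly the information ``Adam's token is at $p$ and the last letter Adam played was $a$, on which Adam still owes a transition''. Thus a position of the simulation game after Eve has just moved, of the form $\big(p_{i},(q_{i},a_{i})\big)$ together with the pending obligation, corresponds precisely to a position of the $1$-token game in which the round-$i$ letter $a_i$ has been announced, Eve has committed to $q_i\xrightarrow{a_i}q_{i+1}$, and Adam is about to pick his transition $p_i\xrightarrow{a_i}p_{i+1}$.

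The steps, in order, are: (1) spell out one round of the simulation game between $\Ac$ and $\delay(\Ac)$, noting that Adam picks a letter $b$, then Adam picks a transition of $\Ac$ from $p_i$ on $b$, then Eve picks a transition of $\delay(\Ac)$ from $(q_i,a_i)$ on $b$ --- but by \cref{def:delayinf} a transition of $\delay(\Ac)$ from $(q_i,a_i)$ on $b$ is exactly a choice of an $\Ac$-transition $q_i\xrightarrow{a_i:c}q_{i+1}$, which lands in $(q_{i+1},b)$; so Eve's move concerns the \emph{previous} letter $a_i$ while Adam's concerns the \emph{new} letter $b=a_{i+1}$. (2) Re-bracket the rounds: group ``Adam picks $a_{i+1}$, Eve picks her $\Ac$-transition on $a_i$, Adam picks his $\Ac$-transition on $a_{i+1}$'' by pairing Eve's move on $a_i$ with Adam's move on $a_i$ from the previous bracket; this yields exactly the round structure of the $1$-token game (Adam picks $a_i$, Eve picks her transition on $a_i$, Adam picks his transition on $a_i$), with a harmless one-step offset at the very start handled by the initial transitions $s\xrightarrow{a:0}(q_0,a)$ (priority $0$, hence irrelevant to the $[1,2]$ or general parity condition since it occurs only once). (3) Check the winning conditions match: in the simulation game Eve wins iff her $\delay(\Ac)$-run is accepting or Adam's $\Ac$-run is rejecting; the $\delay(\Ac)$-run visits priorities $0, c_0, c_1, c_2,\dots$ where $c_i$ is the priority of Eve's chosen $\Ac$-transition in round $i$, and the initial $0$ changes neither the $\limsup$ of priorities nor, for B\"uchi/co-B\"uchi, the set of priorities seen infinitely often; so her $\delay(\Ac)$-run is accepting iff the corresponding $\Ac$-run is accepting, matching the $1$-token condition. (4) Conclude that the arenas are isomorphic as games, hence have the same winner, giving both directions of the ``if and only if'' at once.

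The main obstacle --- though it is more bookkeeping than depth --- is handling the one-step offset cleanly: in the simulation game Adam commits to the letter $a_{i+1}$ \emph{before} Eve responds to $a_i$, whereas in the $1$-token game Adam commits to $a_i$ \emph{before} Eve responds to it, so the naive round-by-round correspondence is shifted. One must argue that the extra ``lookahead of one letter'' that Adam appears to gain in $\delay(\Ac)$ is illusory because it is exactly compensated by $\delay(\Ac)$ being forced to emit the previous round's priority; the bracketing in step (2) makes this precise. A secondary point to get right is the treatment of partial runs and the initial $s\xrightarrow{a:0}(q_0,a)$ transition, and the (routine) verification that ``Eve wins a play'' is a tail property so the finite prefix mismatch is immaterial. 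Since the statement only asks for equivalence of \emph{winners}, I would not even need to transfer strategies explicitly --- an isomorphism of the extensive-form arenas with matching winning conditions suffices, by determinacy of $\omega$-regular games already cited in the preliminaries --- but exhibiting the strategy transfer directly is equally short and perhaps more transparent.
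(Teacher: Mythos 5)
Your overall plan --- unfold both games into their expanded arenas and match them round for round, absorbing the one-step offset into the letter stored in the state of $\delay(\Ac)$ --- is exactly the (one-line) argument the paper gives for this lemma. However, your instantiation of the simulation game puts the players on the wrong automata, and this is not a cosmetic slip: it changes which game you are analysing. By the paper's convention, ``$\Ac$ simulates $\delay(\Ac)$'' means Eve wins the simulation game in which \emph{Eve} plays on $\Ac$ (the simulating automaton) and \emph{Adam} plays on $\delay(\Ac)$ (the simulated one). In your step (1) you instead have Adam picking a transition of $\Ac$ on the fresh letter and Eve picking a transition of $\delay(\Ac)$, i.e.\ an $\Ac$-transition on the \emph{previous} letter. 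That is the game for ``$\delay(\Ac)$ simulates $\Ac$'', in which Eve responds to each letter only after having seen Adam's transition on it; Eve wins that game on \emph{every} automaton simply by copying Adam's previous move (the paper uses exactly this observation in the proof of \cref{lemma:dual-simulation-transitivity}), so it cannot be equivalent to the $1$-token game. Your step (2) then silently reorders Eve's move on $a_i$ to come \emph{before} Adam's transition on $a_i$; but who commits first on each letter is precisely the information-theoretic content of the lemma, and cannot be obtained by mere re-bracketing of the game you actually set up.

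The correct correspondence is the mirror image: Eve, playing on $\Ac$, must commit to her transition on $a_i$ in round $i$, while Adam's $\delay(\Ac)$-state $(p_i,a_i)$ records that \emph{his} $\Ac$-transition on $a_i$ is still owed and is only revealed in round $i+1$ --- hence on every letter Eve moves before Adam, exactly as in the $1$-token game. (Your ``key observation'' paragraph states this correctly before the later steps contradict it.) With the roles fixed, the rest of your outline goes through: the only residual discrepancy is that in the simulation game Adam announces $a_{i+1}$ before revealing his transition on $a_i$, whereas in the $1$-token game this order is reversed; since these are two consecutive moves of the same player with independent option sets, swapping them does not affect the winner, and your treatment of the initial priority-$0$ transition and of the matching of winning conditions is fine.
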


Furthermore, we can also show that Eve wins the $1$-token game on $\delay(\Ac)$ if Eve wins the $1$-token game on $\Ac$, by simply `delaying' her winning strategy in the $1$-token game on  $\Ac$. 
\begin{lemma}\label{lemma:g1ondelay}
    If Eve wins the $1$-token game on an automaton $\Ac$, then Eve wins the $1$-token game on $\delay(\Ac)$.
\end{lemma}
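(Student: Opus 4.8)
The plan is to let Eve, in the $1$-token game on $\delay(\Ac)$, shadow a play of the $1$-token game on $\Ac$ in which she follows a fixed winning strategy $\sigma$, with the shadow play running exactly one round behind.

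I would start by unpacking \cref{def:delayinf}: a run of $\delay(\Ac)$ on a word $a_0a_1a_2\cdots$ has the form $s \xrightarrow{a_0:0} (q_0,a_0)\xrightarrow{a_1:c_0}(q_1,a_1)\xrightarrow{a_2:c_1}(q_2,a_2)\cdots$, and the map sending it to $q_0 \xrightarrow{a_0:c_0} q_1 \xrightarrow{a_1:c_1} q_2\cdots$ is a bijection onto the runs of $\Ac$ on the same word; since the sole extra transition has priority $0$, the $\delay(\Ac)$-run is accepting if and only if the corresponding $\Ac$-run is. Note also that the opening round of the $1$-token game on $\delay(\Ac)$ is forced: from $s$ the only $a_0$-transition goes to $(q_0,a_0)$, so after round $0$ both tokens sit at $(q_0,a_0)$, which will correspond to both runs of the shadowed $\Ac$-game starting at $q_0$.

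Next I would describe Eve's strategy. She maintains a virtual play of the $1$-token game on $\Ac$, updated so that virtual round $r$ is resolved during round $r+1$ of the $\delay(\Ac)$ game. In round $r+1$, after Adam reveals $a_{r+1}$, Eve feeds $a_r$ (revealed one round earlier) to the virtual game as Adam's letter for round $r$, queries $\sigma$ for Eve's transition $e_r = (q_r \xrightarrow{a_r:c_r} q_{r+1})$, and plays the matching $\delay(\Ac)$-transition $(q_r,a_r) \xrightarrow{a_{r+1}:c_r}(q_{r+1},a_{r+1})$ — which exists by \cref{def:delayinf} and whose source is exactly where her $\delay(\Ac)$-token sits by induction. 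When Adam then answers with $(p_r,a_r)\xrightarrow{a_{r+1}:c'_r}(p_{r+1},a_{r+1})$, Eve records $\delta_r = (p_r \xrightarrow{a_r:c'_r} p_{r+1})$ as Adam's transition in virtual round $r$. The point to check carefully is that this is legitimate: everything $\sigma$ consults to produce $e_r$ — the letters $a_0,\dots,a_r$ and the transitions $e_0,\dots,e_{r-1},\delta_0,\dots,\delta_{r-1}$ — is already fixed at that moment, and the move order agrees, since in both games Eve commits to $e_r$ strictly before Adam commits to $\delta_r$. (Eve also knows $a_{r+1}$, but discards it.)

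It then remains to see that the strategy wins. By construction Eve's run in $\delay(\Ac)$ projects to the Eve-run of the virtual $\Ac$-game and Adam's run projects to the Adam-run, with identical priority sequences modulo the harmless initial $0$, so acceptance transfers both ways. Hence if Adam's $\delay(\Ac)$-run is accepting, so is his virtual $\Ac$-run; since $\sigma$ is winning in the $1$-token game on $\Ac$, Eve's virtual $\Ac$-run is then accepting, and therefore so is Eve's $\delay(\Ac)$-run. I expect the only genuine subtlety to be the off-by-one bookkeeping together with the verification that the information available and the move order in the delayed game are compatible with running $\sigma$; the rest is a mechanical unfolding of the definition of $\delay$.
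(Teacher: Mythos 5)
Your proof is correct and is exactly the argument the paper has in mind: the paper states this lemma with only the one-line justification that Eve ``delays'' her winning strategy from the $1$-token game on $\Ac$, and your write-up is a careful unfolding of that same idea (the run bijection, the one-round-behind virtual play, and the check that the information and move order available in round $r+1$ of the game on $\delay(\Ac)$ suffice to query $\sigma$ for virtual round $r$). No gaps; the off-by-one bookkeeping and the observation that the initial priority-$0$ transition is irrelevant to acceptance are handled correctly.
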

An iterative application of ~\cref{lemma:g1todelaysim,lemma:g1ondelay} gives us the following corollary.
\begin{corollary}\label{cor:iterativedelay}
If Eve wins the $1$-token game on a parity automaton $\Ac$, then $\Ac$ simulates $\delay^k(\Ac)$ for all $k\in \mathbb{N}$.
\end{corollary}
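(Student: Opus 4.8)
The plan is to prove the statement by induction on $k$, using \cref{lemma:g1todelaysim,lemma:g1ondelay} together with transitivity of the simulation preorder. First I would observe that, since $\delay(\Bc)$ recognises the same language as $\Bc$ for every parity automaton $\Bc$, all the automata $\delay^k(\Ac)$ lie in $\Fc_{\Lc(\Ac)}$, so simulation between any two of them is governed by the simulation preorder $\preceq$, which is in particular transitive.

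For the base case $k = 0$ we have $\delay^0(\Ac) = \Ac$, and $\Ac \preceq \Ac$ holds by reflexivity of $\preceq$ (Eve wins the simulation game between $\Ac$ and itself by copying Adam's transitions). For the inductive step, assume $\Ac$ simulates $\delay^k(\Ac)$. A repeated application of \cref{lemma:g1ondelay}, starting from the hypothesis that Eve wins the $1$-token game on $\Ac$, shows that Eve wins the $1$-token game on $\delay^k(\Ac)$. Applying \cref{lemma:g1todelaysim} with $\delay^k(\Ac)$ in place of $\Ac$, we obtain that $\delay^k(\Ac)$ simulates $\delay(\delay^k(\Ac)) = \delay^{k+1}(\Ac)$. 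Combining $\Ac \preceq \delay^k(\Ac)$ (the induction hypothesis) with $\delay^k(\Ac) \preceq \delay^{k+1}(\Ac)$ via transitivity yields $\Ac \preceq \delay^{k+1}(\Ac)$, which completes the induction.

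There is no serious obstacle here; the only points that require a little care are, first, verifying that \cref{lemma:g1ondelay} can indeed be chained to transfer a winning strategy for Eve in the $1$-token game from $\Ac$ to each $\delay^k(\Ac)$, and second, that transitivity of simulation is genuinely applicable, which holds precisely because every automaton involved recognises $\Lc(\Ac)$ and hence sits inside $\Fc_{\Lc(\Ac)}$, where $\preceq$ was shown to be a preorder.
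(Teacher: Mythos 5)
Your proof is correct and follows essentially the same route as the paper's: iterate \cref{lemma:g1ondelay} to get that Eve wins the $1$-token game on each $\delay^k(\Ac)$, apply \cref{lemma:g1todelaysim} to obtain $\delay^k(\Ac) \preceq \delay^{k+1}(\Ac)$, and chain these by transitivity of simulation. The only cosmetic difference is that you package the argument as an explicit induction on $k$ and route transitivity through $\Fc_{\Lc(\Ac)}$, whereas the paper invokes transitivity of simulation directly; the content is identical.
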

\begin{proof}
    Note that simulation relation is transitive, i.e., if $\Ac_0$ simulates $\Ac_1$ and $\Ac_1$ simulates $\Ac_2$, then $\Ac_0$ simulates $\Ac_2$. 
    Suppose Eve wins the $1$-token game on $\Ac$. From~\cref{lemma:g1ondelay}, induction gives us that  Eve wins 
    the 1-token game on $\delay^k(\Ac)$ for all $k \in \mathbb{N}$. From~\cref{lemma:g1todelaysim}, we see that $\delay^{k}(\Ac)$ simulates $\delay^{k+1}(\Ac)$ for all $k \in \mathbb{N}$. 
    Combining this with transitivity of simulation, we get that $\Ac$ simulates $\delay^k(\Ac)$ for all $k \in \mathbb{N}$.
\end{proof}

Call the simulation game between $\Ac$ and $\delay^k(\Ac)$ as the \emph{$k$-lookahead game} on $\Ac$. Note that the $1$-token game of $\Ac$ is then equivalent to the $1$-lookahead game of $\Ac$. \cref{cor:iterativedelay} can thus be restated as the following theorem.

\thmlookahead*

\subsection{Games to Characterise History-Determinism}

We now proceed to show \cref{thm:main}. From \cref{thm:universal-g1}, we know that it suffices to only consider SD B\"uchi automata that are universal. 
The following lemma shows that every universal SD B\"uchi automaton is history-deterministic with sufficient lookahead.

\begin{restatable}{lemma}{universaldelay}\label{lemma:universal-delay-hd}
    Let $\Uc$ be a semantically-deterministic B\"uchi automata such that $\Lc(\Uc)=\Sigma^{\omega}$. Then, there is a $K$ such that $\delay^K(\Uc)$ is history-deterministic.
\end{restatable}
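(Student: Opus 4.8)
\textbf{Proof plan for \cref{lemma:universal-delay-hd}.}

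The plan is to analyse what it means for Eve to lose the history-determinism game on $\delay^K(\Uc)$ and derive a bound on $K$ beyond which this is impossible. Since $\Uc$ is universal, the winning condition in the HD game on $\delay^K(\Uc)$ simply requires Eve to build an accepting (i.e., B\"uchi) run on \emph{every} word Adam plays; rejection of the word is never an escape hatch. Thus the lemma is equivalent to: for $K$ large enough, after Adam has committed to the first $K$ letters, Eve can always navigate $\Uc$ so as to see infinitely many accepting transitions, \emph{provided} she is always told the letters $K$ steps in advance. The key resource is the lookahead: in $\delay^K(\Uc)$, a position encodes a state of $\Uc$ together with a buffer of the last $K$ letters, so Eve effectively chooses her transition for round $i$ only after seeing letters $a_i, a_{i+1}, \dots, a_{i+K-1}$.

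First I would fix the quantity that the lookahead buys us. For a state $q$ of $\Uc$ and a finite word $u$, say $u$ is \emph{good from $q$} if there is a finite run from $q$ on $u$ that visits an accepting transition. Because $\Uc$ is a universal B\"uchi automaton and every state accepts $\Sigma^\omega$ (by semantic-determinism, \cref{lemma:SDautomata}), I would argue there is a uniform bound $N$ such that from every state $q$ there is a good word of length at most $N$: indeed, from $q$ there is \emph{some} accepting run on \emph{some} word, and by a standard pumping/pigeonhole argument on the $|Q|$ states one can extract a good word of length at most $|Q|^2$ or so. The natural choice is then $K = N$ (or a small multiple of it). The strategy for Eve in the HD game on $\delay^K(\Uc)$: maintain a ``plan'' which is a finite run segment witnessing goodness of the currently buffered word; follow it transition by transition; whenever the plan is exhausted, the buffer again contains $K \ge N$ fresh letters, so a new good run segment from the current state exists, and she adopts it. Since each plan contributes at least one accepting transition and plans are consumed infinitely often, Eve's run is B\"uchi-accepting.

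The main obstacle is making the bookkeeping between $\Uc$ and $\delay^K(\Uc)$ precise: a position of $\delay^K(\Uc)$ carries a length-$K$ letter buffer, and Eve's choice of transition there corresponds to committing to the \emph{oldest} buffered letter while the newest one has just arrived, so I must check that ``the buffer contains $K$ unread letters exactly when a plan is exhausted'' is maintained as an invariant, and that the $\delay$ construction's bookkeeping transitions (the priority-$0$ initialisation edges from $s$, and the shift edges) do not interfere with counting accepting transitions — they have priority $0$, hence are irrelevant for the B\"uchi condition. A secondary point is that I only need a good word of bounded length from each \emph{reachable} state, and that $\delay^K$ applied $K$ times genuinely grants a lookahead of $K$ letters, which follows by an easy induction from \cref{def:delayinf} (the lookahead composes additively). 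Once the invariant is set up, the correctness argument is the short pumping argument above, so I expect the bound extraction and the invariant statement to be the only delicate parts; everything else is routine.
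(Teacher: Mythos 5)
There is a genuine gap in your key combinatorial claim, and it is a quantifier error rather than a bookkeeping issue. Your strategy for Eve needs the statement: for every reachable state $q$ and \emph{every} word $u$ of length at least $K$ that Adam may have buffered, there is a run from $q$ on $u$ visiting an accepting transition. What you actually assert and justify is the much weaker existential statement: from every state $q$ there \emph{exists} some good word of length at most $N$. Since Adam, not Eve, chooses the letters, the existence of some short good word buys nothing — the buffered word is an arbitrary word of Adam's choosing, and nothing in your pumping argument shows that \emph{that} word admits a run with an accepting transition from Eve's current state. Your strategy description silently upgrades the existential to a universal at the step ``the buffer again contains $K \ge N$ fresh letters, so a new good run segment from the current state exists.''

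The universal version is what the paper proves, and it needs a different argument and a different bound. Take $K = 2^n$ and run the subset construction along Adam's buffered word $u$: among the sets $\{q\} = S_0, S_1, \dots, S_K$ of states reachable on the prefixes of $u$, two must coincide, say $S_i = S_j$ with $u' = a_1\cdots a_i$ and $v = a_{i+1}\cdots a_j$. Because $\Lc(\Uc,q) = \Sigma^{\omega}$, the ultimately periodic word $u'v^{\omega}$ has an accepting run from $q$, which must take an accepting transition inside some $v$-block; since the reachable sets cycle on $v$, that block starts at a state of $S_i$, and splicing yields a run from $q$ on the prefix $u'v$ of $u$ that passes through an accepting transition. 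Note that your proposed bound of roughly $|Q|^2$, obtained by pumping on states, is tied to the existential version; for the universal version the pigeonhole is on subsets of states, giving $K = 2^n$. The rest of your plan — the buffer invariant, the harmlessness of the priority-$0$ initialisation edges, and the additive composition of lookahead under iterated $\delay$ — matches the paper's strategy and is fine once the claim is corrected.
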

\begin{proof}[Proof sketch]  We let $K=2^n$, where $n$ is the number of states of $\Ac$. The crucial observation is that since $\Lc(\Uc,q)=\Sigma^{\omega}$ for any state $q$ in $\Uc$, any finite word $u$ that has length at least $2^n$ must have a run from $q$ that passes through an accepting transition. Eve thus wins the history-determinism game on $\Uc$ by exploiting the lookahead in $\delay^K(\Ac)$ to take at least one accepting transition every $K$ steps. The run of Eve's token then has infinitely many accepting transitions and hence is accepting, as desired.
\end{proof}

We can now prove \cref{thm:main}. 
\begin{proof}[Proof of \cref{thm:main}]
    The forward implication is clear by \cref{lemma:hd-implies-other-games}. For the backward direction, suppose that Eve wins the 1-token game on $\Ac$. Due to \cref{thm:universal-g1}, we may assume that $\Ac$ is universal. 
    Then, from \cref{lemma:universal-delay-hd}, we know that there is a $K$ such that $\delay^K(\Ac)$ is history-deterministic. 
    If Eve wins the 1-token game on $\Ac$, then from \cref{lemma:g1todelaysim}, we know that $\Ac$ simulates $\delay^K(\Ac)$. 
    But since $\delay^K(\Ac)$ is language equivalent to $\Ac$, we get from \cref{corollary:simulationofhd-implies-hd} that $\Ac$ is HD as well.
\end{proof}

Having shown that the $1$-token game on a semantically-deterministic B\"uchi automaton $\Ac$ is equivalent to the HD game on $\Ac$, we are able to show that Joker games characterise history-determinism on B\"uchi automata. We also get an alternate proof of Bagnol and Kuperberg's result of $2$-token games characterising history-determinism of B\"uchi automata as a corollary. 

\begin{restatable}{theorem}{thmcorjokergame}\label{cor:joker-game}\label{cor:2token-game}
    For every B\"uchi automaton $\Ac$, the following statements are equivalent.
        \begin{enumerate}
            \item $\Ac$ is history-deterministic.
            \item Eve wins the Joker game on $\Ac$.
            \item Eve wins the $2$-token game on~$\Ac$.
        \end{enumerate}
\end{restatable}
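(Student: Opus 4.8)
The plan is to establish the cycle of implications $1 \Rightarrow 3 \Rightarrow 2 \Rightarrow 1$, using \cref{thm:main} (the $1$-token game characterisation for SD B\"uchi automata) as the central tool, after first reducing to the SD case. The implication $1 \Rightarrow 3$ is immediate from \cref{lemma:hd-implies-other-games}, and $3 \Rightarrow 2$ follows by a standard strategy-stealing argument: a winning strategy for Eve in the $2$-token game yields one in the Joker game, because whenever Adam plays a Joker in round $i$, Eve can from that point on track Adam's ``Joker run'' on one of her two virtual Adam-tokens in the $2$-token game, and since only finitely many Jokers are played, eventually one token carries a run that agrees with Adam's Joker-run forever; if that run satisfies the parity condition, Eve's run is accepting. (One must be slightly careful that after a Joker, the Joker-run restarts from Eve's current state, but since Eve is free to feed her own current transition as one of Adam's token-moves in the $2$-token game, the two token-runs can be made to ``cover'' all of Adam's behaviour.)

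The substance is in $2 \Rightarrow 1$. First I would reduce to the case where $\Ac$ is semantically-deterministic: given an arbitrary B\"uchi automaton $\Ac$, pass to its ``semantic-determinisation'' — the automaton $\Ac^{sd}$ obtained by deleting all non-language-preserving transitions (keeping only those $\delta$ from $p$ to $q$ on $a$ with $\Lc(\Ac,q) = a^{-1}\Lc(\Ac,p)$). One checks that $\Ac^{sd}$ is SD, recognises the same language, and that Eve winning the Joker game on $\Ac$ implies Eve winning the Joker game on $\Ac^{sd}$ — indeed, in the Joker game Eve can afford to restrict herself to language-preserving transitions (reading a letter never leaves the residual language one is forced into), and Adam's non-Joker moves on his token can likewise be assumed language-preserving since otherwise his run drops out of the language and cannot satisfy the parity condition while tracking a word in $\Lc(\Ac)$... actually, more simply: $\Ac^{sd}$ is HD iff $\Ac$ is, by \cref{corollary:simulationofhd-implies-hd} together with the fact that an HD automaton and its semantic-determinisation are mutually simulating, so it suffices to prove the Joker-game characterisation for $\Ac^{sd}$.

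Now, with $\Ac$ assumed SD, the key step is: \emph{if Eve wins the Joker game on an SD B\"uchi automaton $\Ac$, then Eve wins the $1$-token game on $\Ac$}; combined with \cref{thm:main} this gives that $\Ac$ is HD. The intuition is that the Joker game is at least as hard for Eve as the $1$-token game — the $1$-token game is just the Joker game where Adam never plays Joker — so a Joker-game winning strategy for Eve is, a fortiori, a $1$-token-game winning strategy. Concretely, Eve plays her Joker-game strategy against an Adam who happens never to use his Joker power; the winning condition of the Joker game then reduces exactly to: if Adam's (honest) run satisfies the parity condition, Eve's run is accepting — which is precisely the $1$-token winning condition. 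So this implication is essentially definitional. The main obstacle I anticipate is therefore not $2 \Rightarrow 1$ itself but making the reduction to the SD case fully rigorous — in particular verifying that the Joker game transfers cleanly between $\Ac$ and $\Ac^{sd}$, since in the Joker game Adam's token may start in a state of $\Ac$ that is pruned in $\Ac^{sd}$; one handles this by noting that after any Joker, Adam's token is at a state reachable in $\Ac^{sd}$ (Eve's states stay in $\Ac^{sd}$), and before the first Joker the token-run is irrelevant to the winning condition unless it satisfies parity, in which case a standard pumping argument lets Eve assume Adam played only language-preserving transitions there too. Once this is in place, the chain $2 \Rightarrow 1 \Rightarrow 3 \Rightarrow 2$ closes and, via \cref{thm:main}, we also recover Bagnol--Kuperberg's $2$-token characterisation for B\"uchi automata as the equivalence $1 \Leftrightarrow 3$.
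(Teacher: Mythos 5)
Your implication $2 \Rightarrow 1$ is exactly the paper's argument: prune non-language-preserving transitions (the paper's \cref{lemma:BK18residual}, after Bagnol--Kuperberg), observe that a Joker-game win transfers to the pruned SD subautomaton and that a Joker-game win trivially implies a $1$-token-game win, then invoke \cref{thm:main}. The gap is in your step $3 \Rightarrow 2$. The strategy-stealing argument you sketch does not survive a third Joker: a virtual Adam-token in the $2$-token game can only ``branch off to track Adam's Joker run'' if it is currently sitting at Eve's state $q_i$, since the Joker transition is taken from $q_i$; and a token that is not at Eve's current state cannot ``copy Eve's transition'' to catch up, because that transition is simply not available from its state. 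So after the first Joker one token leaves Eve's trajectory to follow Adam, after the second Joker the other one does, and at the third Joker no token is positioned to take the transition Adam just played. Since Adam may play any finite number of Jokers, no scheme with a bounded number of tokens can track his run this way, and even appealing to \cref{thm:BK18-2tg-equals-ktg} does not help because $k$ is fixed before the play begins. There is also a secondary reason to be suspicious of any purely game-theoretic reduction between these two games: on the automaton of \cref{fig:counterexample} Eve wins the Joker game but loses the $2$-token game, so the two games are genuinely incomparable at the level of strategies and any equivalence between them must pass through properties of the automaton.

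The repair is to abandon the cycle and instead prove $3 \Rightarrow 1$ by the very same route as your $2 \Rightarrow 1$, which is what the paper does: the pruning lemma holds verbatim for the $2$-token game (Adam punishes a non-language-preserving move of Eve by steering one of his two tokens into the residual she has abandoned, rather than by playing a Joker --- this is Bagnol--Kuperberg's Lemma~16), a $2$-token-game win on the pruned SD automaton implies a $1$-token-game win (Adam's second token may simply duplicate his first), and \cref{thm:main} then gives history-determinism. Together with the trivial implications $1 \Rightarrow 2$ and $1 \Rightarrow 3$ from \cref{lemma:hd-implies-other-games}, this closes the equivalence without ever comparing the Joker game and the $2$-token game directly.
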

We prove \cref{cor:2token-game} by reducing it to semantically-deterministic automata, similar to Lemma 16 of Bagnol and Kuperberg (\cref{appendix:b2}).  

Joker games on a B\"uchi automaton have smaller arenas than $2$-token games. As a result, we get a more efficient algorithm to recognise HD B\"uchi automata.
\begin{restatable}{lemma}{jokergamecomplexity}\label{lemma:joker-game-complexity}
     Given a non-deterministic B\"uchi automaton $\Ac=(Q,\Sigma,q_0,\Delta)$, we can decide whether it is history-deterministic in time $\Oc(|\Sigma|^2|Q|^3|\Delta|)$.
\end{restatable}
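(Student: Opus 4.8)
The plan is to reduce deciding history-determinism of $\Ac$ to solving a parity game built from the Joker-game arena, and then bound the size and priority index of that game. By \cref{cor:joker-game}, $\Ac$ is history-deterministic if and only if Eve wins the Joker game on $\Ac$, so it suffices to analyse the Joker game. First I would describe the Joker game as a parity game on an explicit arena. The positions are pairs $(q_i, p_i) \in Q \times Q$, together with a bounded amount of bookkeeping needed to phrase the winning condition: we must track the priority Eve has most recently seen on her token, the priority Adam has most recently seen on his token, and a bit recording whether the current round is a Joker round. Splitting each round into its three moves (Adam picks a letter; Eve picks a transition; Adam picks a transition or a Joker transition) introduces intermediate vertices, but only multiplies the vertex count by a factor depending on $|\Sigma|$ and the number of transition choices, and the edges of this arena correspond to the transitions of $\Ac$. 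A careful accounting gives an arena with $\Oc(|\Sigma| \cdot |Q|^2)$ vertices and $\Oc(|\Sigma| \cdot |Q| \cdot |\Delta|)$ edges, where the $|\Delta|$ comes from Adam's (and Eve's) transition choices being edges of $\Ac$.

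Next I would encode the winning condition as a parity condition with a fixed, bounded number of priorities. Eve wins a play if, assuming Adam plays finitely many Jokers and Adam's run satisfies the parity condition (for a B\"uchi automaton: sees priority $2$ infinitely often), Eve's run is accepting (sees priority $2$ infinitely often). This is a Boolean combination of three $\omega$-regular conditions on a B\"uchi automaton --- ``finitely many Jokers'', ``Adam sees $2$ infinitely often'', ``Eve sees $2$ infinitely often'' --- each of which is a simple Rabin/Streett-type condition using only priorities tracked in the state. The resulting condition is expressible as a parity condition with $\Oc(1)$ priorities (indeed a small constant suffices for B\"uchi automata, since each component condition is at the first two levels of the parity hierarchy and their Boolean combination lands at a fixed low level). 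Crucially the number of priorities does not depend on $\Ac$, so solving this parity game takes time polynomial in the arena size, with the exponent coming from the fixed parity index.

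Finally I would invoke the polynomial-time algorithm for parity games of fixed index (cited in the preliminaries) on this arena. With $\Oc(|\Sigma| |Q|^2)$ vertices and $\Oc(|\Sigma| |Q| |\Delta|)$ edges and $\Oc(1)$ priorities, the standard bound for fixed-index parity games --- roughly linear in the number of edges times a polynomial in the number of vertices --- yields a running time of $\Oc(|\Sigma|^2 |Q|^3 |\Delta|)$ after multiplying the edge count by the vertex count (or by a low power thereof, absorbed into the stated bound). I expect the main obstacle to be the bookkeeping in the first two paragraphs: getting the arena and priority encoding precise enough that the vertex count is genuinely $\Oc(|\Sigma||Q|^2)$ rather than something larger, and confirming that the ``finitely many Jokers'' clause, combined with the two B\"uchi clauses, really does collapse to a parity condition with a small constant number of priorities rather than one whose index grows. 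Everything else --- the reduction via \cref{cor:joker-game}, and plugging into the parity-game solver --- is routine.
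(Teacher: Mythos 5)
Your overall route is the same as the paper's: reduce to the Joker game via \cref{cor:joker-game}, lay out the game as an explicit arena with $\Oc(|\Sigma||Q|^2)$ vertices and $\Oc(|\Sigma||Q||\Delta|)$ edges, and feed it to a fixed-index parity-game solver. However, the step you flag as "to be confirmed" --- that the winning condition collapses to a parity condition with a small constant number of priorities --- is exactly the crux of getting the stated bound, and your fallback remark that a "low power" of the vertex count would be "absorbed into the stated bound" is false: the bound $\Oc(|\Sigma|^2|Q|^3|\Delta|)$ is precisely $\Oc(mn)$ for this arena, which the progress-measure algorithm achieves only for a $[0,2]$-parity game (three priorities); with even one more odd priority the exponent on $|Q|$ grows and the claimed bound is lost. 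The paper closes this gap with a direct edge-priority assignment that needs no state bookkeeping at all: Eve's accepting transitions get priority $2$, Adam's non-Joker accepting transitions get priority $1$, \emph{every Joker edge gets priority $2$}, and everything else gets $0$. Assigning $2$ to Joker edges is the key trick --- if Adam plays infinitely many Jokers the limsup priority is $2$ and Eve wins outright, while if he plays finitely many the Joker edges drop out of the limit and the remaining $1$-versus-$2$ race encodes "Adam's run accepting but Eve's not" as the unique odd outcome. So your extra bookkeeping (tracking most-recently-seen priorities and a Joker bit in the vertices) is unnecessary, and more importantly you should verify the three-priority encoding explicitly rather than leave it open, since the lemma's complexity bound stands or falls with it.
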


\begin{remark}
While we have proven \cref{thm:main} by reducing it to universal automata for the sake of simplicity of arguments, one can also give a more direct proof. Such a direct proof involves arguing that if an automaton $\Ac$ is not history-deterministic, then there is a $K$ exponential in the size of $\Ac$ such that $\Ac$ does not simulate $\delay^K(\Ac)$. It then follows from \cref{lemma:g1todelaysim} that Eve loses the $1$-token game on $\Ac$, as desired. 

To argue that $\Ac$ does not simulate $\delay^K(\Ac)$, we reason based on the size of Adam's strategy in the history-determinism game on $\Ac$. Adam, in the simulation game between $\Ac$ and $\delay^K(\Ac)$, can pick letters according to his strategy in the history-determinism game on $\Ac$, thus ensuring Eve produces a rejecting run on her token. At the same time, Adam can exploit the lookahead in $\delay^K(\Ac)$ to construct an accepting run on his token, thus winning the simulation game between $\Ac$ and $\delay^K(\Ac)$. 
\end{remark}

\begin{figure}{\centering
 \begin{tikzpicture}[shorten >=1pt,node distance=4cm,on grid,auto,  arrows = {[scale=3]}]

  \node[state,initial, initial text={}]  (s)                {$p$};
  \node[state]          (t) [right=of s] {$q$};

  \path[->] (s) edge   [ bend left]           node        {a,b:3} (t)
                  edge [loop above] node        {b:2} ()
                  edge [loop below] node        {a:1} ()
            (t) edge     [ bend left]         node        {a,b:3} (s)
                  edge [loop above] node        {a:2} ()
                  edge [loop below] node        {b:1} ();
\end{tikzpicture}
\caption{An [1,2,3]-automaton $\Ac$ that is not HD but on which Eve wins the Joker game.}
\label{fig:counterexample}}
\end{figure}
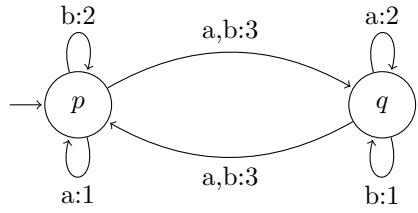

We end this section by showing that unlike B\"uchi automata, Joker games do not characterise history-determinism on parity automata. 

\begin{example}
 Consider the $[1,3]$-automaton $\Ac$ shown in \cref{fig:counterexample}. Note that $\Ac$ accepts all words in $\{a,b\}^{\omega}$, and is SD. It is easy to see that the automaton $\Ac$ is not history-deterministic, since Adam can win the history-determinism game on $\Ac$ by choosing the letter $a$ when Eve's token is at $p$, and $b$ when her token is at $q$. This forces Eve to never see an even transition, causing her run to be rejecting. 
 
 Eve wins the Joker game on $\Ac$, however. Consider the strategy of Eve where she switches states if her and Adam's tokens are at different states, and otherwise stays at the same state. For Adam to win, Adam must play only finitely many Jokers, and construct an accepting run, which requires him to eventually stay in the same state. But then, Eve's strategy will ensure that Eve's and Adam's run are eventually identical, thus ensuring Adam cannot win the Joker game of $\Ac$. 
\end{example}

\begin{theorem}\label{thm:joker-games-counterexample}
Joker games do not characterise history-determinism on (semantically-deterministic) parity automata.
\end{theorem}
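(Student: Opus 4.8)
\textbf{Proof proposal for \cref{thm:joker-games-counterexample}.}

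The plan is to establish the theorem by justifying carefully the two claims made informally in the preceding example about the automaton $\Ac$ of \cref{fig:counterexample}: that $\Ac$ is not history-deterministic, and that Eve wins the Joker game on $\Ac$. Since $\Ac$ is manifestly semantically-deterministic (indeed $\Lc(\Ac,p) = \Lc(\Ac,q) = \{a,b\}^\omega$, as every state accepts all words: from either state one can always take a $3$-transition to the other state and the language from the sink of alternating $3$'s is everything), once both claims are verified we obtain a semantically-deterministic parity automaton with priorities in $\{1,2,3\}$ witnessing the failure of the Joker-game characterisation, and a fortiori the failure for general parity automata.

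First I would prove $\Ac$ is not HD. Fix any strategy $\sigma$ for Eve in the HD game. Adam plays as follows: whenever Eve's token is at $p$, he plays $a$; whenever it is at $q$, he plays $b$. I claim that along the resulting play, Eve's run never takes an even-priority transition. Indeed, from $p$ on letter $a$ the only transitions are the loop $p \xrightarrow{a:1} p$ and the edge $p \xrightarrow{a:3} q$; from $q$ on letter $b$ the only transitions are the loop $q \xrightarrow{b:1} q$ and the edge $q \xrightarrow{b:3} p$. So every transition Eve ever takes has priority $1$ or $3$, and hence the highest priority seen infinitely often is odd (it is $3$ if she changes state infinitely often, $1$ otherwise). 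Thus Eve's run is rejecting while Adam's word is in $\Lc(\Ac) = \{a,b\}^\omega$, so Adam wins, and $\Ac$ is not HD.

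Next I would prove Eve wins the Joker game on $\Ac$. Eve's strategy: in each round, after Adam's letter $a_i$, if the two tokens are currently at the \emph{same} state, Eve plays the self-loop (staying); if they are at \emph{different} states, Eve plays the priority-$3$ transition (switching). The key invariant is that if Adam does not play a Joker in round $i$, then the state-distance between the tokens does not increase, and it strictly decreases to $0$ whenever it was $1$; more precisely, after a non-Joker round the tokens are at the same state (if they started that round at distinct states, both the $3$-edge and Adam's move land them appropriately; a short case check on the four combinations of current states confirms this). When Adam plays a Joker, the tokens can be forced apart again, but by hypothesis he plays only finitely many Jokers. Hence after the last Joker, from that round on the tokens are synchronised within one further round, and thereafter Eve's run and Adam's run are \emph{identical} (they are at the same state each round, Eve copies the loop when Adam loops and the $3$-edge when Adam switches — and a case check shows that when Adam, being in the same state as Eve, takes any transition, Eve's prescribed move produces the same target and same priority). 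Therefore Eve's run has the same priorities-infinitely-often as Adam's tail, so if Adam's run satisfies the parity condition then so does Eve's, and Eve wins. The one point requiring care is the transient behaviour right after a Joker: one must check that the tokens resynchronise in at most one round regardless of which states they land in, which follows because from distinct states Eve's $3$-switch and any of Adam's moves both map into $\{p,q\}$ and the analysis of the four position-pairs shows the targets coincide after one round. I would present this as a short lemma with an explicit table of the four cases.

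The main obstacle — though it is more of a bookkeeping nuisance than a genuine difficulty — is the Joker-game argument: one has to track the two tokens through the round where Adam takes a transition while Eve is deciding her move, being precise about move order (Eve moves before Adam's token in the Joker game), and argue that finitely many Jokers cannot prevent eventual synchronisation. Once the ``synchronise within one round from equal or unequal positions'' case analysis is written out, everything else is immediate. Finally, to conclude the theorem as stated, I note the contrapositive: if Joker games characterised history-determinism on parity (or even just on SD parity) automata, then $\Ac$ being a Joker-game win for Eve would force $\Ac$ to be HD, contradicting the first part; this also shows \cref{mainthm:sdbuchiautomata} is specific to B\"uchi and does not extend to parity.
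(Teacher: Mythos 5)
Your overall approach is the same as the paper's (same automaton, same Adam strategy for non-history-determinism, same switch-if-different/stay-if-same strategy for Eve in the Joker game), and the non-HD half and the semantic-determinism observation are correct. However, there is a genuine flaw in your Joker-game analysis: the ``key invariant'' that after a non-Joker round starting from distinct states the tokens coincide is false. If Eve is at $p$ and Adam at $q$, Eve's prescribed move is the $3$-edge $p \to q$, but Adam moves \emph{after} her and may take his own $3$-edge $q \to p$, so the tokens simply swap and remain at distinct states. Adam can do this every round without ever playing a Joker, so your claim that the tokens synchronise within one round after the last Joker does not hold, and the ``four position-pairs'' table you propose would not confirm it.

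The conclusion survives, but the argument has to be conditioned on Adam's run being accepting, which is how the paper argues. In the scenario above Adam takes priority~$3$ infinitely often, so his run is rejecting and Eve wins that play vacuously. For Adam to win he must play finitely many Jokers \emph{and} produce an accepting run, which in this automaton forces him to eventually stop taking $3$-transitions, i.e., to remain at a single state forever. Only from that point does your synchronisation argument work: Eve switches at most once more to reach Adam's (now fixed) state, and thereafter both tokens take the identical self-loop on each letter, so Eve's run has the same eventual priorities as Adam's accepting run and is accepting. Restructure the Joker-game half around this case split (Adam's run rejecting versus Adam eventually stationary) rather than around an unconditional synchronisation invariant, and the proof is complete.
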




\section{Determinising HD B\"uchi Automata in Polynomial Time}
In this section, we present a polynomial time determinisation procedure for HD B\"uchi automata with only a quadratic state-space blowup. 
Our procedure combines ideas from the exponential-time procedure given by Kuperberg and Skrzypczak~\cite{KS15} with our 1-token game characterisation of history-determinism on SD B\"uchi automata. 

\thmefficientdeterminisation*

Let $\Hc$ be an HD B\"uchi automaton. Kuperberg and Skrzypczak's procedure relies on carefully analysing the simulation game between $\Hc$ and an equivalent deterministic B\"uchi automaton $\Dc$ whose states are the subsets of $\Hc$. Their procedure iteratively makes modifications to $\Hc$ and $\Dc$ based on the structure of Eve's winning strategies in this game or,
more precisely, the progress measures~\cite{Jur00} or the signatures~\cite{Wal02} of the vertices in this game, which they call ranks. The end result of their procedure is a 
relation
between states of $\Hc$ and states of $\Dc$ that correspond to sets containing a singleton in $\Hc$. A product construction based on this relation allows them to construct an equivalent determinsitic B\"uchi automaton whose number of states is quadratic in the number of states of $\Ac$.

The reason Kuperberg and Skrzypczak work with the simulation game between $\Hc$ and $\Dc$ is that 
it
characterises the history-determinism of $\Hc$. 
A non-deterministic automaton is HD if and only if it simulates an equivalent deterministic one, as shown by Hezinger and Piterman~\cite{HP06}. 
Our result on 1-token games characterising history-determinism (\cref{thm:main}) allows us to instead work with the $1$-token game on $\Hc$. 
This results in a conceptually simpler procedure that works in polynomial time.

We present our algorithm in two steps. 
First, we introduce HD B\"uchi automata that have a \raceaheadselfsim, and we give a polynomial time determinisation procedure for them. 
The procedure involves a quadratic state-space blowup. 
Then, we give a polynomial time iterative procedure to transform $\Hc$ into an equivalent HD B\"uchi automaton of the same size that has a \raceaheadselfsim. 
Overall, this gives us a polynomial time procedure to determinise HD B\"uchi automata with a quadratic state-space blowup.

\subsection{Determinising Automata with Sprint Self-Simulation}\label{subsec:normalised automata}

In this subsection, we describe what automata with \raceaheadselfsim are and give a polynomial time determinisation procedure for such HD B\"uchi automata. The key concept here is the \emph{sprint simulation} relation between two B\"uchi automata, characterised by the \emph{sprint step-ahead simulation game}. This game is similar to the \stepsimgame, but the winning condition for Eve is that she must see an accepting transition before Adam does, i.e., she is in a sprint with Adam to see an accepting transition in the \stepsimgame first. 
Sprint step-ahead simulation (game) would be a more accurate phrase, but for brevity, we shorten it to just sprint simulation (game). 
\begin{restatable}{definition}{cldeflocaldualsimulation}\label{definition:local-dual-simulation}
    For two non-deterministic B\"uchi automata $\Ac = (Q,\Sigma,q_0,\Delta_A)$ and $\Bc$~$=$~$(P,\Sigma,p_0,\Delta_B)$, the \emph{sprint simulation game} between $\Ac$ and $\Bc$
    is played on the set of positions $Q \times P$ and it proceeds in rounds. 
    In each round $i = 0, 1, 2, \dots$, from position $(q_i, p_i)$, the two players Adam and Eve make the following choices:
        \begin{enumerate}
           \item Adam chooses a letter $a_i \in \Sigma$;
           \item Eve chooses a transition $\delta_i=(q_i \xrightarrow{a_i:c_i} q_{i+1}) \in \Delta_A$;
           \item Adam chooses a transition $\delta'_i=(p_i \xrightarrow{a_i:c'_i} p_{i+1}) \in \Delta_B$.
        \end{enumerate}

    The new position is $(q_{i+1},p_{i+1})$. 
    At every round~$i$, if transition $\delta_i$ is accepting then Eve wins the game, and otherwise, if the transition $\delta'_i$ is accepting then Adam wins the game. 
    If neither $\delta_i$ nor $\delta'_i$ are accepting transitions, then the game continues for another round. 
    Eve wins every infinite play.
\end{restatable}
Observe that if Eve wins the \racesimgame then she can do so by a positional strategy, because the objective for Eve is a disjunction of a safety and a reachability objective, which can be seen as a $[0,1]$-parity game. 
If Eve has a winning strategy in the above game, we say that $\Ac$ \racesim $\Bc$.  The sprint simulation relation is transitive, i.e., if $\Ac$ \racesim $\Bc$ and $\Bc$ \racesim $\Cc$, then $\Ac$ \racesim $\Cc$ (\cref{lemma:dual-simulation-transitivity} in appendix). 

\begin{remark}
The sprint simulation relation is similar to the `dependency’ relation introduced by Kuperberg and Skrzypczak~\cite[Definition 30]{KS15}. While the sprint simulation relation is between two B\"uchi automata, dependency relation is derived from the sprint simulation game between a B\"uchi automaton and an equivalent exponential-sized deterministic B\"uchi~automaton.
\end{remark}

We say that an HD B\"uchi automaton $\Hc$ has a \emph{\raceaheadselfsim} if it is semanti\-cally-deterministic and for every state $p$ in $\Hc$, there is a language-equivalent state $q$, such that $(\Hc,p)$ \racesim $(\Hc,q)$. When $\Hc$ is clear from the context, we will just say that $p$ \racesim $q$. 
For the rest of this subsection, fix $\Hc=(Q,\Sigma,q_0,\Delta)$ to be an HD B\"uchi automaton that has a \raceaheadselfsim. 
The following lemma follows from transitivity of \racesimnoun.

\begin{lemma}\label{lemma:dual-simulation-1}
    For every state $p$ in $\Hc$, there is a language-equivalent state $q$ such that $p$ \racesim $q$ and $q$ \racesim itself.
\end{lemma}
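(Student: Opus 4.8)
The plan is to exploit the fact that $\Hc$ has a \raceaheadselfsim together with transitivity of the sprint simulation relation (\cref{lemma:dual-simulation-transitivity} in the appendix). The statement we want says: for every state $p$, there is a language-equivalent state $q$ with $p$ \racesim $q$ and $q$ \racesim $q$ itself. So the goal is to find, within each language-equivalence class, a single ``self-sprint-simulating'' representative that is dominated (in the sprint order) by every state of that class.

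First I would recall what the hypothesis gives: for every state $p$ there exists a language-equivalent $q_p$ such that $p$ \racesim $q_p$. Applying this repeatedly starting from $p$ produces a sequence $p = p_0, p_1, p_2, \dots$ with $p_i$ \racesim $p_{i+1}$ and all $p_i$ language-equivalent. Since $Q$ is finite, some state repeats, say $p_i = p_j$ with $i < j$. Then by transitivity of \racesimnoun, chaining the hypotheses along the cycle $p_i$ \racesim $p_{i+1}$ \racesim $\cdots$ \racesim $p_j = p_i$ gives $p_i$ \racesim $p_i$; call this state $q$. It is language-equivalent to $p$. Also by transitivity again, $p = p_0$ \racesim $p_1$ \racesim $\cdots$ \racesim $p_i = q$, so $p$ \racesim $q$. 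This is exactly what the lemma asserts.

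So the key steps, in order, are: (1) unfold the \raceaheadselfsim hypothesis into an infinite chain of language-equivalent states connected by \racesimnoun; (2) use finiteness of $Q$ to locate a repetition $p_i = p_j$; (3) invoke transitivity of the sprint simulation relation along the loop to conclude $p_i$ \racesim $p_i$; (4) invoke transitivity along the initial segment to conclude $p$ \racesim $p_i$; (5) note all states in the chain are language-equivalent, so $q := p_i$ works. I do not anticipate a serious obstacle here — the only thing to be careful about is that transitivity of \racesimnoun is genuinely needed (and it is already established in the appendix), and that ``language-equivalent'' is an equivalence relation so chaining preserves it. The proof is essentially a pigeonhole argument on a finite monoid-like structure and should be only a few lines.

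A brief sketch I would write: Fix a state $p$. By the definition of \raceaheadselfsim, build a sequence $p = p_0, p_1, \dots$ with $p_i$ \racesim $p_{i+1}$ and $\Lc(\Hc, p_i) = \Lc(\Hc, p)$ for all $i$. As $Q$ is finite, pick $i < j$ with $p_i = p_j$. Transitivity of \racesimnoun gives $p_i$ \racesim $p_j = p_i$, i.e.\ $q := p_i$ \racesim itself, and $p = p_0$ \racesim $p_i = q$. Since $q$ is language-equivalent to $p$, this is the required state.
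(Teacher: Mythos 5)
Your proposal is correct and follows essentially the same argument as the paper: unfold the sprint self-simulation hypothesis into a chain of language-equivalent states, use finiteness of the state set to find a repetition, and apply transitivity of \racesimnoun (\cref{lemma:dual-simulation-transitivity}) both along the cycle to get $q$ \racesim $q$ and along the initial segment to get $p$ \racesim $q$. The only cosmetic difference is that you start the chain at $p$ itself rather than at the first state $q_0$ with $p$ \racesim $q_0$.
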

\begin{proof}
    Fix a state $p$ in $\Hc$. Then, there is a language-equivalent state $q_0$ in $\Hc$ such that $p$ \racesim $q_0$. 
    If $q_0$ does not \racesim itself, there is another language-equivalent $q_1$ such that $q_0$ \racesim $q_1$. 
    Repeating this argument, we get a sequence of states $q_0,q_1,q_2,\dots$, such that $q_i$ \racesim $q_{i+1}$. Since, there are finitely many states in $\Hc$, we know there are two natural numbers $i<j$ such that $q_i = q_j$. 
    But then, by transitivity of \racesimnoun (\cref{lemma:dual-simulation-transitivity}), we get that $p$ \racesim $q_i$ and $q_i$ \racesim itself, as desired.   
\end{proof}

Let us call a state $q$ \lex if the automaton $\Hc$ can be determinised by deleting transitions to get a deterministic subautomaton $\Fc_q$ so that the following holds: for all finite words $w$ that have a run in $\Hc$ starting at $q$ going through an accepting transition, the unique run on $w$ in $\Fc_q$ from $q$ also sees an accepting transition. Thus, for states $q$ that are \lex, there is a uniform strategy that achieves the objective of seeing an accepting transition as soon as possible on all words. We say that the automaton $\Fc_q$ as above is a witness for \lexnoun of $q$.

\begin{restatable}{lemma}{cllemmadualsimulationtwo}\label{lemma:dual-simulation-2}\label{lemma:plsfix}
    A state $q$ in $\Hc$ is \lex if and only if $q$ \racesim itself. 
    Moreover, there is a deterministic subautomaton $\Fc$ that can be computed in polynomial time and is a witness 
    for all \lex states.
\end{restatable}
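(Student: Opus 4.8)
The plan is to prove the two directions of the equivalence separately, and to extract the uniform witness $\Fc$ as a by‑product of the backward direction.

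\textbf{The easy direction ($\lex \Rightarrow$ \racesim‑self).} Suppose $q$ is \lex with witness $\Fc_q$. Eve plays the \racesimgame between $(\Hc,q)$ and $(\Hc,q)$ by following the unique transitions of $\Fc_q$, ignoring Adam's token entirely. If in some round $t$ Adam's transition $\delta'_t$ is the first accepting transition he plays, then the finite word $a_0 a_1 \cdots a_t$ has a run from $q$ through an accepting transition, namely the corresponding prefix of Adam's run; so by definition of \lexnoun the run of $\Fc_q$ on this word from $q$, which is exactly Eve's run, contains an accepting transition, necessarily among her first $t+1$ transitions. Hence Eve sees an accepting transition no later than Adam and wins, and if Adam never plays an accepting transition Eve wins the infinite play. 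So $q$ \racesim itself.

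\textbf{The hard direction, and the uniform witness.} For the converse together with the ``moreover'', fix Eve's winning region $W \subseteq Q \times Q$ in the \racesimgame of $\Hc$ against itself; since this is a $[0,1]$‑parity game on the arena $Q \times Q$, it is solvable in polynomial time, Eve has a positional winning strategy on $W$, and $q$ \racesim itself exactly when $(q,q) \in W$. The target is to manufacture, in polynomial time, a single deterministic subautomaton $\Fc$ of $\Hc$ — one outgoing transition per state and letter — such that, read as an Eve strategy that depends only on her own state, $\Fc$ wins the \racesimgame from every diagonal position $(q,q) \in W$. Granting this, both the converse and the ``moreover'' follow immediately: for such a $q$, if a finite word $w$ has a run $\pi$ from $q$ through an accepting transition, let Adam play the letters of $w$ and the run $\pi$ in the \racesimgame from $(q,q)$; since $\Fc$ wins this play and $\pi$'s accepting transition occurs within $|w|$ rounds, Eve's run — the run of $\Fc$ on $w$ from $q$ — contains an accepting transition within $|w|$ steps, which is precisely the \lexnoun condition, witnessed uniformly by $\Fc$.

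To build $\Fc$ I would select transitions greedily, guided by the preorder ``$r \succeq r'$ iff $(\Hc,r)$ \racesim $(\Hc,r')$'', which is polynomially checkable and transitive by \cref{lemma:dual-simulation-transitivity}: for each state $p$ and letter $a$, if $p$ has an accepting $a$‑transition put one such transition into $\Fc$; otherwise, among the non‑accepting $a$‑transitions $p \xrightarrow{a} r_i$, put into $\Fc$ one whose target $r_j$ is $\succeq$‑maximal. One then verifies that (i) whenever $(p,p) \in W$ such a maximal target exists and $\Fc$ never leaves $\{p : (p,p) \in W\}$ along runs started in that set, and (ii) the resulting deterministic run from such a $p$ wins the sprint race against \emph{every} run of $\Hc$ from $p$, by an induction that maintains a $\succeq$‑domination invariant between Eve's $\Fc$‑state and the state of Adam's challenging run: if Adam moves to a non‑accepting successor, it is dominated by the $\succeq$‑maximal one chosen for $\Fc$, and if he moves to an accepting successor then $\Fc$ also took an accepting transition that round. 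I expect the main obstacle to be exactly this uniformity: a naive projection of Eve's positional winning strategy to the diagonal fails, since against an adversarial Adam run the projected run no longer follows that strategy. Making the argument work has to use that $\Hc$ is semantically‑deterministic — so that ``where Adam's token is'' matters only up to language‑equivalence, whence the candidate successors $r_i$ are pairwise language‑equivalent and winning of the sprint game from $(p,p)$ forces a $\succeq$‑maximum to exist — combined with \cref{lemma:dual-simulation-1}; turning these ingredients into a clean self‑propagating invariant, and confirming that the greedy selection is well‑defined and polynomial‑time, is the technical heart of the proof.
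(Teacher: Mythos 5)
Your easy direction coincides with the paper's (which dismisses it as ``clear'') and is correct, including the observation that the accepting transition of the $\Fc_q$-run must occur no later than round $t$. For the hard direction and the uniform witness you take a genuinely different route. The paper reduces \racesimnoun to step-ahead simulation on an auxiliary \emph{reachability} automaton $\Hc_{\#}$ (accepting transitions redirected to an accepting sink, a fresh letter $\#$ leading to a rejecting sink), so that $q$ \racesim itself exactly when $(\Hc_{\#},q)$ wins its own $1$-token game, i.e.\ is history-deterministic; it then imports a theorem of Boker and Lehtinen that HD reachability automata admit uniform positional HD strategies computable in polynomial time, and projects that positional strategy back onto $\Hc$ to obtain $\Fc$. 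You instead build $\Fc$ by hand: solve the sprint game (a polynomial-size $[0,1]$-parity game) once, and at each state take an accepting transition if one exists, otherwise a successor that is greatest in the sprint-simulation preorder. Your construction is more elementary and self-contained --- it in effect re-derives the positionality/uniformity fact the paper cites --- at the cost of having to verify the domination invariant yourself.

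Two points on the part you defer as the technical heart, both of which can be repaired. First, you need a \emph{greatest} $a$-successor, not merely a maximal one, and its existence has a cleaner source than you indicate: by positional determinacy at the diagonal position $(p,p)$, whatever successor Eve's winning strategy picks on letter $a$, Adam may answer with any $a$-successor of $p$, so Eve's pick \racesim every $a$-successor of $p$ (itself included); neither semantic determinism nor \cref{lemma:dual-simulation-1} is needed for this step. Second, in the propagation of the invariant your state $e$ and Adam's state $d$ differ after the first round, so the greatest $a$-successor of $e$ does not directly dominate the $a$-successors of $d$; you must route through the successor $e''$ prescribed by Eve's positional winning strategy at the off-diagonal position $(e,d)$, which dominates all of $d$'s $a$-successors, and combine ``your chosen successor \racesim $e''$'' with transitivity of \racesimnoun. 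With the invariant stated disjunctively --- either the $\Fc$-run has already taken an accepting transition, or both $e$ \racesim $d$ and $e$ \racesim $e$ hold --- the induction closes (note also that $e \succeq d$ rules out $d$ having an accepting $a$-transition when $e$ has none), and your plan yields a correct polynomial-time construction of~$\Fc$.
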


\cref{lemma:dual-simulation-1,lemma:dual-simulation-2} above tell us that every state in $\Hc$ is either \lex, or it \racesim a state that is \lex. Fix a subautomaton $\Fc$ from \cref{lemma:dual-simulation-2}, and a positional Eve strategy $\tau$ in the step-ahead simulation game from $(p,q)$ for all pairs of states $(p,q)$ such that $p$ \racesim $q$. 

The deterministic automaton $\Dc$ is then constructed to consist of pairs of states $(p,q)$ such that $p$ and $q$ are language equivalent, $p$ \racesim $q$, and $q$ is sprint deterministic. Note that the pair of such states can be found in polynomial time, since checking for language containment on HD B\"uchi automata~\cite[Corollary 17]{Pra24a} and deciding the winner of sprint simulation game can be done in polynomial time.  Furthermore, for each state $p$ in $\Hc$, we know from \cref{lemma:dual-simulation-1,lemma:dual-simulation-2} that there is a state $q$ in $\Hc$ such that $(p,q)$ is a state in $\Dc$. We let the initial state $d_0$ be $(q_0,r_0)$ for some $r_0$ such that $(q_0,r_0) \in \Dc$. 

At a state $(q,p)$, the transitions in $\Dc$ from the second component $p$ are chosen according to transitions from $\Fc$, while transitions from $q$ are chosen via the positional Eve strategy~$\tau$. 
When an accepting transition $q \xrightarrow{a:2} q'$ is taken on the first component, we update the second component deterministically to be $p'$ such that $(q',p')$ is a state in $\Dc$. 
Or, equivalently, $q'$ and $p'$ are language equivalent, $q'$ \racesim $p'$, and $p'$ is \lex. 
The priorities of transitions in $\Dc$ are the priorities of transitions of the first component. 
A more formal construction of $\Dc$ can be found in the appendix.

We show the correctness of our construction using the definition of sprint simulation game and the fact that $\Hc$ is semantically-deterministic.
\begin{lemma}
    The automaton $\Dc$ accepts the same language as $\Hc$.
\end{lemma}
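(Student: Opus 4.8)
The plan is to establish the two inclusions $\Lc(\Dc)\subseteq\Lc(\Hc)$ and $\Lc(\Hc)\subseteq\Lc(\Dc)$ separately. Recall the shape of $\Dc$: a state is a pair $(\alpha,\beta)$ in which $\alpha$ and $\beta$ are language-equivalent, $\alpha$ \racesim $\beta$, and $\beta$ is \lex; the first component $\alpha$ is moved by the positional Eve strategy $\tau$ and carries the priority of the $\Dc$-transition, while the second component $\beta$ is moved by the fixed deterministic witness $\Fc$ and is reset to a fresh \lex partner of the new first component precisely when $\alpha$ takes an accepting transition; the initial state is $(q_0,r_0)$ for some \lex partner $r_0$ of $q_0$. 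The inclusion $\Lc(\Dc)\subseteq\Lc(\Hc)$ is immediate: a reset only changes the second component, so along the run $(\alpha_0,\beta_0),(\alpha_1,\beta_1),\dots$ of $\Dc$ on a word $w$, the sequence $\alpha_0,\alpha_1,\dots$ is an uninterrupted run of $\Hc$ on $w$ from $q_0$; if $\Dc$ accepts $w$, this run has infinitely many accepting transitions and hence witnesses $w\in\Lc(\Hc)$.

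For $\Lc(\Hc)\subseteq\Lc(\Dc)$ I argue by contradiction: suppose $w\in\Lc(\Hc)$ but the run $(\alpha_0,\beta_0),(\alpha_1,\beta_1),\dots$ of $\Dc$ on $w$ is rejecting, so the first component takes an accepting transition only finitely often; let the last one be at step $N$ (with the convention $N=-1$ if there is none). After step $N$ there is no reset, so from step $N+1$ onward the second component is exactly the unique $\Fc$-run from the \lex state $\beta_{N+1}$ on the suffix $w'=w_{N+1}w_{N+2}\cdots$, and $\beta_{N+1}$ is language-equivalent to $\alpha_{N+1}$. Since $\Hc$ is SD with $\Lc(\Hc)=\Lc$, \cref{lemma:SDautomata} gives $\Lc(\Hc,\alpha_{N+1})=(w_0\cdots w_N)^{-1}\Lc$, and as $w\in\Lc$ we obtain $w'\in\Lc(\Hc,\alpha_{N+1})=\Lc(\Hc,\beta_{N+1})$. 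Hence $\Hc$ has an accepting run from $\beta_{N+1}$ on $w'$; truncating it at its first accepting transition produces a finite prefix of $w'$ with a run from $\beta_{N+1}$ through an accepting transition, so by \lexnoun of $\beta_{N+1}$ with the uniform witness $\Fc$ of \cref{lemma:dual-simulation-2}, the $\Fc$-run from $\beta_{N+1}$ takes an accepting transition within finitely many steps; that is, the second component of $\Dc$ takes an accepting transition at some step $m\geq N+1$.

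Now I combine this with sprint simulation. The strategy $\tau$ is winning in the \racesimgame from every pair $(\alpha,\beta)$ with $\alpha$ \racesim $\beta$ (such a uniform positional $\tau$ exists because Eve's objective there is a disjunction of a safety and a reachability condition, i.e.\ a $[0,1]$-parity condition), and pinning the second component to $\Fc$-moves is just one fixed choice of Adam's moves; therefore, from step $N+1$ onward the pair of components of $\Dc$ traces a $\tau$-consistent play of the \racesimgame starting at $(\alpha_{N+1},\beta_{N+1})$, which Eve wins. But by \cref{definition:local-dual-simulation}, in each round Eve's first-component transition is inspected before Adam's second-component transition; since the first component takes no accepting transition from step $N+1$ on, ``Eve wins this round'' never fires, so ``Adam wins this round'' — i.e.\ the second component takes an accepting transition — can never fire in any round $\geq N+1$ either. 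This contradicts the previous paragraph, so $\Dc$ accepts $w$, completing the inclusion.

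I expect the last step to be the crux: one must view the run of $\Dc$, \emph{between consecutive resets}, as a genuine instance of the \racesimgame, verify that restricting the second component to the $\Fc$-moves does not invalidate $\tau$, and use precisely the asymmetric winning condition of \cref{definition:local-dual-simulation} — the first component's accepting transition is tested before the second component's — to conclude that a first component that stalls forever forces the second component to stall as well. The degenerate case $N=-1$ (no reset ever, second component the $\Fc$-run from the initial partner $r_0$ of $q_0$) is covered verbatim, reading $\beta_0=r_0$ for $\beta_{N+1}$.
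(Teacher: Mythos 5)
Your proposal is correct and follows essentially the same route as the paper: the first inclusion is the same projection argument, and the second inclusion rests on the same two pillars — \lexnoun of the second component (via the witness $\Fc$) forces it to see an accepting transition on any word in its language, and the winning \racesimnoun strategy $\tau$ then forces the first component to see one no later. The only cosmetic difference is that you package the argument as a contradiction about the segment after the last reset, whereas the paper states it as a per-state "eventually sees an accepting transition" claim closed off by induction and semantic determinism; your version spells out the reset bookkeeping more explicitly but proves the same thing.
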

\begin{proof}
 $\Lc(\Dc) \subseteq \Lc(\Hc)$: 
 If $\rho$ is an accepting run of a word $w$ in $\Dc$, then the projection of $\rho$ on the first component is an accepting run in $\Hc$ as well.

$\Lc(\Hc) \subseteq \Lc(\Dc)$: 
We show that for each state $(p,q)$ in $\Dc$, if $w \in \Lc(\Hc,p)$ then the run from $(p,q)$ on $w$ in $\Dc$ sees an accepting transition eventually. We can then conclude by induction and the semantic determinism of $\Hc$ that a run on $\Dc$ on any word in $\Lc(\Hc)$ contains infinitely many accepting transitions, and hence it is accepting. 
To see this, let $w \in \Lc(\Hc,p)$ and let $\rho_D$ be the run of $\Dc$ on $w$ from $(p,q)$. 
By construction of $\Dc$, we know that $q$ is language-equivalent to $p$.
Since $q$ is \lex, the second component of $\rho_D$ in $\Dc$ must contain an accepting transition on $w$ eventually. But since $p$ \racesim $q$, the run on the first component of $\rho_D$ contains an accepting transition as well, as desired. 
\end{proof}



\subsection{Towards Automata with Sprint Self-Simulation}\label{subsec:normalising-automata}

We now present a polynomial time algorithm to convert an HD B\"uchi automaton into an equivalent HD B\"uchi automaton that has a \raceaheadselfsim. 
Throughout this subsection, let $\Hc=(Q,\Sigma,q_0,\Delta)$ be an HD B\"uchi automaton. 

 We say that $\Hc$ is \emph{good} if $\Hc$ is semantically-deterministic and Eve wins the Joker game from all states in $\Hc$. 
 Every HD B\"uchi automaton $\Hc$ can be converted to an equivalent good HD B\"uchi automaton in polynomial time: we fix a winning strategy $\tau$ for Eve on the Joker game on $\Hc$, and consider the subautomaton $\Hc_N$ consisting of transitions that Eve takes according to $\tau$ (\cref{lemma:good-automata}). We thus assume without loss of generality that $\Hc$ is good.
 
To get an HD B\"uchi automaton equivalent to $\Hc$ and that has a \raceaheadselfsim, we iteratively make modifications to $\Hc$ based on the ranks of the $1$-token game on $\Hc$. We first give a description of the $1$-token game on a B\"uchi automaton as a $[0,2]$-parity game, and briefly recall the properties of ranks that we need on such games.
\begin{definition}
For a semantically-deterministic B\"uchi automaton $\Bc=(Q,\Sigma,q_0,\Delta)$, define the $[0,2]$-parity game $G_1 (\Bc)=(V,E)$ as follows:
\begin{itemize}
    \item The set of vertices $V$ consists of the set $V = V_1 \cup V_2 \cup V_3$, where:
    \begin{enumerate}
        \item $V_1 = \{(p,q) \mid \text{ $p,q$ are states reachable from $q_0$ upon reading the same word $w$}\}$
        \item $V_2 = \{(p,a,q) \mid (p,q) \in V_1 \}$
        \item $V_3 = \{(p',q,a) \mid (p,a,q) \in V_2 \text{ and } p\xrightarrow{a}p' \in \Delta\}$
    \end{enumerate}
    Eve's vertices are $V_{\eve} = V_2$, while Adam's vertices are $V_{\adam} = V_1 \cup V_3$
    \item The set of edges $E$ is the union of following sets:
        \begin{enumerate}
            \item $E_1 = \{(p,q) \xrightarrow{} (p,a,q) \mid a \in \Sigma \}$ (Adam chooses a letter) 
            \item $E_2 = \{(p,a,q) \xrightarrow{} (p',q,a) \mid p\xrightarrow{a:c}p' \in \Delta\} $ (Eve chooses a transition on her token)
            \item $E_3 = \{(p',q,a) \xrightarrow{} (p',q') \mid q \xrightarrow{a:c}q' \in \Delta \}$ (Adam chooses a transition on his token)
        \end{enumerate}
    \item The priority function $\Omega$ is defined as follows.
        All  elements in $E_1$ are assigned priority 0, while edges $(p,a,q) \xrightarrow{} (p',q,a)$ in $E_2$ are assigned priority 2 if the transition $\delta=p \xrightarrow{a:c} p'$ in $\Bc$ is accepting (or equivalently, $c=2$), and 0 otherwise. The edge $(p',q,a) \xrightarrow{} (p',q')$ in $E_3$ is assigned priority 1 if the transition $q\xrightarrow{a:c} q'$ is accepting, and 0 otherwise.
\end{itemize}
\end{definition}
Observe that since $\Hc$ is SD, we have $\Lc(\Hc,p)=\Lc(\Hc,q)$ if $(p,q)$ is a vertex in $G_1(\Hc)$.

\subparagraph{Ranks.} We now define the ranks of a $[0,2]$-parity game $\Gc$. For each vertex $v$ in $\Gc$, $\rank(v)$ is the largest number of 1's that Adam can guarantee Eve will see before seeing a 2 in the play (or only 0's) starting from $v$. 

Observe that Eve wins such a parity game from every position 
if and only if the ranks of all the vertices are bounded. 
If this is the case, then there is a positional winning strategy $\tau$, using which Eve can guarantee that she sees at most $\rank(v)$ many 1's before seeing a 2 (or seeing 0's forever) \cite[Lemma 8]{Wal02} in every play. 
We will call such a strategy \emph{optimal}. 
For $[0,2]$-parity games with $n$ vertices and $m$ edges, an optimal strategy can be computed in time $\Oc(mn)$ \cite[Theorem 11]{Jur00}. 

The following property of ranks follows from their definition.
\begin{lemma}\label{lemma:ranksssss}
    Let $\Gc$ be a $[0,2]$-parity game, and let $v \xrightarrow{e} u$ be an edge in $\Gc$, such that either $v$ 
    belongs to Adam, or the edge $e$ is prescribed by Eve's optimal strategy~$\tau$. 
    Then the edge $e$ has priority 2 or $\rank (v) \geq \rank (u)$. 
    Furthermore, this inequality is strict if $e$ has priority 1.
\end{lemma}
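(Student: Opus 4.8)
The plan is to unwind the definition of $\rank(v)$ directly, splitting into the two cases according to whether $v$ is Adam's vertex or the edge $e$ is prescribed by Eve's optimal strategy $\tau$. Recall that $\rank(v)$ is the supremum, over all of Adam's strategies, of the number of priority-$1$ edges Eve is forced to traverse before traversing a priority-$2$ edge, in a play from $v$ in which Eve plays optimally (i.e.\ according to $\tau$). The key observation is that $\rank$ is a ``progress measure'': once we have fixed $\tau$, the value $\rank(v)$ can be computed as a least fixed point over the one-step structure of the game, and the asserted monotonicity along edges is exactly the local consistency condition that this fixed point must satisfy.

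First I would handle the case where $e$ has priority $2$; here there is nothing to prove, since the statement is a disjunction whose first disjunct holds. So assume $e$ has priority $0$ or $1$. Consider a play that starts at $v$, traverses $e$ to reach $u$, and thereafter is played by Adam to realise (up to any $\varepsilon$) the value $\rank(u)$ against $\tau$; by assumption $e$ is itself consistent with $\tau$ (either because $v$ belongs to Adam, so Adam is free to pick $e$, or because $e$ is the move prescribed by $\tau$ at $v$). Along this play, before Eve sees a priority-$2$ edge she will see at least $\rank(u)$ priority-$1$ edges after $u$ — and if $e$ itself has priority $1$, one more, namely $e$. Since $\rank(v)$ is at least the number of priority-$1$ edges Eve is forced to see in \emph{this particular} Adam play, we conclude $\rank(v) \geq \rank(u)$ when $e$ has priority $0$, and $\rank(v) \geq \rank(u) + 1 > \rank(u)$ when $e$ has priority $1$. (If $\rank(u) = \infty$, i.e.\ Eve loses from $u$, then the same play shows $\rank(v) = \infty$ as well, and the inequalities hold in the extended sense; but in our application all ranks are finite, so this degenerate case can be ignored or dispatched with a one-line remark.)

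I expect the only real subtlety — and hence the ``hard part'' — to be pinning down the right formal statement of the progress-measure characterisation of $\rank$ and citing it cleanly, rather than any genuine difficulty in the argument: one must be careful that $\tau$ is \emph{optimal}, so that from $u$ Eve still sees at most $\rank(u)$ ones before a two, which is what licenses treating ``the number Adam can force from $u$ against $\tau$'' as exactly $\rank(u)$ rather than merely an upper bound. This is precisely the content of \cite[Lemma 8]{Wal02} together with the definition of $\rank$, and once it is invoked the case analysis above is immediate. I would therefore phrase the proof as: reduce to $e$ of priority $0$ or $1$; invoke optimality of $\tau$ to get that the play described above realises $\rank(u)$ from $u$; and read off the two inequalities by counting priority-$1$ edges along that play.
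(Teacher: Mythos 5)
Your argument is correct and is exactly the direct unwinding of the definition that the paper intends: the paper offers no proof of this lemma, stating only that it ``follows from the definition'' of ranks, and your case split (priority~$2$ trivial; otherwise prepend $e$ to an Adam play realising $\rank(u)$ from $u$ and count the $1$'s) is the standard local-consistency argument for such progress measures. One minor imprecision worth noting: the optimality of $\tau$ is really needed at $v$ rather than at $u$ --- in the case where $e$ is prescribed by $\tau$, optimality is what caps by $\rank(v)$ the number of $1$'s Adam extracts against $\tau$ from $v$, whereas from $u$ you only need the definition of $\rank(u)$ as a guarantee against every Eve strategy together with the positionality of $\tau$ (so that the continuation from $u$ is a genuine play of the game started at $u$); this does not affect the correctness of your proof.
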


Consider the $1$-token game $G_1(\Hc)$, and the ranks of its vertices. 
Note that for a vertex $(q,p)$ in $G_1(\Hc)$, we have $\rank(q,p)=0$ if and only if $q$ \racesim $p$. 
Define, for each state $q \in \Hc$, its optimal rank $\opt(q)$ to be the minimum rank of a vertex of the form $(q,p)$ in $G_1(\Hc)$. 
Note that if $\opt(q)=0$ for all states $q \in \Hc$, then $\Hc$ has a \raceaheadselfsim. Thus, our iterative procedure focuses on reducing the optimal ranks for all states until they are all $0$. 
We describe this procedure below.

\subparagraph{Iterating towards a sprint self-simulation.}
Set $\Hc_0 = \Hc$. For each $i \geq 0$, perform the following three steps on $\Hc_i$ until $\Hc_{i+1}=\Hc_i$.

\begin{enumerate}
    \item 
        For all vertices $(p,q) \in G_1(\Hc_i)$, compute the optimal ranks $\opt_i(p)$ 
        in $G_1(\Hc_i)$.
    \item 
        Obtain $\Hc'_i$ from~$\Hc_i$ by removing all transitions $q \xrightarrow{a:1}q'$ with $\opt_i(q)<\opt_i(q')$. \label{item:important}
    \item 
        Obtain $\Hc_{i+1}$ from~$\Hc'_i$, by making all transitions $q \xrightarrow{a:1} q'$ with $\opt_i(q)>\opt_i(q')$ accepting.
\end{enumerate}
In the Appendix, we show that for each~$i$, both $\Hc'_i$ and $\Hc_{i+1}$ are good HD B\"uchi automata that are equivalent to $\Hc_i$ (\cref{lemma:iteration-correctness-1,lemma:iteration-correctness-2}). 
By a simple induction, we get that each $\Hc_i$ for $i \geq 0$ is a good HD B\"uchi automaton equivalent to $\Hc$. 

Note that in steps 2 and 3, we are either removing rejecting transitions or making rejecting transitions accepting, and hence this procedure terminates after at most $|\Delta|$ iterations. 
Let $\Hc^{*}$ be the automaton obtained after the procedure terminates. 
Since $\Hc^{*}$ is good, we know that it is SD. The next lemma thus shows that $\Hc^{*}$ has a \raceaheadselfsim.  

\begin{lemma}\label{lemma:h*normalised}
    For all states $p$ in $\Hc^*$, there is a language-equivalent state $q$ in $\Hc^*$ such that $p$ \racesim $q$.
\end{lemma}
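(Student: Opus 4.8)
**Proof proposal for Lemma~\ref{lemma:h*normalised}.**

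The plan is to show that when the iterative procedure terminates with $\Hc^{*}$, the optimal rank $\opt^{*}(p)$ of every state $p$ in $G_1(\Hc^{*})$ must be $0$; as already observed before the lemma statement, a vertex $(q,p)$ in $G_1(\Hc^{*})$ has rank $0$ exactly when $q$ \racesim $p$, so $\opt^{*}(p)=0$ for all $p$ is precisely the assertion that every $p$ has a language-equivalent $q$ (the second component of a rank-$0$ vertex of the form $(p,q)$, which exists and is language-equivalent since $\Hc^{*}$ is SD) with $p$ \racesim $q$. Since $\Hc^{*}$ is good and hence HD and SD, the ranks in $G_1(\Hc^{*})$ are all finite (Eve wins the $1$-token game on an HD automaton by \cref{lemma:hd-implies-other-games}), so $\opt^{*}(p)$ is well defined for every~$p$.

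First I would argue by contradiction: suppose some state has positive optimal rank in $\Hc^{*}$, and pick a state $p$ minimising $\opt^{*}(p)$ among those with $\opt^{*}(p) > 0$; call this value $r > 0$. Fix a vertex $(p,q)$ in $G_1(\Hc^{*})$ with $\rank(p,q)=r$ and consider Eve's optimal positional strategy $\tau$ in $G_1(\Hc^{*})$. The idea is to trace a play from $(p,q)$ consistent with $\tau$ in which Adam forces Eve to witness a priority-$1$ edge before any priority-$2$ edge: since $\rank(p,q) = r \ge 1$, such a play exists by the definition of rank. Let $q \xrightarrow{a:1} q'$ be the first priority-$1$ (i.e.\ accepting) transition of Adam's token along this play, occurring at a vertex $(p'',q)$ reached before Eve's token has seen an accepting transition, so that $\rank(p'',q) = r$ as well (no priority-$2$ edge yet, and along $\tau$-edges and Adam-edges the rank cannot increase without a priority-$2$ edge, by \cref{lemma:ranksssss}; and Eve's token not having taken a $2$-edge means all $E_2$-edges so far had priority $0$). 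By \cref{lemma:ranksssss} applied to the Adam-edge $(p'',q) \xrightarrow{} (p'',q')$ of priority $1$, we get $\rank(p'',q) > \rank(p'',q')$, hence $\opt^{*}(q) \le \rank(p'',q') < r$. But the procedure has terminated, meaning step~3 removed nothing, so the accepting transition $q \xrightarrow{a:1} q'$ that would have been "made accepting" if $\opt^{*}(q) > \opt^{*}(q')$ — here I need to compare $\opt^{*}(q)$ and $\opt^{*}(q')$: from $\rank(p'',q') < r \le \opt^{*}(\cdot)$-minimality hypothesis we would get $\opt^{*}(q') = 0$, so $\opt^{*}(q) > \opt^{*}(q')$ would force step~3 to fire unless the transition is already accepting — which it is. So instead I trace this one more step: after this accepting transition the run is at $(p'', q')$ with $\opt^{*}(q')=0$, and there is $q'' $ language-equivalent to $q'$ with $q' $ \racesim $q''$; combining the partial play with this sprint self-simulation witness and transitivity (\cref{lemma:dual-simulation-transitivity}) should let me exhibit a rank-$0$ vertex over $p''$, so $\opt^{*}(p'') = 0 < r = \opt^{*}(p'')$, the contradiction. (I will need to be careful that the prefix play, which sees no accepting transition on Eve's token before round where Adam takes the $1$-edge, genuinely keeps $\opt^{*}$ of the first component unchanged; this uses that along $\tau$, priority-$0$ $E_2$-edges and the $E_1$, $E_3$ structure do not raise rank, so Eve can "wait out" the prefix.)

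The main obstacle I expect is making precise the bookkeeping that, upon termination, step~2 having removed nothing and step~3 having changed nothing together pin down the relationship $\opt^{*}(q) \le \opt^{*}(q')$ for every \emph{surviving} transition $q \xrightarrow{a:1} q'$ and $\opt^{*}(q) \ge \opt^{*}(q')$ for every \emph{surviving accepting} transition — and then combining these with \cref{lemma:ranksssss} to show rank cannot strictly decrease along an accepting-token play without that drop already being "absorbed" into a rank-$0$ region. A cleaner route, which I would actually pursue, is: after termination, show directly that for every state $q$, Eve can win the sprint step-ahead simulation game from $(p,q)$ for a suitable language-equivalent $q$ with $\opt^{*}(q)=0$, by letting Eve's token follow her optimal $\tau$-strategy from a minimal-rank vertex $(p,q)$ and arguing that the very first accepting transition on Adam's token (if any) is preceded by an accepting transition on Eve's token — because otherwise \cref{lemma:ranksssss} would force a rank strictly below $\opt^{*}(p)$ to appear on the first component at a vertex still having $p$'s value, contradicting minimality / the definition of $\opt^{*}$. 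This reduces the lemma to a single clean application of \cref{lemma:ranksssss} together with the termination conditions, and I would present it in that form.
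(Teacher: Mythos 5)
Your overall plan---assume some $\opt^{*}(p)>0$, trace a play of $G_1(\Hc^{*})$ from a rank-minimising vertex $(p,q)$ in which Adam reaches an accepting transition before Eve, apply \cref{lemma:ranksssss}, and contradict the termination of the iteration---is exactly the paper's strategy. But the execution has a genuine gap, centred on which transition the termination condition is applied to. Steps 2 and 3 of the iteration constrain only the \emph{rejecting} transitions of $\Hc^{*}$: upon termination, no rejecting transition $q\xrightarrow{a:1}q'$ has $\opt^{*}(q)\neq\opt^{*}(q')$. The contradiction must therefore be located on a rejecting transition, and the only candidates are the transitions of \emph{Eve's} token in the play $\rho$ (which are all rejecting, since by assumption Eve never takes an accepting transition before Adam does). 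Your argument instead tries to extract the contradiction from Adam's \emph{accepting} transition $q\to q'$, about which the termination condition says nothing---as you yourself notice when you observe that step~3 ``wouldn't fire'' because that transition ``is already accepting.'' At that point the argument stalls, and the fallback contradiction ``$\opt^{*}(p'')=0<r=\opt^{*}(p'')$'' is unsupported: you never establish $\opt^{*}(p'')=r$. Indeed your earlier claim $\rank(p'',q)=r$ does not hold in general---\cref{lemma:ranksssss} only makes ranks non-increasing along priority-$0$ edges, not constant---and $\opt^{*}(p'')$ dropping to $0$ is perfectly consistent with your minimality hypothesis, so no contradiction arises there either.

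There is also a recurring type error: you write $\opt^{*}(q)\le\rank(p'',q')<r$ and later $\opt^{*}(q')=0$, but $\opt^{*}(q)$ is by definition the minimum of $\rank(q,\cdot)$ over vertices whose \emph{first} component is $q$, whereas here $q$ and $q'$ occur as \emph{second} components; the inequality $\rank(p'',q')<r$ bounds $\opt^{*}(p'')$, not $\opt^{*}(q')$. The correct finish (and the paper's) is: along $\rho$ the first components $p=p_0\to p_1\to\dots\to p_k$ form a run of Eve's token consisting entirely of rejecting transitions; since the game rank strictly decreases when Adam takes his accepting edge and never increases elsewhere, we get $\opt^{*}(p_k)\le\rank^{*}(p_k,q_k)<\rank^{*}(p_0,q_0)=\opt^{*}(p_0)$, so $\opt^{*}$ strictly decreases across some rejecting transition $p_j\to p_{j+1}$ of Eve's token---precisely a transition that step~3 would have made accepting, contradicting $\Hc_{i+1}=\Hc_i$. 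Note that no minimality assumption on $\opt^{*}(p)$ is needed for this.
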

\begin{proof}
Assume, to the contrary, that there exists a state $p$ such that $\opt^{*}(p) = \rank^{*}(p,q)>0$. Fix an optimal winning strategy $\tau$ for Eve in $G_1(\Hc^{*})$. Consider a finite play $\rho$ of $G_1(\Hc^{*})$ from $(p,q)$ where Eve is playing according to $\tau$, Adam chooses an accepting transition in his token at some point, while Eve is never able to. Then by monotonicity of ranks (\cref{lemma:ranksssss}), we know that $\rank^{*}$ strictly decreases across $\rho$ at some point. Then, there must be a rejecting transition across which the quantity $\opt^{*}$ decreases as well. But since such transitions would have been made accepting in step 3 of the iteration, we get a contradiction.
\end{proof}
 
From the polynomial time determinisation construction for HD B\"uchi automata that have a \raceaheadselfsim presented in \cref{subsec:normalised automata}, we get a polynomial time determinisation procedure for HD B\"uchi automata. This concludes the proof of \cref{theorem:efficient-determinisation}.


\section{Discussion}
Our paper has shown two key results on HD B\"uchi automata: a $1$-token game based characterisation of history-determinism for semantically-deterministic B\"uchi automata, and a polynomial time determinisation procedure. In the process of obtaining these results, we developed several novel techniques that we believe to be equally exciting and insightful. We finish by remarking some implications of our results and techniques, and natural future directions that our work points to. 

Our first technique, presented in \cref{sec:universal-reduction}, reduces game based characterisations of history-determinism on parity automata to parity universal automata. But the history-determinism game on such an automaton is just a parity game, since Adam's word is always accepting. The $2$-token conjecture thus reduces, by \cref{thm:universal-g2}, to showing that this parity game is equivalent to the $2$-token game. This seems easy enough to show at first glance, but it proves to be (unsurprisingly) difficult. This result also shows that the difficulty in proving or disproving the $2$-token conjecture arises not from the language an automaton recognises, but rather from the structure of the automaton.

We also introduced lookahead games, and showed that $k$-lookahead games are equivalent to $1$-token games for all $k \geq 1$ (\cref{mainthm:lookahead}). This shows that the $1$-token games are quite powerful, in the same sense that $2$-token games are powerful due to them being equivalent to $k$-token games for all $k \geq 2$. While our $1$-token game characterisation of history-determinism on SD B\"uchi automata does not extend to parity automata (\cref{thm:joker-games-counterexample}), one can combine the two different approaches to give the $2$-token game more power, both in the form of lookahead and more tokens. It would be interesting to consider such games to try extending the $2$-token conjecture beyond B\"uchi and co-B\"uchi automata.

Our algorithm to determinise HD B\"uchi automata involves a quadratic blowup. However, we do not know whether this is tight. In fact, it is still open if HD B\"uchi automata are strictly more succinct than determinstic B\"uchi automata. Nevertheless, we are hopeful that our algorithm can offer some insights on how to make progress on this problem.

Let us end with a problem highlighted by Boker and Lehtinen in their recent survey \cite[Section 6.3.2]{BL23}. In all the existing game-based characterisations which are used to recognise HD automata efficiently, including ours (\cref{mainthm:sdbuchiautomata}), it is not clear how we can naturally convert a winning strategy for Eve from the $2$-token game or the Joker game to a winning strategy in the HD game. Our algorithm to determinise HD B\"uchi automata, however, can be seen as one: starting with a winning strategy for Eve in the Joker game, we construct a strategy in the HD game that requires linear memory, thus obtaining a deterministic automaton of quadratic size. But the proof of correctness of our algorithm relies on \cref{mainthm:sdbuchiautomata}. Towards a more pure strategy-transfer argument, where ideally an algorithm for strategy transfer also proves a game-based characterisation of history-determinism, our algorithm comes tantalisingly close. Indeed, proving that the automaton constructed in Step \ref{item:important} preserves the relevant invariants (\cref{lemma:iteration-correctness-1}) is the only place where we use the fact that we started with an HD automaton. We believe that trying to get rid of this assumption, in order to give an alternative strategy-transfer proof for the $1$-token game characterisation of history-determinism on SD B\"uchi automata, could lead to crucial insights towards better understanding HD (B\"uchi) automata and token games.  
\newpage
\bibliography{gfg} 

\begin{thebibliography}{10}

\bibitem{RK22}
Bader {Abu Radi} and Orna Kupferman.
\newblock {Minimization and Canonization of {GFG} Transition-Based Automata}.
\newblock {\em Logical Methods in Computer Science}, 18(3), 2022.
\newblock \href {https://doi.org/10.46298/lmcs-18(3:16)2022}
  {\path{doi:10.46298/lmcs-18(3:16)2022}}.

\bibitem{BK18}
Marc Bagnol and Denis Kuperberg.
\newblock {B{\"{u}}chi Good-for-Games Automata Are Efficiently Recognizable}.
\newblock In {\em Foundations of Software Technology and Theoretical Computer
  Science, {FSTTCS} 2018}, volume 122 of {\em LIPIcs}, pages 16:1--16:14.
  Schloss Dagstuhl - Leibniz-Zentrum f{\"{u}}r Informatik, 2018.
\newblock \href {https://doi.org/10.4230/LIPIcs.FSTTCS.2018.16}
  {\path{doi:10.4230/LIPIcs.FSTTCS.2018.16}}.

\bibitem{BHLP23}
Udi Boker, Thomas Henzinger, Karoliina Lehtinen, and Aditya Prakash.
\newblock {History-Determinism vs. Fair Simulation}.
\newblock To appear.

\bibitem{BKLS20b}
Udi Boker, Denis Kuperberg, Karoliina Lehtinen, and Michal Skrzypczak.
\newblock {On the Succinctness of Alternating Parity Good-for-Games Automata}.
\newblock {\em CoRR}, abs/2009.14437, 2020.
\newblock \href {http://arxiv.org/abs/2009.14437} {\path{arXiv:2009.14437}}.

\bibitem{BL23quantitative}
Udi Boker and Karoliina Lehtinen.
\newblock {Token Games and History-Deterministic Quantitative-Automata}.
\newblock {\em Logical Methods in Computer Science}, 19(4), 2023.
\newblock \href {https://doi.org/10.46298/LMCS-19(4:8)2023}
  {\path{doi:10.46298/LMCS-19(4:8)2023}}.

\bibitem{BL23}
Udi Boker and Karoliina Lehtinen.
\newblock {When a Little Nondeterminism Goes a Long Way: An Introduction to
  History-Determinism}.
\newblock {\em {ACM} {SIGLOG} News}, 10(1):24--51, 2023.
\newblock \href {https://doi.org/10.1145/3584676.3584682}
  {\path{doi:10.1145/3584676.3584682}}.

\bibitem{CaludeJKLS22}
Cristian~S. Calude, Sanjay Jain, Bakhadyr Khoussainov, Wei Li, and Frank
  Stephan.
\newblock {Deciding Parity Games in Quasi-polynomial Time}.
\newblock {\em {SIAM} J. Comput.}, 51(2):17--152, 2022.
\newblock \href {https://doi.org/10.1137/17m1145288}
  {\path{doi:10.1137/17m1145288}}.

\bibitem{CHP07}
Krishnendu Chatterjee, Thomas~A. Henzinger, and Nir Piterman.
\newblock {Generalized Parity Games}.
\newblock In {\em Foundations of Software Science and Computational Structures,
  10th International Conference, {FOSSACS} 2007}, volume 4423 of {\em Lecture
  Notes in Computer Science}, pages 153--167. Springer, 2007.
\newblock \href {https://doi.org/10.1007/978-3-540-71389-0_12}
  {\path{doi:10.1007/978-3-540-71389-0_12}}.

\bibitem{Automatabook2001}
Erich Gr{\"{a}}del, Wolfgang Thomas, and Thomas Wilke, editors.
\newblock {\em Automata, Logics, and Infinite Games: {A} Guide to Current
  Research}, volume 2500 of {\em Lecture Notes in Computer Science}. Springer,
  2002.
\newblock \href {https://doi.org/10.1007/3-540-36387-4}
  {\path{doi:10.1007/3-540-36387-4}}.

\bibitem{GH82}
Yuri Gurevich and Leo Harrington.
\newblock {Trees, Automata, and Games}.
\newblock In {\em Symposium on Theory of Computing, {STOC} 1982}, pages 60--65.
  {ACM}, 1982.
\newblock \href {https://doi.org/10.1145/800070.802177}
  {\path{doi:10.1145/800070.802177}}.

\bibitem{HP06}
Thomas~A. Henzinger and Nir Piterman.
\newblock {Solving Games Without Determinization}.
\newblock In {\em Computer Science Logic, {CSL} 2006}, volume 4207 of {\em
  Lecture Notes in Computer Science}, pages 395--410. Springer, 2006.
\newblock \href {https://doi.org/10.1007/11874683_26}
  {\path{doi:10.1007/11874683_26}}.

\bibitem{Jur00}
Marcin Jurdzinski.
\newblock {Small Progress Measures for Solving Parity Games}.
\newblock In {\em {STACS} 2000}, volume 1770 of {\em Lecture Notes in Computer
  Science}, pages 290--301. Springer, 2000.
\newblock URL: \url{https://doi.org/10.1007/3-540-46541-3\_24}.

\bibitem{JL17}
Marcin Jurdzinski and Ranko Lazic.
\newblock {Succinct Progress Measures for Solving Parity Games}.
\newblock In {\em Symposium on Logic in Computer Science, {LICS} 2017}, pages
  1--9. {IEEE} Computer Society, 2017.
\newblock \href {https://doi.org/10.1109/LICS.2017.8005092}
  {\path{doi:10.1109/LICS.2017.8005092}}.

\bibitem{KS15}
Denis Kuperberg and Michal Skrzypczak.
\newblock {On Determinisation of Good-for-Games Automata}.
\newblock In {\em Automata, Languages, and Programming - 42nd International
  Colloquium, {ICALP} 2015}, volume 9135 of {\em Lecture Notes in Computer
  Science}, pages 299--310. Springer, 2015.
\newblock \href {https://doi.org/10.1007/978-3-662-47666-6_24}
  {\path{doi:10.1007/978-3-662-47666-6_24}}.

\bibitem{Kup22}
Orna Kupferman.
\newblock {Using the Past for Resolving the Future}.
\newblock {\em Frontiers Comput. Sci.}, 4, 2022.
\newblock \href {https://doi.org/10.3389/fcomp.2022.1114625}
  {\path{doi:10.3389/fcomp.2022.1114625}}.

\bibitem{LPSW22}
Karoliina Lehtinen, Pawel Parys, Sven Schewe, and Dominik Wojtczak.
\newblock {A Recursive Approach to Solving Parity Games in Quasipolynomial
  Time}.
\newblock {\em Log. Methods Comput. Sci.}, 18(1), 2022.
\newblock \href {https://doi.org/10.46298/lmcs-18(1:8)2022}
  {\path{doi:10.46298/lmcs-18(1:8)2022}}.

\bibitem{Mar75}
Donald~A. Martin.
\newblock {Borel Determinacy}.
\newblock {\em Annals of Mathematics}, 102(2):363--371, 1975.
\newblock URL: \url{http://www.jstor.org/stable/1971035}.

\bibitem{Pra24a}
Aditya Prakash.
\newblock {Checking History-Determinism is NP-hard for Parity Automata}.
\newblock In {\em Foundations of Software Science and Computation Structures -
  27th International Conference, FoSSaCS 2024, Held as Part of the European
  Joint Conferences on Theory and Practice of Software, {ETAPS} 2024,
  Luxembourg City, Luxembourg, April 6-11, 2024, Proceedings, Part {I}}, volume
  14574 of {\em Lecture Notes in Computer Science}, pages 212--233. Springer,
  2024.
\newblock \href {https://doi.org/10.1007/978-3-031-57228-9_11}
  {\path{doi:10.1007/978-3-031-57228-9_11}}.

\bibitem{PT23}
Aditya Prakash and K.~S. Thejaswini.
\newblock {On History-Deterministic One-Counter Nets}.
\newblock In {\em Foundations of Software Science and Computation Structures -
  26th International Conference, FoSSaCS 2023}, volume 13992 of {\em Lecture
  Notes in Computer Science}, pages 218--239. Springer, 2023.
\newblock \href {https://doi.org/10.1007/978-3-031-30829-1_11}
  {\path{doi:10.1007/978-3-031-30829-1_11}}.

\bibitem{AK23}
Bader~Abu Radi and Orna Kupferman.
\newblock {On Semantically-Deterministic Automata}.
\newblock In {\em International Colloquium on Automata, Languages, and
  Programming, {ICALP} 2023}, volume 261 of {\em LIPIcs}, pages 109:1--109:20.
  Schloss Dagstuhl - Leibniz-Zentrum f{\"{u}}r Informatik, 2023.
\newblock URL: \url{https://doi.org/10.4230/LIPIcs.ICALP.2023.109}.

\bibitem{RKL21}
Bader~Abu Radi, Orna Kupferman, and Ofer Leshkowitz.
\newblock {A Hierarchy of Nondeterminism}.
\newblock In {\em Mathematical Foundations of Computer Science, MFCS 2021},
  volume 202 of {\em LIPIcs}, pages 85:1--85:21. Schloss Dagstuhl -
  Leibniz-Zentrum f{\"{u}}r Informatik, 2021.
\newblock URL: \url{https://doi.org/10.4230/LIPIcs.MFCS.2021.85}.

\bibitem{Sch20}
Sven Schewe.
\newblock {Minimising Good-For-Games Automata Is NP-Complete}.
\newblock In {\em Foundations of Software Technology and Theoretical Computer
  Science, {FSTTCS} 2020}, volume 182 of {\em LIPIcs}, pages 56:1--56:13.
  Schloss Dagstuhl - Leibniz-Zentrum f{\"{u}}r Informatik, 2020.
\newblock \href {https://doi.org/10.4230/LIPICS.FSTTCS.2020.56}
  {\path{doi:10.4230/LIPICS.FSTTCS.2020.56}}.

\bibitem{Wal02}
Igor Walukiewicz.
\newblock {Monadic Second-Order Logic on Tree-like Structures}.
\newblock {\em Theor. Comput. Sci.}, 275(1-2):311--346, 2002.
\newblock \href {https://doi.org/10.1016/S0304-3975(01)00185-2}
  {\path{doi:10.1016/S0304-3975(01)00185-2}}.

\end{thebibliography}

\newpage

\appendix
\section{Appendix for Section 3}
\lemmabhlp*
\begin{proof}
Let $\Ac=(Q,\Sigma,q_0,\Delta)$. If $\Ac$ is HD, then by \cref{lemma:hd-implies-other-games}, it simulates all such $\Sc$ with $\Lc(\Sc) \subseteq \Lc(\Ac)$. For the other direction, assume that $\Ac$ is not history-deterministic. Then, Adam has a finite memory winning strategy in the history-determinism game on $\Ac$ which can be represented by a \emph{deterministic transducer} $\Mc=(M,\Delta,m_0,\Delta_M,f:M\xrightarrow{}\Sigma)$ that outputs letters in $\Sigma$ based on transitions that Eve picks in the history-determinism game. More formally, the transducer $\Mc$ corresponds to a finite memory Adam winning strategy as follows. In round $i$ of the HD game on $\Ac$ for each $i \geq 0$, where the Eve's state is at $q_i$ and Adam's memory is at $m_i$, Adam starts by picking the letter $a_i=f(m_i)$. Eve then picks a transition $\delta_i$ in $\Ac_i$ on the letter $a_i$. Adam then updates his memory by the unique transition $m_i \xrightarrow{\delta_i}m_{i+1}$ in $\Mc$. 

Note that $\Mc$ recognises the set of winning plays of Adam, i.e., any word $w_{\Delta} \in \Delta^{\omega}$ that has a run in $\Mc$ represents a rejecting run in $\Ac$ over an accepting word in $\Ac$. 

Consider the safety automaton $\Mc_{\Sigma}=(M,\Sigma,m_0,\Delta'_M)$, obtained by `projecting' $\Mc$ onto $\Sigma$, i.e., $m \xrightarrow{a:0} m'$ is a transition in $\Mc_{\Sigma}$ iff there is a transition $m \xrightarrow{\delta} m'$ such that $f(m)=a$ and $\delta \in \Delta$ is a transition on the letter $a$. Note that $\Lc(\Mc_{\Sigma}) \subseteq \Lc(\Ac)$, since Adam only constructs accepting words in the HD game on $\Ac$ when playing according to $\Mc$.

Now, let $\Sc$ be the safety automaton obtained by determinising $\Mc_{\Sigma}$ via the subset construction. Since $\Lc(\Sc) = \Lc(\Mc_{\Sigma})$, we get that $\Lc(\Sc) \subseteq \Lc(\Ac)$. We claim that $\Ac$ does not simulate $\Sc$. Indeed, consider the following winning strategy of Adam in the simulation game between $\Ac$ and $\Sc$, where Adam uses $\Mc$ as a memory to choose letters. 

At the start of round 0, Adam's token is at the state $\{m_0\}$, his memory is $m_0$, while Eve's token is at $q_0$. At round $i$ for each $i \geq 0$, suppose that Adam's token in at $S_i$, his memory is $m_i$ such that $m_i \in S_i$, and Eve's token is at $q_i$. Then, Adam chooses the letter $a_i=f(m_i)$, Eve picks the transition $\delta_i$ on $a_i$, and Adam's token takes the deterministic transition on $a_i$ to $S_{i+1}\subseteq M$ in $\Sc$. Let $m_{i+1}$ be such that $m_i \xrightarrow{\delta_i} m_{i+1}$ is the the unique transition on $\delta_i$ at $m_i$ in $\Mc$. We update Adam's memory to be $m_{i+1}$, which is an element of $S_{i+1}$.

Note that any Eve's run on her token in the above play would be rejecting, since such plays correspond to a word in $\Mc$. But Adam's run on his token in the safety automaton $\Sc$ is accepting, as desired.
\end{proof}

\reductionuniversal*

\begin{proof}
    It is clear that 1 implies 2. For the other direction, suppose that for all SD B\"uchi automata $\Uc$ with $\Lc(\Uc) = \Sigma^{\omega}$, $\Uc$ is HD if and only if Eve wins the 1-token game on $\Uc$.
    
    Let $\Ac = (Q,\Sigma,\Delta,q_0)$ be a SD B\"uchi automaton. The backward implication is clear, that is, if $\Ac$ is HD, then Eve wins the 1-token game on $\Ac$ (\cref{lemma:hd-implies-other-games}). For the forward implication, suppose $\Ac$ is not HD. Then, by \cref{lemma:bhlp23}, we know there is a deterministic safety automaton $\Sc=(S,\Sigma,s_0,\Delta_S)$ such that $\Lc(\Sc) \subseteq \Lc(\Ac)$, and $\Ac$ does not simulate $\Sc$. For convenience, we will assume that $\Sc$ does not have any states that are not safe, and instead
    might be incomplete, i.e., there can be states from which there is no $a$-transition from some $a \in \Sigma$. Consider the product automaton of $\Ac$ and $\Sc$ given by $\Pc=(P,\Sigma,p_0,\Delta_P)$, where \begin{itemize}
        \item  The set of states $P = Q \times S$, the initial state $p_0 = (q_0,s_0)$
        \item $(q,s) \xrightarrow{a:c} (q',s')$ is a transition in $\Delta_P$ if and only if $q \xrightarrow{a:c}q'$ is a transition in $\Ac$, and $s \xrightarrow{a:0} s'$ is a transition in $\Sc$.
    \end{itemize}
    We note that $\Lc(\Pc) = \Lc(\Sc) \cap \Lc(\Ac) = \Lc(\Sc)$. Consider the automaton $\Uc$ obtained by completing $\Pc$ with an accepting sink state $f$. That is, for all states $p$ in $\Pc$, $a \in \Sigma$ such that there is no outgoing transition from $p$ on $a$, we add the transition $p\xrightarrow{a:k} f$, and the self-loops $f\xrightarrow{a:k}f$ for some even number $k$. In the case of B\"uchi automata, $k=2$.
    \begin{enumerate}
        \item $\Lc(\Uc) = \Sigma^{\omega}$. For any word $w \in \Lc(\Sc)$, it is accepted by $\Pc$ and hence by $\Uc$. If $w \notin \Lc(\Sc)$, then any run of $\Uc$ on $w$ ends up at $f$, and hence $w \in \Lc(\Uc)$. 
        \item $\Uc$ is SD. Indeed, for each state $q$, we can argue similarly that $\Lc(\Uc,q)=\Sigma^{\omega}$.
        \item $\Uc$ is not HD. The player Adam can use a winning strategy for the simulation game between $\Ac$ and $\Sc$, in the HD game on $\Uc$. Since Adam will construct a word in $\Sc$ using this strategy, Eve can never reach $f$, neither can her run in the history-determinism game on $\Uc$ be accepting.
        \item If Adam wins the 1-token game on $\Uc$, then Adam wins the 1-token game on $\Ac$. Let $\sigma$ be a winning strategy for Adam in the 1-token game on $\Uc$. Note that at any point in the 1-token game on $\Uc$, if Eve's token is at a state $(q,s)$, then Adam's token must also be at a state of the form $(q',s)$ since $\Sc$ is deterministic, for some state $q'$ in $\Ac$. Thus, $\sigma$ will never choose a letter $a$ in the $1$-token game on $\Uc$ whenever Eve's token is at a state $(q,s)$  where there is no $a$-transition on $s$ in $\Sc$, since then Eve can move her token to $f$ and win. Similarly, Adam will never be able to move his token to $f$ if he plays according to $\sigma$ as well. With this, we see that any play of the $1$-token game in $\Uc$ where Adam plays according to his $\sigma$ strategy corresponds to a play of $1$-token game on $\Ac$. It follows that Adam wins the $1$-token game on $\Ac$ by playing according to the strategy $\sigma$ and keeping in his memory states of $\Sc$.
        
    \end{enumerate}
     Since Adam wins the 1-token game on $\Uc$ if $\Uc$ is not history-deterministic by the hypothesis, 4 gives us that Adam wins the $1$-token game on $\Ac$, as desired.
\end{proof} 

\section{Appendix for Section 4}
\subsection{Lookahead Games}

\universaldelay*
\begin{proof}
    We let $K=2^n$, where $n$ is the number of states in $\Uc$. We start by proving the following claim.
    \begin{claim}\label{claim:universal-delay-lemma}
        For any state $q$ in $\Uc$ that is reachable from the initial state, and any finite word $u$ of length at least $K$, there is a run from $q$ on $u$ in $\Uc$ that visits an accepting transition.
    \end{claim}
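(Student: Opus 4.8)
The plan is to argue by contradiction. Suppose $q$ is reachable from the initial state of $\Uc$ and $u$ is a finite word with $|u| \geq K = 2^n$ (where $n$ is the number of states of $\Uc$) such that no run from $q$ on $u$ visits an accepting transition; call such a word \emph{bad for $q$}. First I would extract two consequences of universality: since $\Uc$ is semantically-deterministic and $q$ is reachable, \cref{lemma:SDautomata} gives $\Lc(\Uc,q)=\Sigma^\omega$, and the same holds for every state reachable from $q$. This yields, on the one hand, that there is at least one run from $q$ on $u$ (truncate an accepting run on some infinite extension of $u$), and on the other hand that every prefix of $u$ is again bad for $q$: a path from $q$ labelled a prefix of $u$ that used an accepting transition could be completed to a run on all of $u$ using the universality of its endpoint, contradicting badness of $u$.

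Next I would run the ``non-accepting subset construction'' from $\{q\}$ along $u$, i.e. set $S_0 = \{q\}$ and $S_{i+1} = \{\, q' : p \xrightarrow{u_{i+1}:1} q' \text{ in } \Uc \text{ for some } p \in S_i \,\}$, keeping only non-accepting transitions. Because $u$ is bad, every path from $q$ labelled a prefix of $u$ uses only non-accepting transitions, so $S_i$ is in fact the set of \emph{all} states reachable from $q$ on $u_1\cdots u_i$; by the previous paragraph this set is nonempty. Thus $S_0, \dots, S_{|u|}$ are $|u|+1 \geq 2^n+1$ nonempty subsets of an $n$-element set, so by the pigeonhole principle there are indices $0 \le a < b \le |u|$ with $S_a = S_b$.

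The core of the argument is a pumping step. Writing $v = u_{a+1}\cdots u_b$, I would show that for every $p \in S_a$, every path from $p$ labelled $v$ uses only non-accepting transitions and ends in a state of $S_a$: since $p$ is reachable from $q$ on $u_1\cdots u_a$, such a $v$-path is the tail of a path from $q$ labelled $u_1\cdots u_b$, which is a prefix of $u$ and hence bad, so all of its transitions, in particular those in the $v$-segment, are non-accepting; and its endpoint, being reachable from $q$ on $u_1\cdots u_b$, lies in $S_b = S_a$. Now consider $w = u_1\cdots u_a\, v^\omega \in \Sigma^\omega$. Any run of $\Uc$ from $q$ on $w$ reaches a state of $S_a$ after the prefix $u_1\cdots u_a$ (using only non-accepting transitions, since that prefix is bad), and then, applying the claim block by block, remains in $S_a$ forever using only non-accepting transitions. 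Such a run is rejecting, so $w \notin \Lc(\Uc,q)$, contradicting $\Lc(\Uc,q) = \Sigma^\omega$. This contradiction completes the proof.

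I expect the pumping step to be the main obstacle: the delicate point is that badness is inherited by all the words $u_1\cdots u_a v^k$, which needs one to simultaneously use $S_a = S_b$ and the fact that \emph{prefixes} of a bad word stay bad — and the latter is exactly where the universality of the states reachable from $q$ enters. The subset-construction bookkeeping and the pigeonhole count should be routine once that observation is isolated.
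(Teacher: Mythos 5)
Your proof is correct and follows essentially the same route as the paper's: both run the subset construction along $u$, apply pigeonhole to the at most $2^n$ reachable subsets to find a repeated set $S_a=S_b$, and then use universality of the ultimately periodic word $u_1\cdots u_a v^\omega$ to force an accepting transition on some $v$-block. Your contrapositive phrasing is merely a more detailed write-up of the same argument — in particular, your observation that prefixes of a bad word stay bad supplies the extension step that the paper leaves implicit when it turns the accepting transition on a $v$-block back into a run on all of $u$.
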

    \begin{claimproof}
        Note that since $\Uc$ is semantically-deterministic, we have that $\Lc(\Uc,q)=\Sigma^{\omega}$. 
        Consider the 
        sequence of sets of states that can be visited on reading the prefixes of $u$:
        $$\{q\} \xrightarrow{a_1} S_1 \xrightarrow{a_2} S_1 \cdots \xrightarrow{a_{K}}S_K,$$ where $u=a_1a_2\cdots a_K$ and $S_{i+1}$ is the set of all states that have a transition from $S_i$ on the letter $a_{i}$. Then, observe that two subsets $S_i$ and $S_j$ for $i<j$ must be the same. 
        
        Let $u'$ and $v$ be the finite words  $u'=a_1 a_2 \cdots a_i$ and $v=a_{i+1} a_{i+2} \cdots a_j$. 
        Thus, the sequence of sets of states 
        visited on prefixes of $w$ cycles from $S_i$ upon reading a $v$ to $S_i$. 
        Since $w$ is in $\Lc(\Ac,q)$, there must be an accepting transition seen from a state in $S_i$ on a run on the finite word $v$, as desired. 
    \end{claimproof}
    We proceed to show that $\delay^K(\Ac)$ is history-deterministic, by describing a winning strategy for Eve in the 
    history-determinism game on $\delay^K(\Ac)$. 
    Starting from the initial state in $\delay^K(\Ac)$, the transitions on the first $K$ letters of any word is deterministic, after which Eve's state in $\delay^K(\Ac)$ is of the form $(q_0,u_0)$, where $u_0$ is a word of length $K$. By \cref{claim:universal-delay-lemma}, we know that there is a finite run $\rho_0$ from $q_0$ on $u_0$ that contains at least one accepting transition. Hence, Eve chooses the transitions in $\delay^K(\Ac)$ for the next $K$ rounds in the history-determinism game according to $\rho_0$. After these $K$ rounds, Eve's token in the history-determinism game is at a state $(q_1,u_1)$, where again $u_1$ is a finite word of length $K$. But now Eve can repeat this strategy in the next $K$ rounds, following a run $\rho_1$ from $q_1$ on $u_1$ in $\Ac$ that contains an accepting transition. Thus, Eve repeating this strategy after every $K$ rounds causes Eve to construct an accepting run in the history-determinism game on $\delay^K(\Ac)$. It follows that $\delay^K(\Ac)$ is history-deterministic. 
\end{proof}

\subsection{Joker Games characterises History-Determinism on B\"uchi Automata}\label{appendix:b2}

\thmcorjokergame*

In order to show \cref{cor:2token-game}, we start by showing the following lemma, which allows us to reduce it to the case when $\Ac$ is semantically-deterministic. This is an adaptation of Lemma 16 in the work of Bagnol and Kuperberg~\cite{BK18}, for showing that Eve winning 2-token games on B\"uchi automata characterises history-determinism.  

\begin{lemma}\label{lemma:BK18residual}
Suppose Eve wins the Joker game on a non-deterministic parity automaton $\Ac$, and let $\Bc$ be the sub-automaton obtained by removing any transitions that are not language-preserving. Then we have:
\begin{enumerate}
    \item $\Lc(\Ac) = \Lc(\Bc)$
    \item Eve wins the Joker game on $\Bc$
    \item If $\Bc$ is history-deterministic, then so is $\Ac$
\end{enumerate}
\end{lemma}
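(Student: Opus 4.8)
The plan is to funnel all three claims through a single structural fact: \emph{every winning strategy $\sigma$ for Eve in the Joker game on $\Ac$ selects only language-preserving transitions}. The insight that makes this work is that the Joker move lets Adam abandon a doomed run of his token and re-synchronise it with a good one.

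To prove the fact, recall that $\Lc(\Ac,q)\subseteq a^{-1}\Lc(\Ac,p)$ for every transition $p\xrightarrow{a}q$, so such a transition fails to be language-preserving exactly when there is an (infinite) word $v$ with $av\in\Lc(\Ac,p)$ but $v\notin\Lc(\Ac,q)$. Suppose, for contradiction, that along some $\sigma$-consistent play $\sigma$ chooses a non-language-preserving transition $q_i\xrightarrow{a_i}q_{i+1}$ in some round $i$; fix the play up to the end of that round, and fix a witness $v$ together with an accepting run of $\Ac$ on $a_i v$ starting at $q_i$, whose first transition is $q_i\xrightarrow{a_i}r_1$. Now let Adam continue by playing a Joker in round $i$, moving his token from $q_i$ to $r_1$, and from round $i+1$ onwards playing out the letters of $v$ while moving his token along the rest of that accepting run, with no further Jokers. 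Then Adam has played only finitely many Jokers, and his sequence of transitions has an accepting tail, so it satisfies the parity condition; hence Eve's run must be accepting. But Eve's run continues from $q_{i+1}$ reading $v$, and $v\notin\Lc(\Ac,q_{i+1})$, so that run is rejecting --- contradicting that $\sigma$ is winning.

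The three parts now follow quickly. For~(1), $\Lc(\Bc)\subseteq\Lc(\Ac)$ holds because $\Bc$ is a subautomaton of $\Ac$; conversely, given $w\in\Lc(\Ac)$ let Adam play the letters of $w$, never Joker, with his token on an accepting run of $w$: Eve's $\sigma$-run on $w$ must then be accepting, and by the structural fact it uses only language-preserving transitions, hence is a run of $\Bc$, so $w\in\Lc(\Bc)$. For~(2), the same $\sigma$ only ever selects transitions of $\Bc$, so it is a legal Eve strategy in the Joker game on $\Bc$, and every $\sigma$-consistent play of that game is, verbatim, a $\sigma$-consistent play of the Joker game on $\Ac$ with the identical winning condition, so Eve still wins. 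For~(3), since $\Bc$ is a subautomaton of $\Ac$, the automaton $\Ac$ simulates $\Bc$ --- in the simulation game Eve just copies each $\Bc$-transition Adam plays, which is also an $\Ac$-transition, so her run equals his --- and as $\Lc(\Bc)=\Lc(\Ac)$ by~(1) and $\Bc$ is history-deterministic, \cref{corollary:simulationofhd-implies-hd} gives that $\Ac$ is history-deterministic.

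The only genuinely load-bearing step is the structural fact, whose crux is spotting that Adam's correct reply to a non-language-preserving move by Eve is a \emph{Joker} that jumps his token onto an accepting run of a suffix Eve cannot accept from where she has been driven; this sidesteps the token-tracking arguments that ordinary token games would otherwise demand here. Everything else --- the cooperating-Adam argument for~(1), the play-for-play correspondence for~(2), and the copy-then-invoke-\cref{corollary:simulationofhd-implies-hd} argument for~(3) --- is routine bookkeeping, modulo the harmless convention that automata are complete.
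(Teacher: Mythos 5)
Your proof is correct and follows essentially the same route as the paper's: the load-bearing step in both is that Adam punishes a non-language-preserving move by playing Joker onto an accepting run of a witness suffix, and parts (1)--(3) then follow by the same bookkeeping. The only cosmetic difference is that for (3) you detour through \cref{corollary:simulationofhd-implies-hd}, whereas the paper observes directly that an HD strategy for the subautomaton $\Bc$ is already an HD strategy for $\Ac$.
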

\begin{proof}
If Eve wins the Joker game on $\Ac$, then note that her winning strategy can only take language-preserving transitions: if she picks a transition $p\xrightarrow{a}q$ in the history-determinism game on $\Ac$ such that $\Lc(\Ac,q) \subsetneq a^{-1}\Lc(\Ac,p)$, then Adam can play Joker and construct a run on the word $aw$ where $w \in a^{-1}(\Lc(\Ac,p) \setminus \Lc(\Ac,q))$, and win the Joker game. This implies that Eve has to take transitions only in $\Bc$ to win the Joker game, which shows 2. The proof of 1 follows as well, since for any word $w \in \Lc(\Ac)$ that Adam plays, Adam can construct an accepting run in his token, forcing Eve to construct an accepting run on $w$ while taking only language-preserving transitions. 

Finally for 3, if $\Bc$ is history-deterministic, Eve can use her strategy in the history-determinism game on $\Bc$ to win the history-determinism game on $\Ac$, since $\Bc$ is a subautomaton of $\Ac$ with $\Lc(\Bc) = \Lc(\Ac)$. 
\end{proof}

\begin{corollary}\label{cor:joker-hd-buchi}
    Given a B\"uchi automaton $\Ac$, Eve wins the Joker game on $\Ac$ if and only if $\Ac$ is history-deterministic.
\end{corollary}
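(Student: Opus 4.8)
The plan is to deduce this from the 1-token game characterisation (\cref{thm:main}) together with the reduction lemma~\cref{lemma:BK18residual}. The backward implication is immediate: if $\Ac$ is history-deterministic, then Eve wins the Joker game on $\Ac$ by \cref{lemma:hd-implies-other-games}. So the work is entirely in the forward direction, where we assume Eve wins the Joker game on $\Ac$ and must produce a winning strategy for Eve in the HD game on $\Ac$.

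First I would apply \cref{lemma:BK18residual}. Let $\Bc$ be the subautomaton of $\Ac$ obtained by deleting every transition that is not language-preserving. By parts~1 and~2 of that lemma, $\Bc$ is language-equivalent to $\Ac$ and Eve wins the Joker game on $\Bc$; moreover $\Bc$ is by construction semantically-deterministic. The key observation to make here is that a winning strategy for Eve in the Joker game restricts to a winning strategy in the 1-token game on the same automaton — the Joker game only gives Adam \emph{more} power (he may additionally play Jokers), so the same Eve strategy that wins the Joker game on $\Bc$ also wins the 1-token game on $\Bc$. Hence Eve wins the 1-token game on the semantically-deterministic B\"uchi automaton $\Bc$.

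Next I would invoke \cref{thm:main}: since $\Bc$ is a semantically-deterministic B\"uchi automaton on which Eve wins the 1-token game, $\Bc$ is history-deterministic. Finally, by part~3 of \cref{lemma:BK18residual}, history-determinism of $\Bc$ transfers back to history-determinism of $\Ac$, since $\Bc$ is a language-equivalent subautomaton of $\Ac$ and Eve can simply play her HD strategy for $\Bc$ inside $\Ac$ (every transition she uses is also present in $\Ac$). This closes the forward direction and completes the proof.

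The main obstacle — though a mild one — is the passage from "Eve wins the Joker game" to "Eve wins the 1-token game" on the automaton $\Bc$; one must check that a Joker-game winning strategy for Eve, when faced with an Adam who never plays a Joker, still yields runs satisfying the 1-token winning condition, which follows because the 1-token winning condition is exactly the Joker winning condition conditioned on "Adam plays zero Jokers" (a special case of "finitely many Jokers"). Everything else is a direct chaining of \cref{lemma:BK18residual} and \cref{thm:main}, so the genuine mathematical content has already been done; this corollary is a packaging step.
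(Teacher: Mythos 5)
Your proposal matches the paper's proof essentially step for step: both directions are handled identically, via \cref{lemma:hd-implies-other-games} for the backward implication and via \cref{lemma:BK18residual} plus \cref{thm:main} (using the observation that a Joker-game win for Eve yields a $1$-token-game win) for the forward one. The only difference is that you spell out the Joker-to-$1$-token transfer slightly more explicitly than the paper does, which is harmless.
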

\begin{proof}
    $\Rightarrow:$ If Eve wins the Joker game on $\Ac$, then she wins the Joker game on $\Bc$, where $\Bc$ is the subautomaton obtained by removing transitions in $\Ac$ that are not language-preserving. Note that from~\cref{thm:main}, a semantically-deterministic B\"uchi automaton $\Bc$ is history-deterministic if and only if Eve wins the 1-token game on $\Bc$, which is true since Eve wins the Joker game on $\Bc$. It follows from~\cref{lemma:BK18residual} that $\Ac$ is history-deterministic as well. 
    
    $\Leftarrow:$ This is clear from \cref{lemma:hd-implies-other-games}.
\end{proof}

Replacing Joker game with 2-token game in \cref{lemma:BK18residual,cor:joker-hd-buchi} yields the result of Bagnol and Kuperberg that 2-token game also characterises history-determinism on B\"uchi automata~\cite{BK18} as a corollary. This conclude the proof of \cref{cor:2token-game}.

\subsection{Solving Joker Games}
Joker games can be represented as a $[0,2]$-parity game, as we show below.
\begin{definition}[Joker game on a B\"uchi automaton]
For a  non-deterministic B\"uchi automaton $\Ac=(Q,\Sigma,q_0,\Delta)$, define the game $G_J (\Ac)=(V,E)$ with the parity condition $\Omega$ as follows. 
\begin{itemize}
    \item The set of vertices $V$ consists of the set $V = V_1 \cup V_2 \cup V_3$, where:
    \begin{enumerate}
        \item $V_1 = Q \times Q$
        \item $V_2 = Q \times \Sigma \times Q$
        \item $V_3 = Q \times Q \times \Sigma$
    \end{enumerate}
    Eve's vertices are $V_{\eve} = V_2$, while Adam's vertices are $V_{\adam} = V_1 \cup V_3$. The initial vertex is $(q_0,q_0)$.
    \item The set of edges $E$ is the union of following sets:
        \begin{enumerate}
            \item $E_1 = \{(p,q) \xrightarrow{} (p,a,q) \mid a \in \Sigma \}$. These edges correspond to Adam choosing a letter at the beginning of each round. 
            \item $E_2 = \{(p,a,q) \xrightarrow{} (p',q,a) \mid p\xrightarrow{a:c}p' \in \Delta\} $. These edges correspond to Eve choosing a transition on her token 
            \item $E_3 = \{(p',q,a) \xrightarrow{} (p',q') \mid q \xrightarrow{a:c}q' \in \Delta \}$. These edges correspond to Adam choosing a transition on his token, without playing a Joker
            \item $E_J=\{(p',q,a)\xrightarrow{} (p',q') \mid p \xrightarrow{a:c}q' \in \Delta   \}$. These edges correspond to Adam playing a Joker and choosing a transition from the state of Eve's token.
        \end{enumerate}
    \item The priority function $\Omega$ is defined as follows: All  elements in $E_1$ are assigned priority 0, while edges $(p,a,q) \xrightarrow{} (p',q,a)$ in $E_2$ are assigned priority 2 if the transition $\delta=p \xrightarrow{a:c} p'$ in $\Bc$ is accepting (or equivalently, $c=2$), and 0 otherwise. The edge $(p',q,a) \xrightarrow{} (p',q')$ in $E_3$ is assigned priority 1 if the transition $q\xrightarrow{a:c} q'$ is accepting, and 0 otherwise. The edges in $E_J$ are assigned priority 2.
\end{itemize}
\end{definition}
Note that the above game has $\Oc(|\Sigma| |Q|^2)$ many vertices and $\Oc(|\Sigma| |Q||\Delta|)$ many edges with priorities in $\{0,1,2\}$. Since such parity games with $n$ vertices and $m$ edges can be solved in time $\Oc(mn)$ time~\cite[Theorem 11]{Jur00}, we get that history-determinism of a B\"uchi automaton can be decided in time $\Oc(|\Sigma|^2|Q|^3|E|)$, an improvement from the $\Oc(|\Sigma|^2|Q|^4|E|^2)$ time of Bagnol and Kuperberg~\cite[Theorem 12]{BK18}.

\jokergamecomplexity*

\section{Appendix for Section 5}
\subsection{Step-ahead Simulation}
We start by showing that the relations of \stepsimnoun and \racesimnoun is transitive. 
The transitivity of \stepsimnoun is easier to prove, as shown by the following lemma.
\begin{lemma}\label{lemma:dual-simulation-transitivity}
Let $\Ac$, $\Bc$ and $\Cc$ be non-deterministic parity automata such that $\Ac$ step-ahead simulates $\Bc$ and $\Bc$ \stepsim $\Cc$. Then $\Ac$ \stepsim $\Cc$.
\end{lemma}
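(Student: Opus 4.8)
The plan is to prove transitivity of step-ahead simulation by a straightforward strategy-composition argument, with one token of Adam played by a "shadow" copy driven by Eve's responses. Suppose Eve has winning strategies $\sigma_{AB}$ in the step-ahead simulation game between $\Ac$ and $\Bc$, and $\sigma_{BC}$ in the game between $\Bc$ and $\Cc$. I want to build a winning strategy $\sigma_{AC}$ for Eve in the game between $\Ac$ and $\Cc$. Eve will maintain, in her memory, a play of each of the two auxiliary games. Recall the move order in the step-ahead simulation game between $\Ac$ and $\Cc$: in round $i$, Adam picks a letter $a_i$, then Eve picks a transition $\delta^A_i$ on her token in $\Ac$, then Adam picks a transition $\delta^C_i$ on his token in $\Cc$.

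The key steps, carried out in order within each round $i$: (1) Adam announces $a_i$; Eve feeds $a_i$ as Adam's letter into her remembered play of the $\Bc$-vs-$\Cc$ game, and $\sigma_{BC}$ produces a transition $\delta^B_i$ on the $\Bc$-token. (2) Eve now feeds $a_i$ as Adam's letter and $\delta^B_i$ as Adam's $\Bc$-token transition into her remembered play of the $\Ac$-vs-$\Bc$ game; $\sigma_{AB}$ produces a transition $\delta^A_i$ on the $\Ac$-token, which Eve plays as her move in the real $\Ac$-vs-$\Cc$ game. (3) Adam plays $\delta^C_i$ on his $\Cc$-token; Eve records this as Adam's $\Cc$-token move in the remembered $\Bc$-vs-$\Cc$ play. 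One must check the move orders are compatible, i.e.\ that each auxiliary-game move is available exactly when it is needed: the $\Bc$-vs-$\Cc$ game needs the letter, then Eve's $\Bc$-move, then Adam's $\Cc$-move — and indeed Eve can produce the $\Bc$-move before Adam's $\Cc$-move is required; the $\Ac$-vs-$\Bc$ game needs the letter, Eve's $\Ac$-move, then Adam's $\Bc$-move, but Eve already has $\delta^B_i$ in hand from step (1), so she can consistently treat $\delta^B_i$ as "Adam's $\Bc$-move played after Eve's $\Ac$-move"; the order in which Eve internally computes things is fine because $\sigma_{AB}$'s output $\delta^A_i$ does not depend on $\delta^B_i$ within the same round. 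This dependency check is the only subtle point, and it is what "step-ahead" buys us: the auxiliary runs stay one move behind in exactly the right way.

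Finally I would verify the winning condition. The three tokens produce runs $\rho_A$ on $\Ac$, $\rho_B$ on $\Bc$, $\rho_C$ on $\Cc$, all on the same word $w = a_0 a_1 \cdots$. Since $\sigma_{BC}$ is winning, if $\rho_C$ is accepting then $\rho_B$ is accepting. Since $\sigma_{AB}$ is winning, if $\rho_B$ is accepting then $\rho_A$ is accepting. Hence if $\rho_C$ is accepting then $\rho_A$ is accepting, which is exactly Eve's winning condition in the $\Ac$-vs-$\Cc$ game; thus $\sigma_{AC}$ is winning and $\Ac$ step-ahead simulates $\Cc$. I expect the main (mild) obstacle to be the bookkeeping in the dependency check of the previous paragraph — making sure that nothing in Eve's composed strategy requires knowing a move before it has been played; everything else is routine.
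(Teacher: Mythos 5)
Your proof is correct, but it takes a genuinely different route from the paper's. You compose the two winning strategies directly: in each round Eve first queries $\sigma_{BC}$ on the new letter to obtain a $\Bc$-transition, treats that transition as Adam's forthcoming reply in the remembered $\Ac$-versus-$\Bc$ play so that $\sigma_{AB}$ can supply her actual $\Ac$-move, and then feeds Adam's real $\Cc$-move back into the $\Bc$-versus-$\Cc$ play. The dependency check you flag --- that $\sigma_{AB}$'s round-$i$ output depends only on the history up to the letter $a_i$ and not on the round-$i$ $\Bc$-move --- is exactly what makes the composition well-founded, and the chain of implications on the three runs gives the winning condition; the argument is sound. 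The paper instead reuses its $\delay$ machinery: as in \cref{lemma:g1todelaysim}, an automaton $\Xc$ \stepsim $\Yc$ if and only if $\Xc$ simulates $\delay(\Yc)$; moreover $\delay(\Bc)$ simulates $\Bc$ (Eve copies Adam's previous-round move), so one gets that $\Ac$ simulates $\delay(\Bc)$, $\delay(\Bc)$ simulates $\Bc$, and $\Bc$ simulates $\delay(\Cc)$, and concludes by transitivity of ordinary simulation. Your version is more elementary and self-contained, needing neither the $\delay$ construction nor the equivalence of the two game arenas; the paper's version buys shorter bookkeeping by delegating the composition to the (standard, but still composition-based) transitivity of plain simulation.
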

\begin{proof}
Similar to \cref{lemma:g1todelaysim}, we can argue that the game arenas of the expanded game arenas of \stepsimnoun game between $\Ac$ and $\Bc$, and of simulation game between $\Ac$ and $\delay(\Bc)$ are equivalent. Thus, $\Ac$ simulates $\delay(\Bc)$ and $\Bc$ simulates $\delay(\Cc)$. Moreover, it is clear that $\delay(\Bc)$ simulates $\Bc$: in the simulation game between $\delay(\Bc)$ and $\Bc$, Eve in $\delay(\Bc)$ in $(i+1)^{th}$ round can simply `copy' Adam's transition on $\Bc$ in the $i^{th}$ round for each $i \geq 0$. All in all, we have that $\Ac$ simulates $\delay(\Bc)$, $\delay(\Bc)$ simulates $\Bc$, and $\Bc$ simulates $\delay(\Cc)$. By transitivity of simulation, we get that $\Ac$ simulates $\delay(\Cc)$, or equivalently, $\Ac$ \stepsim $\Cc$.
\end{proof}

Let us recall the definition of \racesimnoun for B\"uchi automata.
\cldeflocaldualsimulation*

We shall reduce \racesimnoun between two B\"uchi automata to \stepsimnoun between two \emph{reachability automata}. These are restricted B\"uchi automata where all the accepting transitions occur as self-loops on an accepting sink state, and this sink state has accepting self-loops on all letters in its alphabet. 

For each non-deterministic B\"uchi automaton $\Ac = (Q,\Sigma,q_0,\Delta_A)$, we construct the reachability automaton $\Ac_{\#}=(Q_\#,\Sigma_{\#},q_0,\Delta_\#)$ as follows. The automaton $\Ac_{\#}$ is over the alphabet $\Sigma_{\#}=\Sigma \cup \{\#\}$ for some $\# \notin \Sigma$. The set $Q_{\#}$ consists of the states $Q$ and two additional sink states $f$ and $r$. The transitions $\Delta_\#$ consists of all rejecting transitions in $\Delta$, and the following additional transitions.
\begin{enumerate}
    \item Transitions $p \xrightarrow{a:1} f$ for all accepting transition $p \xrightarrow{a:2} q$ in $\Ac$.
    \item Transitions $p\xrightarrow{\#:1} r$ for each state $p$ in $Q$.
    \item Transitions $f \xrightarrow{a:2} f$ and $r \xrightarrow{a:1}r$ for each $a \in \Sigma$.
\end{enumerate}

The following lemma relates \racesimnoun between two B\"uchi automata $\Ac$ and $\Bc$ to \stepsimnoun between $\Ac_{\#}$ and $\Bc_{\#}$.
\begin{lemma}\label{lemma:local-to-dual-sim}
Given two B\"uchi automata $\Ac$ and $\Bc$, $\Ac$ \racesim $\Bc$ if and only if $\Ac_\#$ \stepsim $\Bc_\#$.
\end{lemma}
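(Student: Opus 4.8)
The plan is to prove both implications by mirroring strategies between the \racesimgame on $\Ac$, $\Bc$ and the \stepsimgame on $\Ac_\#$, $\Bc_\#$, with the fresh letter $\#$ as the decisive gadget: $\#$ is the move by which Adam abandons an ongoing ``sprint'' and flushes both tokens (while they still sit in $Q$) into the rejecting sink $r$. In keeping with the notion of a reachability automaton I read $f$ (resp.\ $r$) as an accepting (resp.\ rejecting) sink carrying a self-loop on \emph{every} letter of $\Sigma_\#$, so that in the \stepsimgame a run is accepting exactly when it visits $f$. The two games are over matching automata, in the sense that Eve plays on $\Ac$ (resp.\ $\Ac_\#$) and Adam on $\Bc$ (resp.\ $\Bc_\#$), and $f$ is reached in $\Ac_\#$ precisely by the transitions $p\xrightarrow{a:1}f$ coming from accepting transitions of $\Ac$, and in $\Bc_\#$ precisely by those coming from accepting transitions of $\Bc$.

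I would first record two facts about $\Ac_\#$ and $\Bc_\#$. Structurally, a token leaves $Q$ only along a $\#$-transition $p\xrightarrow{\#:1}r$ or along some $p\xrightarrow{a:1}f$, and the latter is present exactly when the underlying B\"uchi automaton has an accepting $a$-transition out of $p$; so a token stays in $Q$ until either a $\#$ is played or its owner deliberately enters $f$. Second — and this is the key point — when Eve follows a winning strategy in the \stepsimgame on $\Ac_\#$, $\Bc_\#$, Adam's token can never reach $f$ while Eve's token is still in $Q$: were that configuration to arise, Adam would simply play $\#$ from then on, dragging Eve's token into $r$ while keeping his own at $f$, so that his run is accepting and hers is not, contradicting that she wins.

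For the implication ``$\Ac$ \racesim $\Bc$ implies $\Ac_\#$ \stepsim $\Bc_\#$'', I would fix a winning strategy $\tau$ for Eve in the \racesimgame, taken positional since its objective is a $[0,1]$-parity condition, and have Eve play the \stepsimgame while maintaining a shadow \racesimgame play. On a letter $a\in\Sigma$ she consults $\tau$ at the shadow position $(q,p)$; if $\tau$ prescribes an accepting transition of $\Ac$ she takes the corresponding $q\xrightarrow{a:1}f$, enters $f$, and has won; if $\tau$ prescribes a rejecting transition $q\xrightarrow{a:0}q'$ she replays it in $\Ac_\#$. In the latter case $p$ has no accepting $a$-transition in $\Bc$ — else Adam could answer with it and beat $\tau$ in the sprint — so Adam's only replies in $\Bc_\#$ are the rejecting transitions $p\xrightarrow{a:0}p''$, which are legal \racesimgame moves; Eve advances the shadow to $(q',p'')$. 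In particular Adam's \stepsimgame token never leaves $Q$ before Eve wins, so if Adam plays $\#$ at any point both tokens are flushed to $r$ and his run is rejecting. Hence in every play Eve's run reaches $f$ or Adam's does not, so Eve wins the \stepsimgame.

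For the converse, I would fix a winning strategy $\sigma$ for Eve in the \stepsimgame on $\Ac_\#$, $\Bc_\#$ and let Eve play the \racesimgame while maintaining a shadow \stepsimgame that copies Adam's \racesimgame letters verbatim: when Adam picks $a$ she queries $\sigma$; if it moves her $\Ac_\#$-token to $f$ she takes a matching accepting transition of $\Ac$ and wins the sprint, and otherwise she replays $\sigma$'s rejecting transition in $\Ac$, entering into the shadow, on Adam's behalf, his \racesimgame transition (rejecting, or $p\xrightarrow{a:1}f$ if it was accepting). Since Eve wins every infinite play of the \racesimgame, the only way this could fail is if Adam takes an accepting transition at some round $i$ while Eve's $\delta_i$ is not accepting; but then in the shadow Eve's $\Ac_\#$-token is still in $Q$ after round $i$ while Adam's $\Bc_\#$-token has just entered $f$, and extending the shadow by letting Adam play $\#$ forever produces a $\sigma$-consistent play that Eve loses — a contradiction. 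Hence no such round exists, every play of the \racesimgame under Eve's strategy is infinite, and Eve wins; so $\Ac$ \racesim $\Bc$. I expect the real work to lie in the second paragraph: making the ``$\#$ punishes a late arrival at $f$'' argument watertight, including the precise behaviour of the completed automata when a token already rests on $f$ or $r$; once that and the sink conventions are settled, the two mirroring arguments are routine bookkeeping.
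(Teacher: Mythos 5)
Your proof is correct and follows essentially the same approach as the paper's: both directions rest on the correspondence between ``taking an accepting transition first'' and ``reaching $f$ first'', with $\#$ as Adam's device for punishing Eve when his token reaches $f$ while hers is still in $Q$ (and you rightly note that $f$ and $r$ must carry self-loops on $\#$ as well for this to work). The only cosmetic difference is that the paper argues the backward direction contrapositively via determinacy, transferring Adam's winning sprint strategy, whereas you transfer Eve's step-ahead strategy directly and rule out Adam's early accepting transition by a shadow-play contradiction; the underlying mechanism is the same.
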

\begin{proof}
    $\Rightarrow.$ Observe that Eve taking an accepting transition on her token in no later round than Adam in the \racesimnoun game between $\Ac$ and $\Bc$ corresponds to Eve's token reaching the state $f$ in no later round than Adam's in the \stepsimnoun game between $\Ac_{\#}$ and $\Bc_{\#}$. 

    $\Leftarrow.$ Suppose Adam has a strategy to ensure that his token can take an accepting transition on his token before Eve in the \racesimnoun game between $\Ac$ and $\Bc$. Then, Adam, by playing according to the same strategy in the \stepsimnoun game between $\Ac_{\#}$ and $\Bc_{\#}$, can ensure his token reaches $f$ before Eve's token does. From here, he can select $\#$ as letters for all the future rounds, causing Eve's run on her token to be rejecting while his run on his token is accepting.
\end{proof}
We utilise the above construction to prove \cref{lemma:dual-simulation-2} as well.
\cllemmadualsimulationtwo*
\begin{proof}
    It is clear that if $q$ is \lex, then $q$ \racesim itself. Thus, let $q$ be a state such that $(\Hc,q)$ \stepsim $(\Hc,q)$.   From \cref{lemma:local-to-dual-sim}, this is equivalent to $(\Hc_\#,q)$ step-ahead simulating itself, which in turn is equivalent to the Eve winning the 1-token game on $(\Hc_\#,q)$. Boker and Lehtinen have shown that $1$-token games characterises history-determinism on reachability automata and HD reachability automata have positional strategies in the history-determinism game which can be found in polynomial time via $1$-token games~\cite[Theorems 4.8 and 4.10]{BL23quantitative}.
    
    Fix an uniform positional strategy in the HD game from all states $q$ such that $(\Hc_{\#},q)$ is HD, and let $\Fc_{\#}$ be the deterministic subautomaton consisting of the transitions in this strategy.  
    Let $\Fc'$ be the largest common subautomaton of $\Hc$ and $\Fc_{\#}$. Observe that for each state $q\in Q$ and letter $a \in \Sigma$ such that there is a transition $q \xrightarrow{a:2} f$ in $\Hc_{\#}$, there is a transition $q\xrightarrow{a:2}p$ in $\Hc$ by construction. Adding one such transition in $\Fc'$ for each such state and letter pair, we get a deterministic subautomaton $\Fc$ of $\Hc$ that is a \lexnoun witness for all states $q$ that \racesim itself, as desired. 
\end{proof}

\paragraph*{Formal description of $\Dc$}
Let $\Hc$ be a history-deterministic automaton such that it is semantically-deterministic, and for every state $p$ in $\Hc$, there is a language equivalent state $q$ such that $p$ \racesim $q$. Fix the subautomaton $\Fc$ from \cref{lemma:dual-simulation-2}, and a positional Eve strategy $\tau: Q \times \Sigma \times Q \xrightarrow{} \Delta$ for all pairs of states $(p,q)$ such that $p$ \racesim $q$. Then, the deterministic automaton $\Dc$ is constructed as follows.
\begin{itemize}
    \item The states of $\Dc$ consists of states $(p,q)$ that are pairs of language-euivalent states in $\Hc$ such that $p$ \racesim $q$ and $q$ is \lex.  We let the initial state $d_0$ to be $(q_0,p_0)$ such that $p_0$ is \lex  and $q_0$ \racesim $r$. 
    \item Fix an ordering on the states of $\Hc$. $\Dc$ has the following transition on reading the letter $a$ from a state $(q,p)$. 
    \begin{itemize}
      \item The transition $\delta(q,p) \xrightarrow{a:2} (q',p')$, if there is an accepting transition $q\xrightarrow{a:2}q'$ in $\Ac$, where $p'$ is the least state in the ordering such that $p'$ is equivalent to $q'$,  $q'$ \racesim $p'$ and $p'$ is $\lex$.
        \item Otherwise, the transition $(q,p) \xrightarrow{a:1} (q',p')$, where $\tau(q,a,p)= (q \xrightarrow{a:1}q')$ is the transition prescribed by Eve's \racesimnoun strategy $\tau$,  and $p\xrightarrow{a:1} p'$ is the unique transition from $p$ on $a$ in $\Fc$. 
    \end{itemize}
\end{itemize}

\subsection{Towards Automata with Stepahead Simulation}
Recall that we call a history-deterministic B\"uchi automaton $\Hc$ good if $\Hc$ is semantically-deterministic and Eve wins the Joker game from all states in $\Hc$. We show that every history-deterministic B\"uchi automaton can be converted to an equivalent good HD B\"uchi automaton in polynomial time.

\begin{lemma}\label{lemma:good-automata}
Every history-deterministic B\"uchi automaton $\Hc$ has an equivalent good history-deterministic B\"uchi automaton $\Hc_N$ as a subautomaton which can be computed in polynomial time.
\end{lemma}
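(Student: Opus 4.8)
The plan is to follow the recipe indicated in the proof sketch: fix a winning strategy $\tau$ for Eve in the Joker game on $\Hc$ (which exists by \cref{lemma:hd-implies-other-games}, since $\Hc$ is HD), and let $\Hc_N$ be the subautomaton of $\Hc$ consisting of the transitions that $\tau$ ever prescribes. Concretely, view the Joker game on $\Hc$ as the $[0,2]$-parity game $G_J(\Hc)$; it is positionally determined, so take $\tau$ positional. Let $\Hc_N$ have as states all $p$ such that $(p,q)$ is reachable from $(q_0,q_0)$ by a $\tau$-consistent play for some $q$, and as transitions all $p \xrightarrow{a} p'$ that $\tau$ prescribes at such a position $(p,a,q)$. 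This set of transitions is closed (extending a $\tau$-consistent play by one round shows that the target of any $\tau$-prescribed transition is again a state of $\Hc_N$), so $\Hc_N$ is a well-defined subautomaton of $\Hc$ with initial state $q_0$. By the argument in the proof of \cref{lemma:BK18residual}, every transition of $\Hc_N$ is language-preserving, which (together with \cref{lemma:SDautomata} applied to the reachable fragment) gives that $\Hc_N$ is semantically-deterministic; and $\Lc(\Hc_N) = \Lc(\Hc)$, because for any $w \in \Lc(\Hc)$, letting Eve play $\tau$ against an Adam who plays $w$, plays no Jokers, and builds an accepting run on his own token forces Eve to build an accepting run on $w$ that uses only transitions of $\Hc_N$. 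Solving $G_J(\Hc)$, extracting a positional strategy, and computing the $\tau$-reachable part are all polynomial, so $\Hc_N$ is computable in polynomial time.

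It remains to show that Eve wins the Joker game from every state $p$ of $\Hc_N$: then $\Hc_N$ is good by definition, and history-deterministic by \cref{cor:joker-hd-buchi} (a B\"uchi automaton on which Eve wins the Joker game). Fix a state $p$ of $\Hc_N$ and a $\tau$-consistent play of $G_J(\Hc)$ from $(q_0,q_0)$ reaching a position $(p,q^\ast)$; since $\tau$ is winning from $(q_0,q_0)$, it remains winning from $(p,q^\ast)$. I claim Eve wins $G_J(\Hc_N)$ from $(p,p)$ --- that is, the Joker game on $(\Hc_N,p)$ --- by a \emph{mirroring} strategy. Eve maintains a shadow play of $G_J(\Hc)$ started at $(p,q^\ast)$ in which she follows $\tau$, and in the real game she always copies the transition that shadow-Eve plays; this is legal because the shadow play stays $\tau$-consistent, so every transition shadow-Eve uses lies in $\Hc_N$. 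For the shadow Adam --- against whom $\tau$ wins regardless --- she proceeds as follows. In round $0$ the real position is $(p,p)$, so the real Adam's move, regular or Joker, is a transition out of $p$; the shadow Adam, whose token sits at $q^\ast$, replays exactly this transition as a \emph{Joker} out of shadow-Eve's token, which is also $p$. After round $0$ the real and shadow positions coincide, and thereafter the shadow Adam simply copies the real Adam move for move (regular for regular out of the common Adam-token, Joker for Joker out of the common Eve-token), so the two plays stay identical forever. Hence the shadow Adam's run equals the real Adam's run and the shadow Joker-count exceeds the real one by at most one; so whenever the real Adam plays finitely many Jokers and builds an accepting run, so does the shadow Adam, and then $\tau$ makes shadow-Eve's run accepting. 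Since Eve's real run equals shadow-Eve's run, she wins.

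The only genuine difficulty is this last step: lifting ``Eve wins'' from the single diagonal vertex $(q_0,q_0)$ to every diagonal vertex $(p,p)$ with $p$ a state of $\Hc_N$. The mirroring argument is the crux, and it relies on two features: $\tau$-consistency is preserved under extending plays, so every transition the shadow requires is guaranteed to be available in the trimmed automaton $\Hc_N$; and in the opening round both of Adam's options are transitions out of Eve's token, which lets the shadow Adam ``teleport'' onto the real Adam's position using a single Joker, so that the shadow never exceeds a finite Joker budget whenever the real Adam does not. Everything else --- that $\Hc_N$ is a subautomaton recognising the same language, that it is semantically-deterministic, and that it is polynomial-time computable --- is routine bookkeeping.
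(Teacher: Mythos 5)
Your proposal is correct and follows essentially the same route as the paper: the same construction of $\Hc_N$ from the transitions prescribed by a positional winning Joker-game strategy $\tau$, the same language-equivalence and polynomial-time arguments. The only difference is in showing that Eve wins the Joker game from every state of $\Hc_N$: the paper argues by contradiction (Adam steers to a losing state and resynchronises with one Joker), whereas you argue directly via a mirrored shadow play in which the shadow Adam resynchronises with one Joker --- the same underlying trick viewed from the other player's side.
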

\begin{proof}
    Fix a positional winning strategy $\tau$ for Eve on the Joker game of $\Hc$. Consider the automaton $\Hc_N$ only consisting of transitions that can be taken via Eve's token in any play of the Joker game where Eve is playing according to $\tau$. We shall show that $\Hc_N$ is good. 
    
    Since $\Hc_N$ consists of transitions that are used by a winning strategy in the Joker game of $\Hc$ for Eve, it follows that Eve wins the Joker game on $\Hc_N$. Furthermore, we claim that Eve wins the Joker game from all states in $\Hc_N$. Suppose to the contrary that there is a state $q$ in $\Hc_N$ such that Eve loses the Joker game on $(\Hc_N,q)$. Then, consider the play of the Joker game where Eve is playing according to $\tau$ and Eve's token reaches the state $q$, Adam then plays Joker in the next round and plays to win the Joker game. Then, Adam wins the Joker game, which is a contradiction since $\tau$ is a winning strategy. We argue that the automaton $\Hc_N$ only consists of language-preserving transitions similarly.

    The automaton  $\Hc_N$ is also equivalent to $\Hc$. Indeed, for any word $w$ accepted in $\Hc$ with an accepting run $\rho$, consider the play of the Joker game where Eve plays according to $\tau$ while Adam moves his token according to $\rho$. Then, since $\tau$ is a winning strategy, Eve's run in her token is accepting. As the run of her token is entirely contained in $\Hc_N$, we get that $w \in \Lc(\Hc_N)$. 

    Since a winning strategy for Eve in the Joker game on $\Hc$ can be found in polynomial time, such an $\Hc_N$ can be constructed from $\Hc$ in polynomial time.
\end{proof}

Let us fix a good history-deterministic B\"uchi automaton $\Hc$. We recall the iterative procedure presented in \cref{subsec:normalising-automata}. We start by setting $\Hc_0 = \Hc$, and then do the following three steps on $\Hc_i$ for each $i$ till we get $\Hc_{i+1}=\Hc_i$.

\begin{enumerate}
    \item For vertices $(p,q) \in G_1(\Hc_i)$, compute $\opt_i(p)$ to be the optimal-ranks of states in $G_1(\Hc_i)$.
    \item Remove each transition $q \xrightarrow{a:1}q'$ with $\opt_i(q)<\opt_i(q')$ in $\Hc_i$ to obtain $\Hc'_i$.
    \item Make each transition $q \xrightarrow{a:1} q'$ with $\opt_i(q)>\opt_i(q')$ in $\Hc'_i$ accepting, to obtain $\Hc_{i+1}$.
\end{enumerate}

We will show that steps 2 and 3 preserves the invariants that $\Hc'_i$ and $\Hc_{i+1}$ are good and language-equivalent to $\Hc_i$. 

\begin{lemma}\label{lemma:iteration-correctness-1}
    $\Hc'_i$ is good and $\Lc(\Hc'_i,q)=\Lc(\Hc_i,q)$ for each state $q$ in $\Hc_i$.
\end{lemma}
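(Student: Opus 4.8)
The plan is to run the whole argument inside the $1$-token game $G_1(\Hc_i)$ and its ranks. Since $\Hc_i$ is good it is semantically-deterministic and, by \cref{cor:joker-hd-buchi}, every $(\Hc_i,q)$ is history-deterministic; hence Eve wins $G_1(\Hc_i)$ from \emph{every} vertex (she plays the history-deterministic strategy of the first component against Adam's letters, and the two components of a vertex are language-equivalent), so all ranks of $G_1(\Hc_i)$ are finite, $\opt_i$ is everywhere defined, and there is a positional optimal Eve strategy $\tau$ in $G_1(\Hc_i)$; in particular every reachable state of $\Hc_i$ has a transition on every letter. Call a vertex $(q,p)$ a \emph{witness} if $\rank(q,p)=\opt_i(q)$; by the definition of $\opt_i$ every state is the first component of a witness. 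We restrict attention to reachable states, which is harmless.

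The key step is the claim that \emph{$\tau$, started from a witness, never plays a transition deleted in step~\ref{item:important}}: if $(q,p)$ is a witness, Adam plays $a$, and $\tau$ prescribes $q\xrightarrow{a:c}q'$, then either $c=2$ or $\opt_i(q')\le\opt_i(q)$. I would prove this by tracing ranks along the round $(q,p)\to(q,a,p)\to(q',p,a)\to(q',p')$ for an arbitrary $\Hc_i$-successor $p'$ of $p$ on $a$: the first and third edges belong to Adam and the second is $\tau$'s and has priority $0$ (it is rejecting), so by the rank-monotonicity lemma (\cref{lemma:ranksssss}) no rank increases, giving $\opt_i(q')\le\rank(q',p')\le\rank(q,p)=\opt_i(q)$. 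In particular $\Hc'_i$ is still complete on reachable states, since each such state retains a $\tau$-prescribed transition from every witness, on every letter.

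Next, for each reachable state $q$, I would describe a single strategy $\sigma_q$ for Eve in the history-determinism game on $(\Hc'_i,q)$ and show it builds an accepting run whenever Adam's word $w$ lies in $\Lc(\Hc_i,q)$. Eve maintains a ``shadow'' state $p_k$ and a ``shadow'' run $\rho_k$ with the invariant that $(q_k,p_k)$ is a witness (where $q_k$ is her real token) and $\rho_k$ is an accepting run of $\Hc_i$ on the unread suffix of $w$ starting at $p_k$ — which exists because that suffix lies in $\Lc(\Hc_i,q_k)=\Lc(\Hc_i,p_k)$ by semantic determinism (\cref{lemma:SDautomata}). Each round she plays $\tau(q_k,a,p_k)$, which is legal in $\Hc'_i$ by the key claim, advances $\rho_k$ by one transition, and — if the resulting pair is no longer a witness — resets the shadow to a fresh witness for $q_{k+1}$ together with a fresh accepting shadow run.

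Finally I would verify that $\sigma_q$ wins. By the key claim every rejecting move on Eve's run leaves $\opt_i$ of her token non-increasing, so $\opt_i$ can grow only at accepting transitions; if there are infinitely many of those, her run is accepting and we are done. Otherwise $\opt_i$ is eventually a constant $c$ and her run is rejecting past some point, so in this tail every $(q_k,p_k)$ has rank $c$; the next pair then also has rank $\le c=\opt_i(q_{k+1})$, hence is again a witness, so the shadow is never reset in the tail; and since the rank never drops there, \cref{lemma:ranksssss} forces every transition taken by the now fixed shadow run to be rejecting, contradicting that it is a suffix of an accepting run. Hence $\sigma_q$ wins on every $w\in\Lc(\Hc_i,q)$, which gives $\Lc(\Hc_i,q)\subseteq\Lc(\Hc'_i,q)$ and so equality for every state $q$; consequently every transition of $\Hc'_i$ is still language-preserving, so $\Hc'_i$ is semantically-deterministic; $\sigma_q$ then wins the full history-determinism game on $(\Hc'_i,q)$, so $(\Hc'_i,q)$ is history-deterministic, and by \cref{lemma:hd-implies-other-games} Eve wins the Joker game on it — so $\Hc'_i$ is good. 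The main obstacle is the key claim together with the tail analysis of $\sigma_q$; history-determinism of $\Hc_i$ enters only to guarantee that the ranks are finite, so that $\opt_i$ and $\tau$ exist in the first place.
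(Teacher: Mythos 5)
Your proof is correct, and it takes a genuinely different route from the paper's, even though both are powered by the same engine: the rank-optimal strategy $\tau$ in $G_1(\Hc_i)$ and the monotonicity of ranks (\cref{lemma:ranksssss}). The paper proves the relative statement that $(\Hc'_i,q)$ \stepsim $(\Hc_i,q)$: Eve plays $\tau$ on the pair (her token, a memory state), and whenever $\tau$ would use a deleted transition she ``resets'' the memory to a rank-optimal partner; the auxiliary moves of the memory token are governed by \stepsimnoun strategies $\pi,\pi'$ between language-equivalent states (whose existence uses history-determinism of $\Hc_i$), and progress is measured by the strict decrease of $\rank(p_j,m_j)$ at every reset or accepting memory transition. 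You instead prove history-determinism of $\Hc'_i$ directly: your key claim that $\tau$ launched from a witness pair never prescribes a deleted transition makes the reset unnecessary for legality (only for re-establishing the witness invariant), your ``memory'' is a concrete accepting shadow run rather than a strategy-driven token, and your progress argument is a tail contradiction (in a rejecting tail the pair's rank is pinned at $c=\opt_i$, so the fixed shadow run can never take an accepting transition). What your version buys is a cleaner isolation of where history-determinism of $\Hc_i$ is actually needed --- only to make all ranks of $G_1(\Hc_i)$ finite --- which is exactly the point the paper's discussion section flags as the obstacle to a pure strategy-transfer proof; what the paper's version buys is the explicit \stepsimnoun statement between $\Hc'_i$ and $\Hc_i$, from which goodness of $\Hc'_i$ follows immediately by restriction, whereas you have to route back through \cref{lemma:hd-implies-other-games} to recover the Joker-game condition. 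Two small points to tie down in a polished write-up: the existence of an $\Hc_i$-successor $p'$ of $p$ on $a$ in your rank trace (needed so that the vertex $(q',p,a)$ is not a dead end; it follows from completeness of reachable states, which you derive), and the observation that the non-reset successor pair $(q_{k+1},p_{k+1})$ is again a vertex of $G_1(\Hc_i)$, i.e., both states are reachable on the same word, so that ranks and $\tau$ remain defined.
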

\begin{proof}
    Fix $\tau_i$ to be a rank-optimal Eve strategy in $G_1(\Hc_i)$. Let $q$ be a state in $\Hc_i$. We will first show that Eve wins the \stepsimnoun game between $(\Hc'_i,q)$ and $(\Hc_i,q)$, by giving a winning strategy $\sigma$ for Eve. The strategy $\sigma$ will require Eve to store in her memory a state $m$ in $\Hc_i$, which we initialise to be $m=q$. 
    
    Throughout the play we will preserve the following two invariants.
    \begin{enumerate}
        \item[I1] The state of Eve's token, Adam's token, and Eve's memory are all equivalent in $\Hc_i$.
        \item[I2] If Eve's token and Eve's memory are at the states $p_j$ and $m_j$ respectively, then $(p_j,m_j)$ is a vertex in $G_1(\Hc_i)$, i.e., $p_j$ and $m_j$ can be reached from the initial state $q_0$ upon reading the same word.
    \end{enumerate}
    
    Let us also fix a strategy $\pi$ for Eve in the \stepsimnoun game between $(\Hc'_i,m)$ and $(\Hc'_i,q)$, which we know exists since $m$ and $q$ are language-equivalent, and $(\Hc'_i,m)$ is HD since Eve wins the Joker game on $(\Hc'_i,m)$ (\cref{cor:joker-game}). 

    Eve then plays according to $\sigma$ as follows.  At a position $(p_j,q_j)$ where Eve's memory state is $m_j$, suppose Adam chooses the letter $a_j \in \Sigma$. Let $\delta_j=p_j\xrightarrow{a_j:c_j}p'_j$ be the transition prescribed by the rank-optimal strategy $\tau_i$. We will distinguish between the following two cases.
    \begin{enumerate}
        \item If $\delta_j$ is a transition in $\Hc'_i$. Then Eve picks the transition $p_j \xrightarrow{a_j:c_j}p'_j$ in her token. Eve updates her memory $m_{j+1}$ according to the transition prescribed by $\pi$ in the \stepsimnoun game between $(\Hc'_i,m_j)$ and $(\Hc'_i,q_j)$.
        \item If $\delta_j$ is not a transition in $\Hc'_i$, i.e., it has been removed from $\Hc_i$ to obtain $\Hc'_i$. Then, let $m'$ be the state so that $\opt_i(p_j)=\rank_i(p_j,m')$. We then let Eve pick the transition on $a_j$ prescribed by $\tau_i$ in the \stepsimnoun game between $(\Hc_i,p_j)$ and $(\Hc_i,m')$. As for the memory, we fix another \stepsimnoun strategy $\pi'$ between $(\Hc_i,m')$ and $(\Hc_i,q_j)$. Eve will then update her memory according to transition prescribed by $\pi'$ on the letter $a_j$.
    \end{enumerate}

    This concludes the description of the strategy $\sigma$ for Eve. To see that this is a winning strategy, note that whenever Eve's token takes a rejecting transition in a play from the \stepsimnoun game between $(\Hc'_i,p_j)$ and $(\Hc_i,q_j)$, the quantity $\rank_i(p_j,m_j)$ is non-increasing, and strictly decreases if Eve makes a move of the second kind. 
    
    If Adam constructs an accepting run on his token, then either Eve makes a move of the second kind eventually, or the memory token takes an accepting transition (since $\pi$ is a winning strategy). Both of these scenarios cause the quantity $\rank_i(p_j,m_j)$ to strictly decrease. Since $\rank(p_j.m_j)$ is finite and cannot go below 0, Eve's token will take an accepting transition on her token in $\Hc'_i$ eventually if Adam constructs an accepting run on his token. But then, we can repeat this argument and use induction to show that Eve's token will take an accepting transition infinitely many times. Thus, Eve wins the \stepsimnoun game between $(\Hc'_i,q)$ and $(\Hc_i,q)$.

    Hence, $\Lc(\Hc'_i,q) \supseteq \Lc(\Hc_i,q)$, and since $\Hc'_i$ is obtained from $\Hc_i$ by deleting transitions, $\Lc(\Hc'_i,q)=\Lc(\Hc_i,q)$. This implies that $\Hc'_i$ is also SD, since $\Hc_i$ is SD. Furthermore, since $(\Hc'_i,q)$ \stepsim $(\Hc_i,q)$, $(\Hc'_i,q)$ also \stepsim itself (recall that $\Hc'_i$ is a subautotmaton of $\Hc_i$), implying Eve wins the Joker game from every state in $\Hc'_i$. It follows that $\Hc'_i$ is a good history-deterministic B\"uchi automaton.
\end{proof}

Note that the second step in the procedure ensures that the quantity $\opt_i$ is non-increasing across rejecting transitions. This can be used to prove easily that $\Hc_{i+1}$ is good and language-equivalent to $\Hc'_i$.

\begin{lemma}\label{lemma:iteration-correctness-2}
    $\Hc_{i+1}$ is good and $\Lc(\Hc_{i+1},q)=\Lc(\Hc'_i,q)$ for each state $q$ in $\Hc'_i$.
\end{lemma}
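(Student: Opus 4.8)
The plan is to mirror the structure of the proof of \cref{lemma:iteration-correctness-1}, but the situation here is much easier because step~3 only turns some rejecting transitions into accepting ones, without deleting anything. So $\Hc_{i+1}$ and $\Hc'_i$ have the same underlying transition graph (ignoring priorities), and in particular every run in one is a run in the other. Consequently the two automata are \stepsim-equivalent once we check the accepting-transition timing, and the only real content is a language-equivalence argument plus the observation that turning certain rejecting transitions accepting cannot create new accepted words.

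First I would argue $\Lc(\Hc_{i+1},q) \supseteq \Lc(\Hc'_i,q)$: this is immediate, since $\Hc_{i+1}$ has all the transitions of $\Hc'_i$ and only additional accepting transitions, so any accepting run in $\Hc'_i$ is still an accepting run in $\Hc_{i+1}$. For the reverse inclusion $\Lc(\Hc_{i+1},q) \subseteq \Lc(\Hc'_i,q)$, I would use the key monotonicity fact flagged just before the lemma statement: after step~2, the quantity $\opt_i$ is non-increasing across every rejecting transition of $\Hc'_i$ (a rejecting transition $q \xrightarrow{a:1} q'$ with $\opt_i(q) < \opt_i(q')$ would have been removed). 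Now take an accepting run $\rho$ of a word $w$ in $\Hc_{i+1}$ from $q$. Along $\rho$, infinitely many transitions are accepting in $\Hc_{i+1}$; each such transition is either already accepting in $\Hc'_i$, or is a rejecting transition $r \xrightarrow{a:1} r'$ of $\Hc'_i$ that was made accepting in step~3, which means $\opt_i(r) > \opt_i(r')$. Since $\opt_i$ is non-increasing across all rejecting transitions of $\Hc'_i$ and strictly decreases across each transition of the second kind, if $\rho$ used only finitely many genuinely-accepting transitions of $\Hc'_i$ but infinitely many of the second kind, then $\opt_i$ would strictly decrease infinitely often along a suffix of $\rho$ while never increasing — impossible since $\opt_i$ is a natural number bounded below by $0$. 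Hence $\rho$ takes infinitely many transitions that are already accepting in $\Hc'_i$, so $\rho$ is accepting in $\Hc'_i$ and $w \in \Lc(\Hc'_i,q)$.

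Having established $\Lc(\Hc_{i+1},q) = \Lc(\Hc'_i,q)$ for every state $q$, semantic determinism of $\Hc_{i+1}$ follows directly: $\Hc'_i$ is SD by \cref{lemma:iteration-correctness-1}, and a transition $p \xrightarrow{a} p'$ that is language-preserving in $\Hc'_i$ remains language-preserving in $\Hc_{i+1}$ because the per-state languages are unchanged. It remains to show Eve wins the Joker game from every state of $\Hc_{i+1}$. For this I would show $(\Hc_{i+1},q)$ \stepsim $(\Hc'_i,q)$ for every state $q$, which by \cref{cor:joker-game} together with $(\Hc_{i+1},q)$ being language-equivalent to the HD automaton $(\Hc'_i,q)$ and \cref{corollary:simulationofhd-implies-hd} gives that $\Hc_{i+1}$ is HD, and then apply the same argument as in \cref{lemma:iteration-correctness-1} (a subautomaton that \stepsim an automaton it is equal to, up to priorities, wins its own Joker games from all states). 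The \stepsimnoun strategy for Eve is trivial: she copies Adam's transition on the identical underlying graph; whenever Adam's token takes a transition accepting in $\Hc'_i$, Eve's identical transition is accepting in $\Hc_{i+1}$ as well (step~3 only adds accepting transitions), so Eve sees an accepting transition no later than Adam, and her run is accepting whenever Adam's is.

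The main obstacle, such as it is, is purely bookkeeping: getting the monotonicity/strict-decrease argument in the second paragraph phrased correctly so that it rules out the pathological run that relies forever on the newly-promoted transitions, and making sure the direction of the $\opt_i$ inequalities in steps~2 and~3 lines up. There is no genuine game-theoretic difficulty here — unlike \cref{lemma:iteration-correctness-1}, this lemma does not need the rank-optimal strategy machinery, only the definitional consequences of the two editing steps.
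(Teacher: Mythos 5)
Your proof is correct and takes essentially the same route as the paper's: the core is the identical observation that $\opt_i$ is non-increasing across the surviving rejecting transitions of $\Hc'_i$ and strictly decreases across the newly promoted ones, so an accepting run of $\Hc_{i+1}$ must already contain infinitely many $\Hc'_i$-accepting transitions, and the converse inclusion is trivial. The only (cosmetic) divergence is in re-establishing goodness: the paper observes that run-acceptance coincides in the two automata and transfers Eve's Joker-game strategy directly, whereas you route through \stepsimnoun and \cref{corollary:simulationofhd-implies-hd}; both work.
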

\begin{proof}
    It suffices to show that a run is accepting in $\Hc_{i+1}$ if and only if the same run is accepting in $\Hc'_i$, since then one can easily construct a winning strategy for Eve in the Joker game on $\Hc_{i+1}$ from her winning strategy in the Joker game on $\Hc'_i$.

    One direction is clear: if a run in $\Hc'_i$ is accepting, then the same run must be accepting in $\Hc_{i+1}$, since we only make certain rejecting transitions accepting.

    For the other direction, suppose that $\rho$ is an accepting run in $\Hc_{i+1}$, and consider the same run $\rho$ in $\Hc'_{i}$. Note that the $\opt_i$ values are non-increasing across rejecting transitions, and for every transition $\delta=q \xrightarrow{a:c}q'$ such that $\delta$ is an accepting transition in $\Hc_{i+1}$, the $\opt_i$ value strictly decreases. Thus, if $\rho$ has infinitely many accepting transitions in $\Hc_{i+1}$, then  $\rho$ must also have infinitely many accepting transitions in $\Hc'_i$ since $\opt_i$ is bounded by the size of the arena $G_1(\Hc'_i)$. 
\end{proof}

From \cref{lemma:iteration-correctness-1,lemma:iteration-correctness-2}, we conclude that our iterative procedure maintains the history-determinism, language and the property of the automaton being good through its steps.

\end{document}